\documentclass[12pt,a4paper]{article}
\usepackage[utf8]{inputenc}
\usepackage{color}  
\definecolor{coolblack}{rgb}{0.0, 0.18, 0.39}
\definecolor{midnightblue}{rgb}{0.1, 0.1, 0.44}
\definecolor{prussianblue}{rgb}{0.0, 0.19, 0.33}
\definecolor{oxfordblue}{rgb}{0.0, 0.13, 0.28}
\usepackage{hyperref,float}
\hypersetup{
    colorlinks=true,
    linkcolor=blue,
    filecolor=magenta,      
    urlcolor=blue,
    citecolor =blue,
}

\usepackage{amsmath,amsthm}
\usepackage{amsfonts}
\usepackage{amssymb}
\usepackage{graphicx}
\usepackage{lmodern}
\usepackage[round]{natbib}
\usepackage[onehalfspacing]{setspace}
\usepackage{booktabs}
\usepackage{wrapfig}
\usepackage{textcomp}
\usepackage{pdflscape}
\usepackage[small,bf]{caption}
\usepackage{breqn}
%\linespread{1.25}
\usepackage{lscape}
\usepackage{pgfplots}
\pgfplotsset{compat=newest}
\usetikzlibrary{plotmarks}
\usepackage{grffile}
\usepgfplotslibrary{external}
\tikzexternalize[prefix=tikz_M/]
\tikzsetnextfilename{\jobname}
\tikzexternalize[shell escape=-enable-write18]
\usepackage{adjustbox}
\usepackage{tikz-network}
\usepackage{tikz}
\usetikzlibrary{snakes}

\newcommand{\bX}{\ensuremath{\bm{X}}}
\newcommand{\IZ}{\ensuremath{\mathbb{Z}}}

\newcommand{\IN}{\ensuremath{\mathbb{N}}}

\newcommand{\bPhi}{\ensuremath{\boldsymbol\Phi}}
\newcommand{\bPsi}{\ensuremath{\boldsymbol\Psi}}
\newcommand{\btheta}{\ensuremath{\boldsymbol\theta}}
\newcommand{\bepsilon}{\ensuremath{\boldsymbol\epsilon}}
\newcommand{\bbeta}{\ensuremath{\boldsymbol\eta}}
\newcommand{\bSigma}{\ensuremath{\boldsymbol\Sigma}}
\newcommand{\bOmega}{\ensuremath{\boldsymbol\Omega}}
\newcommand{\bP}{\ensuremath{\boldsymbol P}}

\newcommand{\mC}{\ensuremath{\mathcal{C}}}

\newtheorem{prop}{Proposition}
\newenvironment{prop*}
  {\ex}
  {\endex}

\newtheorem{remark}{Remark}
\newenvironment{remark*}
  {\ex}
  {\endex}

\newenvironment{definition*}
  {\ex}
  {\endex}
  
\usepackage{cleveref}
\usepackage{multirow}
\usepackage{subcaption}
\usepackage[stable]{footmisc}
\usepackage{bm}
\usepackage{lineno}

\usepackage[left=1in,right=1in,top=1.5in,bottom=1.5in]{geometry}
\author{Jozef \textsc{Barun\'{i}k}$^{\ddag}$ and Michael \textsc{Ellington}$^{\dag}$}
\title{\textbf{Dynamic Network Risk}}
\date{\today}

\begin{document}
\begin{titlepage}
\maketitle 
%\begin{center}
%\begin{small}
%\textbf{Preliminary Draft}\\
%\ddag \textit{Institute of Economic Studies, Charles University,}\\ 
%\textit{Opletalova 26, 110 00, }\\
%\textit{and The Czech Academy of Sciences, IITA} \\
%\textit{Pod Vodárenskou Věží 4, 182 08, Prague, Czech Republic} \\
%\href{mailto: barunik@fsv.cuni.cz}{barunik@fsv.cuni.cz}\\
%\vspace{1em}
%\dag \textit{University of Liverpool Management School,}\\
%\textit{Chatham Building, Liverpool, L69 7ZH, UK}\\
%\href{mailto: m.ellington@liverpool.ac.uk}{m.ellington@liverpool.ac.uk}\\
%\end{small}
%\end{center}

\begin{abstract}
\noindent This paper examines the pricing of short-term and long-term dynamic network risk in the cross-section of stock returns. Stocks with high sensitivities to dynamic network risk earn lower returns. We rationalize our finding with economic theory that allows the stochastic discount factor to load on network risk through the precautionary savings channel. A one-standard deviation increase in long-term (short-term) network risk loadings associate with a 7.66\% (6.71\%) drop in annualized expected returns.
%We show how to track such dynamic network connections on a daily basis using time-varying parameter vector auto-regressions.
\end{abstract}

\noindent JEL Classifications: G10, G12, C58

\noindent Keywords: Network risk, Firm volatility, Cross section of stock returns

\begin{small}
\begin{singlespace}
\noindent\rule{10cm}{0.4pt}\\
\ddag \textit{Institute of Economic Studies, Charles University, Opletalova 26, 110 00, and The Czech Academy of Sciences, IITA, Pod Vodárenskou Věží 4, 182 08, Prague, Czech Republic}.\\
\href{mailto: barunik@fsv.cuni.cz}{barunik@fsv.cuni.cz}\\
\dag \textit{University of Liverpool Management School, Chatham Building, Liverpool, L69 7ZH, UK}.
\href{mailto: m.ellington@liverpool.ac.uk}{m.ellington@liverpool.ac.uk}\\

\noindent \textbf{Acknowledgements}\\
We thank Lubo\v{s} P\'{a}stor, Lucio Sarno, Oliver Linton, Wolfgang H\"{a}rdle, Luk\'{a}\v{s} V\'{a}cha, Ryland Thomas, Chris Florackis, Costas Milas, Charlie Cai, and Marcin Michalski for invaluable discussions and comments. We are grateful to Lubo\v{s} Hanus for help in furnishing and converting estimation codes. We acknowledge insightful comments from many seminar presentations, including: the Danish National Bank; the 2019 STAT of ML conference; the 13${\text{th}}$ International Conference on Computational and Financial Econometrics; and many more. Jozef Barun\'{i}k gratefully acknowledges support from the Czech Science Foundation under the 19-28231X (EXPRO) project. % For estimation of dynamic horizon specific networks, we provide packages \texttt{DynamicNets.jl} in \textsf{JULIA} and \texttt{DynamicNets} in \textsf{MATLAB}. The packages are available at \url{https://github.com/barunik/DynamicNets.jl} and \url{https://github.com/mte00/DynamicNets}. % Estimation of S\&P500 sectoral constituents were all done on a desktop PC with 64GB DDR4 2400MHz RAM and an Intel Core i7 Six Core Processor i7-8700k (3.7GHz) 12MB Cache.
\end{singlespace}
\end{small}
\end{titlepage}

\noindent \textbf{Disclosure Statement:} Jozef Barun\'{i}k and Michael Ellington have nothing to disclose.

\section{Introduction}\label{intro}

Idiosyncratic stock return volatilities vary over time and exhibit strong co-movement \citep{herskovic2018firm}. The synchronous behavior of return volatilities provides the foundation for a common component explaining cross-sectional variation in stock returns. Economic theory and empirical evidence show that stocks loading positively on volatility risk earn lower returns in equilibrium. This is because they act as inter-temporal hedging devices against future uncertainty \citep{cremers2015aggregate,herskovic2016common}. However, \cite{acemoglu2012network} present evidence that network structures forming from idiosyncratic shock propagation from volatilities can determine systematic fluctuations. The implication here is that network structures forming around volatility connections create a source of risk for investors.

In this paper, we examine the causal nature of dynamic horizon specific shock propagation that determines network structures among idiosyncratic return volatilities; and the premium investors demand for bearing such exposures. Our analysis provides robust empirical evidence that stocks with higher sensitivities to dynamic horizon specific network risk earn lower returns. We rationalize our findings with economic theory, that allows the stochastic discount factor to load on volatility risk through the precautionary savings channel \citep{ang2006cross,campbell2018intertemporal}. In doing so, we propose a mechanism that describes this type of investor behaviour by allowing the stochastic discount factor to load on horizon specific network connections among idiosyncratic return volatilities.

Viewing stocks as nodes within a network allows us to characterize the nature of shocks that determine such risk exposures. From a \cite{lucas1977understanding} perspective, investors are able to diversify away firm level shocks when the network is completely unconnected or completely and equally connected. Therefore, idiosyncratic fluctuations average out and result in negligible aggregate effects. Hence, an investor holding a large well-diversified portfolio has minimal exposure to such shocks, and thus only demands a premium for non-diversifiable systematic sources of risk. However, relaxing the assumption of symmetrical connections breaks down the ability of investors to diversify away firm-level shocks, permits the snow-balling of firm-level shocks \citep{elliott2014financial}, and induces differences in the strength of directional connections.

\begin{figure}[!t]
  \begin{subfigure}{8cm}
    \centering\includegraphics[width=5cm]{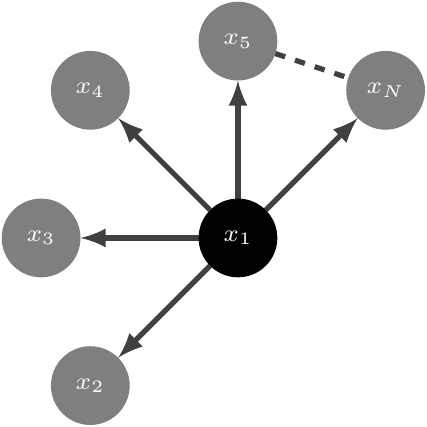}
    \subcaption{Asymmetric network}
  \end{subfigure}  
  \begin{subfigure}{8cm}
    \centering\includegraphics[width=5cm]{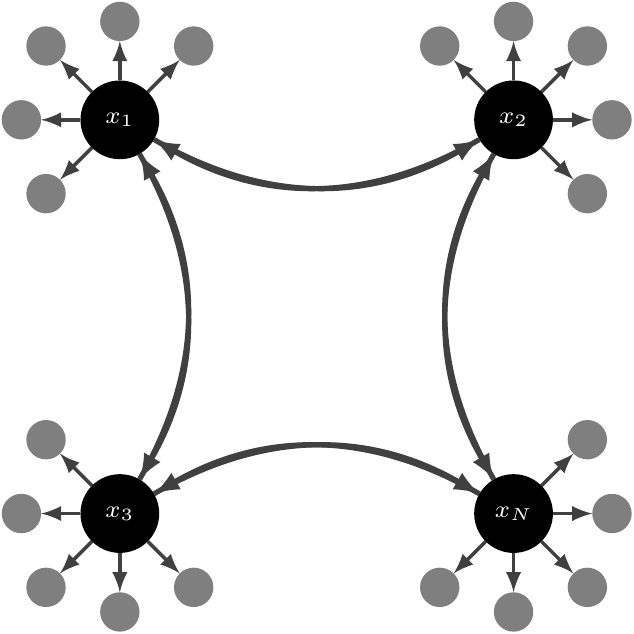}
    \subcaption{Weighted asymmetric network}
  \end{subfigure}
  \caption{\textbf{Emergence of Network Risk:} The sub-figure \small{(a) presents a network containing $x_1,\ldots,x_N$ stocks represented as nodes. Here node $x_1$ influences all other $x_2,\ldots,x_N$ nodes. Nodes $x_2,\ldots,x_N$ are not connected to each other and also do not influence node $x_1$. The sub-figure (b) presents an $N$-star network where nodes $x_1,\ldots,x_N$ are connected to a set of nodes exclusively, and also to one another. Arrows denote the direction of the connection and the density of the line denotes the strength of the connections.}}
  \label{Asymmetric_networks}
\end{figure}

To illustrate the above, Figure \ref{Asymmetric_networks} (a) shows a star network topology where stock 1 is central and its shocks propagate to the remaining N assets on the market. In this case, even if an investor holds a large number of stocks, they still face exposure to stock 1. This is a so called directed, or asymmetric network in which the nodes are connected with the same strength. Figure \ref{Asymmetric_networks} (b) is an N star network topology where the links among the N central firms have weighted strengths. In this case, an investor faces exposure to shock propagation and reception of the N central firms resulting in a complex risk structure from the network. 

While the centrality of these nodes is a characteristic for the determination of network risk premia, these networks are missing two key ingredients. First, these networks are static; and therefore do not capture evolving relationships among firms. Second, while distinguishing between short-term and long-term risks is a prominent issue in finance \citep{bansal2004risks}\footnote{Note that the assumption that sources of systematic risk are constant across horizons faces rare questioning within the literature; although studies are emerging. \cite{dew2016asset} formalize the notion of horizon specific risk pricing focusing on investor preferences. This allows the authors to quantify the meaning of long-run in the context of recursive preferences which they go on to show is significantly priced in equity markets. \cite{bandi2017business} take a different approach and extract the business cycle component of consumption to obtain factor loadings and show that they have similar pricing ability to those stemming from consumption growth aggregated over a two-year horizon. \cite{bandi2018measuring} argue that frequency is a source of systematic risk by decomposing betas into horizon specific components. They show a simple horizon specific CAPM may outperform popular multi-factor models whilst possessing economic interpretability.}, these networks convey no information on the persistence of such exposures. A network topology hence needs to reflect dynamic linkages due to shocks that create risks at different investment horizons.

Figure \ref{multilayer_networks} introduces a dynamic network with multiple layers that represent distinct connections among stocks. We interpret these links as dynamic horizon specific network connections where the layers distinguish short-term and long-term connections. Without loss of generality, the curves connecting the N central assets allow for the possibility of connections across layers. We represent time dynamics of network connections at periods $t=1,\ldots,T$ on a time line. For simplicity we assume that the N central assets are the same across each layer but will have connections of different strengths over time and investment horizon. Distinguishing between short-term and long-term layers that evolve dynamically allows one to relax the assumption that exposures to network risk are constant over horizons and time. 

\begin{figure}[!hp]
  \begin{subfigure}{7.5cm}
    \centering\includegraphics[width=8cm]{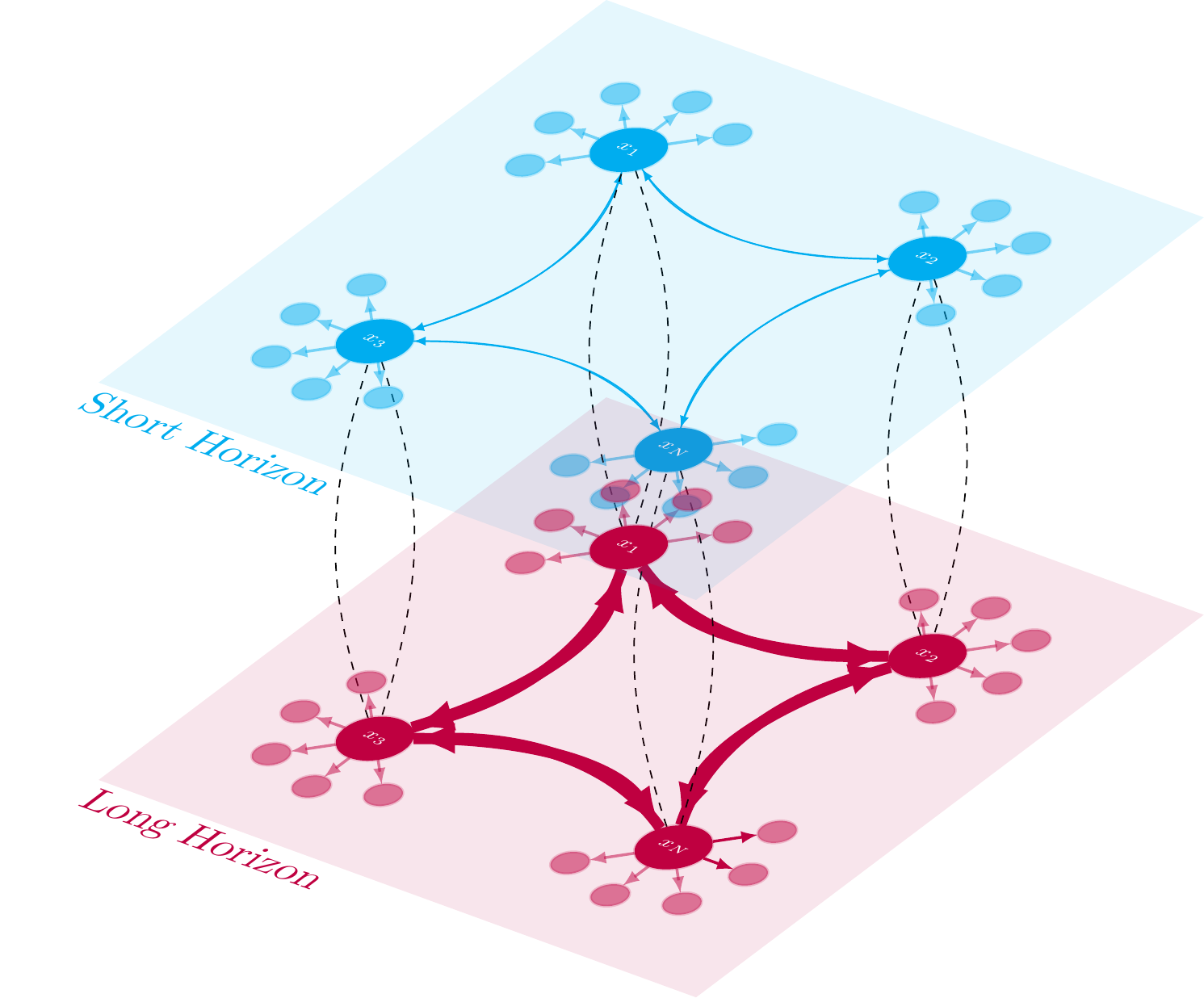}
  \end{subfigure}  
  \begin{subfigure}{9cm}
    \centering\includegraphics[width=8cm]{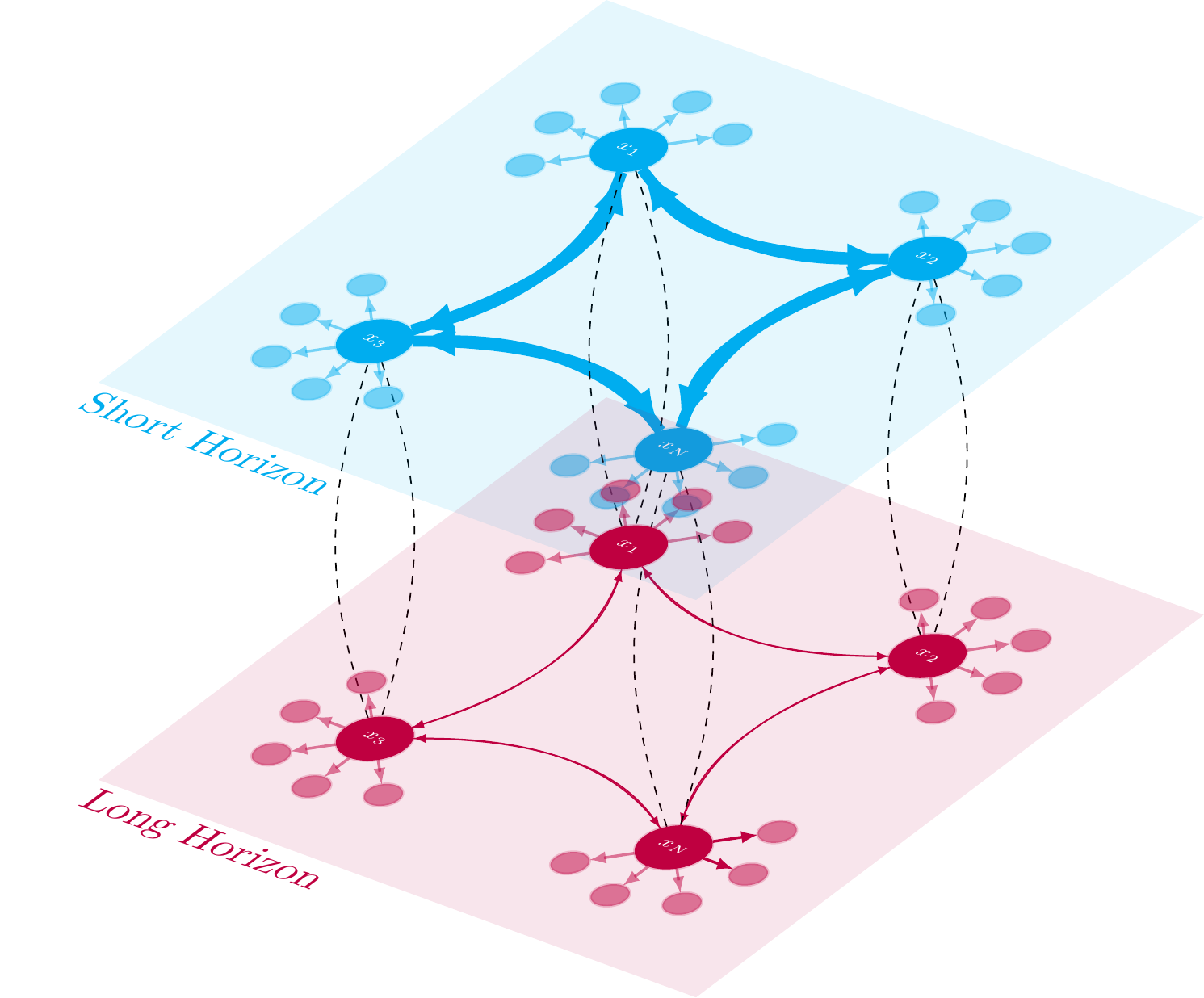}
  \end{subfigure}
  \begin{tikzpicture}[snake=zigzag, line before snake = 4mm, line after snake = 4mm]
    % draw horizontal line   
    \draw (0,0) -- (1,0);
    \draw[snake] (1,0) -- (4,0);
    \draw (4,0) -- (4,0);
    \draw[snake] (4,0) -- (7,0);
    \draw (7,0) -- (9,0);
    \draw[snake] (9,0) -- (12,0);
    \draw (12,0) -- (12,0);
    \draw[snake] (12,0) -- (14,0);
    \draw (14,0) -- (15,0);
    % draw vertical lines
    \foreach \x in {0,1,4,12,14,15}
      \draw (\x cm,3pt) -- (\x cm,-3pt);
    % draw nodes
    \draw (0,0) node[below=3pt] {$ 1 $} node[above=3pt] {days};
    \draw (1,0) node[below=3pt] {$ 2 $} ;
    \draw (4,0) node[below=3pt] {$ t_j $} node[above=3pt] {$\uparrow$};
    \draw (12,0) node[below=3pt] {$ t_k $} node[above=3pt] {$\uparrow$};
    \draw (14,0) node[below=3pt] {$ T-1 $} ;
    \draw (15,0) node[below=3pt] {$ T $} ;
  \end{tikzpicture}
  \caption{\textbf{A Dynamic Horizon Specific Network:} \small{This figure presents a multi-layer N-star network with snapshot of two time and two horizon specific network layers. Arrows denote directions of connections and the line density denotes strength. The curves from the $N$ central nodes allow for connections to spillover across layers. We interpret each layer as a network specific to a horizon of interest; for example short-term depicted by light blue color and long-term depicted by light red color.}}\label{multilayer_networks}
\end{figure}

Therefore, Figure \ref{multilayer_networks} describes an investment opportunity with dynamic horizon specific directional network risk exposures. Investors should require larger compensation, in absolute value, for risk in the long-run when investing during period $t_j$ in comparison to short-run investments, and vice versa when investing during period $t_k$. Understanding the dynamics of node centrality and shock propagation in such network is crucial for an investor because these dynamic connections create a source of systematic risk from network structures.

Our main objective is to investigate the pricing implications of dynamic horizon specific directional network risk for the cross-section of stock returns. In doing so, we decompose dynamic network connections among the daily realized volatilities of S\&P500 constituent stock returns into horizon specific components; namely a short-term and long-term component. We track dynamic horizon specific network connections among all S\&P500 constituents using measures that stem from a large scale time-varying parameter vector autoregressive approximating model. Our approach to tracking network connections permits one to examine risk at any horizon of interest and consequently hedge against it. We show that dynamic horizon specific, and aggregate, directional network risk constitute sources of risk that price stocks in the cross-section. To the best of our knowledge, we are the first to propose and explore the asset pricing implications of truly time-varying directional network risk; whilst also disentangling between short-term and long-term exposures. An important aspect of our work is the ability to characterize risks in large scale networks composed from hundreds of stocks.

To proxy dynamic horizon specific network risk, we construct tradable factors using directional connections among asset return volatilities. \cite{branger2018equilibrium} provide theoretical motivations for directional network connections in prices and the channels through which they influence equilibrium returns. We also control for the relative importance of nodes to the network, which is similar to concentration defined in \cite{herskovic2018networks}; a characteristic found to be important for explaining the cross-section of stock returns. Inherently our proxies for dynamic horizon specific network risk incorporate dynamic network-wide properties.

Our main result is that stocks with high sensitivities to dynamic network risk earn lower returns. These sources of risk are statistically significant and economically meaningful. Fama-MacBeth regressions indicate a one-standard-deviation increase across stocks in short-term and long-term directional network risk loadings implies a 6.76\% and 7.66\% drop in expected annual returns respectively. We also conduct portfolio sorts that assign stocks into quintile portfolios according to their short-term (long-term) directional network risk betas. The annual \cite{fama2015five} five-factor alphas of value weighted hedge portfolios of short-term and long-term directional network risk are -3.43\% and -4.05\% respectively. 

Our analysis is robust to using portfolio sorts and Fama-MacBeth regressions that include a battery of additional factors over and above the \cite{fama2015five} five-factor model. Namely, we include: changes in the VIX; momentum; conditional skewness; conditional kurtosis; and market illiquidity. We also control for the variance risk premium, tail risk, idiosyncratic volatility and idiosyncratic skewness. Even with these additional factors, dynamic horizon specific network risk maintains statistical and economic significance; whilst also showing it is a distinct source of systematic risk that is not attributable to volatility risk. We also provide an alternative specification for directional network risk and obtain statistically and economically significant results consistent with our main results. Moreover, we show that one is able to predict future directional network risk, both horizon specific and aggregate, by conducting portfolio sorts on past directional network risk betas. We obtain statistically significant raw and risk-adjusted returns across all specifications. This means that investors are able to implement strategies that hedge against these sources of risk in real time. 

%Finally, we propose a mechanism describing this type of investor's behavior theoretically. For this we borrow from the literature that allows the stochastic discount factor to load on volatility risk (e.g. \cite{ang2006downside,campbell2018intertemporal}). The economic mechanism in such models suggests that investors seeking to hedge against changes in investment opportunities will find assets that have positive covariance with market volatility attractive, and thus require lower expected returns. In tracking horizon specific network connections among return volatilities, we outline a model where the stochastic discount factor loads on mutually and self exciting jumps in horizon specific volatility processes. We combine the above rationale with the outline of an endowment economy tracking horizon specific connections in asset return volatilities to conjecture negative risk prices.

The remainder of this paper proceeds as follows. Section \ref{lit} outlines an economy where the stochastic discount factor loads on dynamic horizon specific directional network connections among asset return volatilities and discusses related literature. In Section \ref{empirical_measures}, we describe our methodology in tracking dynamic horizon specific network connections. In Section \ref{data_fact_constr}, we discuss data and how we track dynamic horizon specific network risk of all S\&P500 constituents. Our empirical results and extensions are in Sections \ref{main_results} and \ref{extensions}. Finally, conclusion are given in Section \ref{conclusion}.

\section{Theoretical Framework}\label{lit}

%The use of networks in finance and macroeconomics mainly concentrates on providing a description of financial contagion (e.g. \cite{diebold2014}, \cite{elliott2014financial}, and \cite{glasserman2016contagion}); documenting stylized facts (e.g. \cite{carvalho2013great}, \cite{billio2012econometric}); and building microfoundations for business cycles (e.g. \cite{acemoglu2012network}, \cite{acemoglu2017microeconomic}, \cite{carvalho2019large}).

Networks in asset pricing is an emerging literature. \cite{buraschi2012dynamic} establish a link between network structure and the cross-section of stock returns using dividends. Meanwhile, \cite{ahern2013network} focuses on a production-based framework using input-output data to show that industries in more central positions within the network earn higher returns. \cite{herskovic2018networks} builds on this and deduces an equilibrium asset pricing model where sectors connect through an input-output network. Defining concentration as the degree to which the network is dominated by a few sectors, and sparsity as the distribution of sectoral connections, the author shows that factors stemming from these network-wide properties price stocks in the cross-section. \cite{branger2018equilibrium} characterize a model where equilibrium expected returns depend on directional network connections of negative price jumps. Their model implies that the overall effect of network connections depends on whether shock propagation and reception dominates a hedging channel\footnote{If the hedging channel dominates the risk premium is negative. This is because less connected assets during periods of financial turbulence are relatively more favorable to connected assets within the network.}. \cite{herskovic2018firm} assess firm volatility in a network model where shocks to customers influence their suppliers. Their model allows for asymmetries between strength of customer-supplier linkages and produces distributions of firm volatility, size and customer concentration consistent with the data.

Our work also relates well with studies regarding volatility risk and time-varying volatility (e.g. \cite{ang2006cross}, and \cite{ang2006downside}). These studies show that investors seek to use assets with positive covariance with market volatility as hedging devices and will thus accept lower returns. \cite{campbell2018intertemporal} allow for stochastic volatility in an ICAPM framework. They show that asset returns that positively covary with a variable forecasting future market volatility have low expected returns in equilibrium. The economic mechanism is that investors reduce current consumption to increase precautionary savings in light of uncertainty around market returns. \cite{cremers2015aggregate} distinguish between jump and volatility risk showing that they bear different risk premia; both of which are negative. \cite{herskovic2016common} extract a common factor from firm level volatility and show that the highest quintile portfolios sorted on the common idiosyncratic volatility factor earn 5.4\% lower returns per year than the lowest quintile portfolios.

We contribute to this literature by focusing on the directional network connections among asset return volatilities. We combine theoretical justifications of negative risk prices for aggregate volatility with an economy where the stochastic discount loads on horizon specific directional network connections to motivate negative risk prices.

\subsection{An economy with horizon specific volatility connections}\label{economy}

Our two-tree endowment economy generates horizon specific connections in the volatility of asset returns and the volatility of consumption growth. Specifically, we extend on \cite{cochrane2007two} and \cite{lucas1978asset}. Similar to \cite{bansal2004risks} and \cite{backus2011disasters}, we model asset returns as claims on risk factors in the consumption process. The representative investor has the following general utility over the stream of consumption
\begin{align}
U_t = \mathbb{E}_t \int_{0}^{\infty} e^{- \delta \tau} u(c_{t+\tau}) d \tau . 
\end{align}
Each endowment dividend stream follows a geometric Brownian motion with stochastic volatility; whose respective drift and diffusion parameters differ.
\begin{align}
\label{eq:consumption01}\frac{dD_\iota}{D_\iota} &= \mu_\iota dt + \sqrt{v_{\iota,t}} d Z_\iota, \quad \iota =\{S,L\}\\
\label{eq:consumption02} dv_{\iota,t}       &= \kappa_{v_{\iota}}\left(\bar{v}_{\iota} - v_{\iota,t} \right)dt + \sigma_{v_{\iota}}\sqrt{v_{\iota,t}}d Z_{v_{\iota}} + \sum^{N}_{j=1} K_{\iota,j}d\mathcal{N}_{\iota,j,t}
\end{align}

where $dZ_\iota$ are standard Brownian motions that are possibly correlated, $\mathbb{C}\text{orr}\Big(dZ_{S},dZ_{L}\Big)=\rho_{S,L}dt$. We interpret the endowment trees as short-term ($S$) and long-term ($L$) risk factors within the economy\footnote{One may conjecture that the $L$ tree corresponds to the long-term component of consumption. This generates a persistent dividend stream and bears long-term risk in the economy. The $S$ tree generates a less persistent dividend stream bearing short-term risks in the economy; $\mu_L > \mu_{S} > 0$.}. The diffusion of each tree follows a mean-reverting square root process \citep{heston1993closed} with $N$ self and mutually exciting jumps $\mathcal{N}_{\iota,j,t}$, $\iota=\{S,L\}$ \citep{ait2015modeling}. $\bar{v}_{\iota}$ is the unconditional horizon specific component of conditional variance and $\kappa_{v_{\iota}}$ captures the speed of mean reversion\footnote{Again one may conjecture the speed of mean reversion is slower for the volatility process associated to the long-term part of consumption.}.  

The $N$ self and mutually exciting jumps introduce horizon specific network connections in the volatility of consumption growth. The jump intensities are stochastic and follow Hawkes processes with mean reverting dynamics of the form:
\begin{eqnarray}\label{eq:intensities_dynamics}
d\ell_{\iota,j,t} = \alpha_{\iota,j}\left(\ell_{\iota,j,\infty}-\ell_{\iota,j,t} \right)dt + \sum^{N}_{k=1}b_{\iota,j,k}d\mathcal{N}_{\iota,k,t}
\end{eqnarray}
This means that a horizon specific jumps in asset $k$ causes an increase in the horizon specific jump intensity of asset $j$ such that $\ell_{\iota,j}$ jumps by $b_{\iota,j,k}$ before decaying back towards the level $\ell_{\iota,j,\infty}$ at speed $\alpha_{\iota,j}$. If the increase in $\ell_{\iota,j}$ leads to a jump in asset $j$, and there is a non-zero $b_{\iota,n,j}$, the horizon specific shock passes on to asset $n$. In this manner the shocks can be propagated throughout the entire network, which also permits the initial shock to reach asset $k$ itself.

We have $N$ risky assets in the economy that whose dynamics are geometric Brownian motions with stochastic volatility. Formally the $k$-th asset has the following dynamics:
\begin{align}
\frac{dp_{k,t}}{p_{k,t}} &= \mu_{p_{k}}dt + \sqrt{v_{p_{k,S}}}dW_{S} + \sqrt{v_{p_{k,L}}}dW_{L}\\
d v_{p_{k},\iota} &= \kappa_{p_{k},\iota}\left(\bar{v}_{p_{k},\iota} - v_{p_{k},\iota}\right)dt + \sigma_{p_{k},\iota}\sqrt{v_{p_{k},\iota}}dW_{\xi} + Q_{\iota}d\mathcal{N}_{\iota,k,t}, \quad \iota=\{S,L\},\: \xi=\{S_1,L_1\}
\end{align}
which is similar to \cite{christoffersen2009shape} where the variance of the stock return is the sum of the two stochastic volatility components. Note that $\mathbb{C}\text{orr}\Big(dW_{S},dW_{S_1}\Big)=\rho_{W_{S},W_{S_1}}dt$ and $\mathbb{C}\text{orr}\Big(dW_{L},dW_{L_1}\Big)=\rho_{W_{L},W_{L_1}}dt, \: \mathbb{C}\text{orr}\Big(dW_{S},dW_{L_1}\Big)=\mathbb{C}\text{orr}\Big(dW_{L},dW_{S_1}\Big)=0 $. We add discontinuities to each variance process that also enter horizon specific stochastic volatility processes for each consumption dividend stream. $Q_{\iota}$, are the jump sizes of compound Poisson processes $Q_{L}>Q_{S}>0$ and $\mathcal{N}_{S,k,t},\:\mathcal{N}_{L,k,t}$ are mutually independent Poisson processes. Note their intensity parameters are as in (\ref{eq:intensities_dynamics}).

The economy also contains a risk-less bond that follows an ordinary differential equation $\frac{dB}{B} = r_{f} dt$ with $r_{f}$ being the instantaneous risk-free rate. Applying It\^ o’s lemma to consumption $c_t=s_{t}D_{S}+(1-s_{t})D_{L}$ we obtain consumption dynamics
\begin{align}\label{eq:consumption_dynamics0}
\frac{dc_t}{c_t} &= [s_{t}\mu_S  + (1 - s_{t})\mu_L ] dt + s_{t}\sqrt{v_{S,t}} dZ_S + (1 - s_{t})\sqrt{v_{L,t}}dZ_L.
\end{align}

Using the results in Appendix \ref{app:economy} providing first two moments of consumption growth and further details of the economy, the following proposition formalizes how network connections influence stochastic discount factor.

\begin{prop}[Stochastic Discount Factor Innovations in a Network Economy]
  \label{prop:2}
Consider an endowment economy with a short run and long run cash flows for consumption as in Equations \ref{eq:consumption01} -- \ref{eq:consumption02} and consumption dynamics follow Equation \ref{eq:consumption_dynamics0}. Then, the time $t$ expected innovation in the stochastic discount factor, $\mathbb{E}_{t}\left[\frac{d \Lambda_{t}}{\Lambda_{t}} \right]$ are
  \begin{multline}\label{eq:SDF}
\mathbb{E}_{t}\left[\frac{d \Lambda_{t}}{\Lambda_{t}} \right] = -\delta dt - \gamma_{t} \Big[s_{t}\mu_{S} + (1-s_{t})\mu_{L}\Big]dt \\ + \frac{1}{2} \eta_{t} \Big[s_{t}^{2}v_{S,t} + (1-s_{t})^{2}v_{L,t}+s_{t}(1-s_{t})\sqrt{v_{S,t}}\sqrt{v_{L,t}}\rho_{S,L}\Big]dt
\end{multline} 
with $\delta>0$ being impatience, $\gamma_{t}\equiv -\frac{u''(c_{t})c_{t}}{u'(c_{t})} >0$ is the coefficient of risk aversion, and $\eta_{t}\equiv \frac{u'''(c_{t})(c_{t})^{2}}{u'(c_{t})} >0$ is precautionary saving.  
\end{prop}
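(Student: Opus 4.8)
The plan is to identify the stochastic discount factor with the discounted marginal-utility process $\Lambda_t = e^{-\delta t}u'(c_t)$, which is the pricing kernel implied by the representative investor's first-order conditions for the time-additive utility $U_t$, and then to expand it with It\^o's lemma. First I would record the consumption-growth dynamics from Equation \ref{eq:consumption_dynamics0}, writing $\frac{dc_t}{c_t} = \mu_{c,t}\,dt + s_t\sqrt{v_{S,t}}\,dZ_S + (1-s_t)\sqrt{v_{L,t}}\,dZ_L$ with drift $\mu_{c,t} = s_t\mu_S + (1-s_t)\mu_L$, and note that its first two conditional moments are the ones furnished by Appendix \ref{app:economy}: the instantaneous mean $\mathbb{E}_t[dc_t/c_t] = \mu_{c,t}\,dt$ and the instantaneous second moment obtained from the quadratic variation of the two correlated Brownian drivers.

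Next I would apply It\^o's lemma to $u'(c_t)$, giving $du'(c_t) = u''(c_t)\,dc_t + \tfrac12 u'''(c_t)(dc_t)^2$, and divide through by $u'(c_t)$. Rewriting $dc_t = c_t(dc_t/c_t)$ and $(dc_t)^2 = c_t^2(dc_t/c_t)^2$, the two coefficients become exactly the elasticities in the statement, namely $\frac{u''(c_t)c_t}{u'(c_t)} = -\gamma_t$ and $\frac{u'''(c_t)c_t^2}{u'(c_t)} = \eta_t$. Since $d\Lambda_t/\Lambda_t = -\delta\,dt + du'(c_t)/u'(c_t)$, this already produces the $-\delta\,dt$ impatience term and expresses the remaining drift purely in terms of $\gamma_t$ and $\eta_t$.

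Taking the time-$t$ conditional expectation then annihilates the martingale (Brownian) increments, leaving $\mathbb{E}_t[dc_t/c_t]$ multiplying $-\gamma_t$ and $\mathbb{E}_t[(dc_t/c_t)^2]$ multiplying $\tfrac12\eta_t$. Substituting the first moment reproduces the risk-aversion term $-\gamma_t[s_t\mu_S + (1-s_t)\mu_L]\,dt$, and substituting the second moment reproduces the precautionary-savings term. The step needing the most care is the quadratic variation: using $dZ_S\,dZ_L = \rho_{S,L}\,dt$, the square $\big(s_t\sqrt{v_{S,t}}\,dZ_S + (1-s_t)\sqrt{v_{L,t}}\,dZ_L\big)^2$ yields $s_t^2 v_{S,t} + (1-s_t)^2 v_{L,t}$ together with a cross term in $s_t(1-s_t)\sqrt{v_{S,t}}\sqrt{v_{L,t}}\rho_{S,L}$, and one must track the combinatorial factor on this cross term so that it matches the coefficient displayed in \eqref{eq:SDF}.

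Finally I would check the sign structure: with $\delta>0$, $\gamma_t>0$ and $\eta_t>0$ assumed, the decomposition separates cleanly into impatience, a risk-aversion loading on expected consumption growth, and a precautionary-savings loading on consumption-growth variance, the last being the channel through which the horizon-specific (hence network-driven) volatilities $v_{S,t}$ and $v_{L,t}$ enter the kernel. I expect no genuine obstacle beyond this bookkeeping; the only substantive modeling input is the identification $\Lambda_t = e^{-\delta t}u'(c_t)$ and the appeal to the appendix for the consumption-growth moments, after which the claim is essentially a one-line It\^o expansion followed by taking expectations.
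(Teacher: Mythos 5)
Your proposal is correct and takes essentially the same route as the paper: the paper's proof of Proposition \ref{prop:2} just invokes the relation $\mathbb{E}_{t}\left[\frac{d\Lambda_{t}}{\Lambda_{t}}\right] = -r_{f}\,dt$ together with Proposition \ref{prop:1}, whose own proof is exactly your computation (It\^o expansion of $\Lambda_{t}=e^{-\delta t}u'(c_{t})$, conditional expectations annihilating the Brownian terms, substitution of the consumption-growth moments), so you have merely inlined the two steps. The one point you implicitly flag is real but lies in the paper, not in your argument: a literal quadratic-variation computation with $dZ_{S}\,dZ_{L}=\rho_{S,L}\,dt$ gives the cross term with coefficient $2s_{t}(1-s_{t})\sqrt{v_{S,t}}\sqrt{v_{L,t}}\rho_{S,L}$, whereas the displayed formulas in both propositions carry coefficient one.
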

\begin{proof}
  See Appendix \ref{app:proofs1}.
\end{proof}

We can see that the stochastic discount factor depends on horizon specific network connections through the respective stochastic volatility processes. Note that these expressions hold in general for any utility function\footnote{Note also that it is not the purpose of this work to provide a specific, either closed-form or numerical, solution to this model. However, one may do so by specifying, for example, recursive preferences and adopt the methods in \cite{eraker2008equilibrium} to obtain a numerical solution. Setting our economy up in a similar framework to the example in \cite{eraker2008equilibrium} would yield risk premia for: horizon specific components of the volatility of consumption growth, and dynamic horizon specific directional network risk. We leave the solution and calibration of this model for future research.}. We use the above to outline an economy that permits network connections to enter the expression for the stochastic discount factor. In turn, this implies that exposure to horizon specific network risk will be priced in equilibrium. 

As we outline earlier, the literature concerning volatility risk shows that stocks loading positively on volatility act as inter-temporal hedging devices against future uncertainty. In this economy, network connections form among stock return volatilities which means that investors will demand compensation for dynamic horizon specific network risk arising through volatility connections of individual assets; over and above the premia for being exposed to horizon specific consumption growth volatility risk. 

It is clear from (\ref{eq:SDF}) that the stochastic discount factor loads on horizon specific network connections through precautionary savings. Therefore, stocks loading positively on horizon specific network risk earn lower returns because they act as inter-temporal hedging devices against adverse changes in investment opportunities. Importantly in this economy, we show that this is a source of risk that is different to volatility risk. Empirically, Sections \ref{main_results} and \ref{extensions} confirm that dynamic horizon specific network risk constitutes a source of risk different to measures of volatility risk.

\section{Measurement of Dynamic Networks}\label{empirical_measures}

Our objective here is to understand how shocks with different persistence propagate across a network of assets. Knowing how a shock to stock $j$ transmits to stock $k$ will define weighted and directed network at a given period of time and at a given horizon. In turn, one may use this information to build aggregate system-wide measures of network connectedness as well as disaggregate measures stemming from individual connections that will characterize various types of risks. In contrast to the network literature in Finance \citep{elliott2014financial,glasserman2016contagion,herskovic2018networks}, we focus on network dynamics over time and across horizons. In doing so, we estimate a large dimensional system and our methods are readily available for studying big data.

Algebraically, the adjacency matrix captures all of the information about the network, and any sensible measure must be related to it. A typical metric used by the wide network literature that provides the user with information about the relative importance or influence of nodes is network centrality. For our purposes, we want to measure node degrees that capture the number of links to other nodes. The distribution shape of the node degrees is a network-wide property that closely relates to network behavior. As for the connectedness of the network, the location of the degree distribution is key, and hence, the mean of the degree distribution emerges as a benchmark measure of overall network connectedness.

A network working with causal linkages can be characterized well through variance decompositions from a vector auto-regression approximation model \citep{diebold2014}. Variance decompositions provide useful information about how much of the future variance of variable $j$ is due to shocks in variable $k$. Thus, the variance decomposition matrix defines the network adjacency matrix and is intimately related to network node degrees, mean degrees, and connectedness measures \citep{diebold2014}. Currently studies examine, almost exclusively, static networks mimicking time dynamics with estimation from an approximating window\footnote{\cite{geraci2018measuring} estimate multiple pairwise time varying parameter models in an attempt to characterize a network of financial stocks using autoregressive coefficients.}. We employ a locally stationary TVP VAR that allows us to estimate the adjacency matrix for a network, or market, of stocks at each point in time. We  decompose this into horizon specific components that allow us to disentangle short-term and long-term network connections\footnote{Dimensionality is a problem using large scale TVP VARs. \cite{chan2020reducing} propose methods to estimate TVP VARs with 15 variables, and  \cite{kapetanios2019large} and \cite{petrova2019quasi} define large TVP VARs as 78 and 87 variables respectively.}.

Dynamic horizon specific networks that we define by time-varying variance decompositions are more sophisticated than classical network structures. In a typical network, the adjacency matrix contains zero and one entries, depending on the node being linked or not, respectively. In the above notion, one interprets variance decompositions as weighted links showing the strength of the connections. In addition, the links are directed, meaning that the $j$ to $k$ link is not necessarily the same as the $k$ to $j$ link, and hence, the adjacency matrix is not symmetric. Therefore we can define weighted, directed versions of network connectedness statistics readily that include degrees, degree distributions, distances and diameters. 

These measures are key to our analysis since risk stems directly from asymmetries within the network. Our adjacency matrix is dynamic and our measures are dynamic horizon-specific in-degrees and out-degrees. We also define the mean degree of the network which characterizes total connectedness of the network at a given time and horizon. This is a useful summary statistic for the network that studies use to analyse stylized facts of financial data \citep{billio2012econometric,carvalho2013great}.

To construct dynamic network measures, we interpret the TVP VAR approximating model as a dynamic network that provides information about how much of future uncertainty of variable $j$ is due to shocks in variable $k$. A natural way to describe horizon specific dynamics (i.e. short-term, and long-term) of the network connections is to consider the spectral representation of the approximating model. \cite{stiassny1996spectral} introduces the notion of a spectral representation in a relatively restrictive setting, while \cite{barunik2018measuring} define horizon specific connectedness measures for a simple VAR that we further generalize to a locally stationary processes.

Consider a doubly indexed $N$-variate time series $(\bX_{t,T})_{1\le t \le T,T \in \IN}$ with components $\bX_{t,T}=(\bX_{t,T}^1,\ldots,\bX_{t,T}^N)^{\top}$ that describe all assets in an economy. Here $t$ refers to a discrete time index and $T$ is an additional index indicating the sharpness of the local approximation of the time series $(\bX_{t,T})_{1\le t \le T,T \in \IN}$ by a stationary one. We rescale time using the conditions in \cite{dahlhaus1996kullback} such that the continuous parameter $u \approx t/T$ is a local approximation of the weakly stationary time-series. 

We assume assets to follow a locally stationary TVP-VAR of lag order $p$ as
\begin{equation}\label{eq:VAR1}
\bX_{t,T}=\bPhi_{1}(t/T)\bX_{t-1,T}+\ldots+\bPhi_{p}(t/T)\bX_{t-p,T} + \bepsilon_{t,T},
\end{equation}
where $\bepsilon_{t,T}=\bSigma^{-1/2}(t/T)\bbeta_{t,T}$ with $\bbeta_{t,T}\sim NID(0,\boldsymbol{I}_M)$ and $\bPhi(t/T)=(\bPhi_{1}(t/T),\ldots,\bPhi_{p}(t/T))^{\top}$ are the time varying autoregressive coefficients. In a neighborhood of a fixed time point $u_0=t_0/T$, we approximate the process $\bX_{t,T}$ by a stationary process $\widetilde{\bX}_t(u_0)$ as
\begin{equation}\label{eq:VAR2}
\widetilde{\bX}_t(u_0)=\bPhi_1(u_0)\widetilde{\bX}_{t-1}(u_0)+\ldots+\bPhi_p(u_0)\widetilde{\bX}_{t-p}(u_0) + \bepsilon_t,
\end{equation}
with $t\in \IZ$ and under suitable regularity conditions which justifies the notation ``locally stationary process''. Importantly, the process has time varying Vector Moving Average VMA($\infty$) representation \citep{dahlhaus2009empirical,roueff2016prediction}
\begin{equation}
\bX_{t,T} = \sum_{h=-\infty}^{\infty} \bPsi_{t,T}(h)\bepsilon_{t-h}
\end{equation}
where parameter vector $\bPsi_{t,T}(h) \approx\bPsi(t/T,h)$ is a bounded stochastic process\footnote{Since $\bPsi_{t,T}(h)$ contains an infinite number of lags, we approximate the the moving average coefficients at $h=1,\ldots,H$ horizons.}. The connectedness measures rely on variance decompositions, which are transformations of the information in $\bPsi_{t,T}(h)$ that permit the measurement of the contribution of shocks to the system. Since a shock to a variable in the model does not necessarily appear alone, an identification scheme is crucial in calculating variance decompositions. We adapt the generalized identification scheme in \cite{pesaran1998generalized} to locally stationary processes. 

We transform the local impulse responses in the system to local impulse transfer functions using Fourier transforms $\bPsi(u)e^{-i\omega} = \sum_h e^{-i\omega h} \bPsi(u,h)$\footnote{Note that $i=\sqrt{-1}$.}. The following proposition establishes a dynamic representation of the variance decomposition of shocks from asset $j$ to asset $k$. It is central to the development of the dynamic network measures since it constitutes a dynamic horizon specific adjacency matrix.

\begin{prop}[Dynamic Horizon Specific Networks]\footnote{Note to notation: $[\boldsymbol A]_{j,k}$ denotes the $j$th row and $k$th column of matrix $\boldsymbol A$ denoted in bold. $[\boldsymbol A]_{j,\cdot}$ denotes the full $j$th row; this is similar for the columns. A $\sum A$, where $A$ is a matrix that denotes the sum of all elements of the matrix $A$.} \label{prop:3}
Suppose $\mathbf{\bX_{t,T}}$ is a weakly locally stationary process with $\sigma_{kk}^{-1} \displaystyle\sum_{h = 0}^{\infty}\left|\Big[\bPsi(u,h) \bSigma(u) \Big]_{j,k}\right| < +\infty, \forall j,k.$
	Then the $(j,k)$th element of the dynamic horizon specific adjacency matrix \btheta(u,d) at a rescaled time $u=t_0/T$ and horizon $d = (a,b): a,b \in (-\pi, \pi), a < b$ is defined as
	\begin{equation} \label{eq:dynamicadjmatrix}
	\Big[ \btheta(u,d) \Big]_{j,k} = \frac{\sigma_{kk}^{-1} \displaystyle \int_{a}^{b} \left| \bigg[ \bPsi(u) e^{-i \omega} \bSigma(u) \bigg]_{j,k} \right|^2 d \omega}{ \displaystyle \int_{-\pi}^{\pi} \Bigg[ \Big\{\bPsi(u) e^{-i \omega} \Big\}\bSigma(u) \Big\{ \bPsi(u) e^{+i \omega } \Big\}^{\top}  \Bigg]_{j,j} d \omega}\end{equation}
\end{prop}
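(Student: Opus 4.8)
The plan is to fix the rescaled time point $u=t_0/T$, pass to the stationary local approximation $\widetilde{\bX}_t(u)$ so that classical spectral tools apply on that frequency slice, apply the generalized variance decomposition of \cite{pesaran1998generalized} pointwise in $u$, and then convert the resulting time-domain decomposition into a horizon-specific frequency-domain one by means of a Parseval-type identity. The $(2\pi)^{-1}$ factors that appear in the spectral representation will cancel in the final ratio, leaving exactly \eqref{eq:dynamicadjmatrix}.

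First I would introduce the local transfer function $\bPsi(u)e^{-i\omega}=\sum_{h\ge 0}\bPsi(u,h)e^{-i\omega h}$ obtained from the causal part of the local VMA$(\infty)$ representation, and write the local spectral density matrix as $\bm{S}(u,\omega)=(2\pi)^{-1}\{\bPsi(u)e^{-i\omega}\}\bSigma(u)\{\bPsi(u)e^{+i\omega}\}^{\top}$, where $\bSigma(u)$ is the local innovation covariance and $\sigma_{kk}$ its $k$th diagonal entry. The summability hypothesis $\sigma_{kk}^{-1}\sum_{h}\big|[\bPsi(u,h)\bSigma(u)]_{j,k}\big|<\infty$ is what guarantees that this Fourier series converges absolutely and uniformly in $\omega$, so that $\bm{S}(u,\cdot)$ is continuous and bounded and every subsequent interchange of summation and integration is legitimate.

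Next I would transplant the generalized identification to the local setting. The generalized impulse response of variable $j$ to a one-standard-deviation shock in variable $k$ at lag $h$ is $\sigma_{kk}^{-1/2}[\bPsi(u,h)\bSigma(u)]_{j,k}$, so the (unnormalized) contribution of $k$'s shocks to the forecast-error variance of $j$ is $\sigma_{kk}^{-1}\sum_{h\ge 0}\big([\bPsi(u,h)\bSigma(u)]_{j,k}\big)^2$, while the total forecast-error variance of $j$ is $\sum_{h\ge 0}[\bPsi(u,h)\bSigma(u)\bPsi(u,h)^{\top}]_{j,j}$. Applying Parseval's theorem to the scalar sequence $h\mapsto[\bPsi(u,h)\bSigma(u)]_{j,k}$ in the numerator, and the cross-version of Parseval entrywise to the matrix products in the denominator, replaces these sums by $(2\pi)^{-1}\int_{-\pi}^{\pi}\big|[\bPsi(u)e^{-i\omega}\bSigma(u)]_{j,k}\big|^2\,d\omega$ and $(2\pi)^{-1}\int_{-\pi}^{\pi}[\{\bPsi(u)e^{-i\omega}\}\bSigma(u)\{\bPsi(u)e^{+i\omega}\}^{\top}]_{j,j}\,d\omega$ respectively; for real coefficients $\bPsi(u)e^{+i\omega}=\overline{\bPsi(u)e^{-i\omega}}$, so the denominator integrand is the $(j,j)$ entry of a Hermitian positive-semidefinite matrix and hence real and nonnegative, as required.

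Finally, the horizon-specific decomposition follows by localizing the shock in frequency: restricting the numerator integral to the band $d=(a,b)$ isolates the portion of $k$'s contribution transmitted at those frequencies, while retaining the denominator over the full band $(-\pi,\pi)$ expresses it as a share of the total variance of $j$; the common $(2\pi)^{-1}$ factors cancel, yielding \eqref{eq:dynamicadjmatrix}. I expect the main obstacle to be the nonstationarity of $\bX_{t,T}$: I must ensure the spectral representation, the Parseval identity, and the summation–integration interchange hold uniformly over $u$ rather than only at an isolated slice, and that the $O(T^{-1})$ gap between $\bPsi_{t,T}(h)$ and its local limit $\bPsi(u,h)$ is controlled. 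This is precisely where the absolute-summability assumption does the work, since it bounds the local transfer function independently of $\omega$ and, combined with the regularity conditions of the locally stationary framework (\cite{dahlhaus1996kullback,dahlhaus2009empirical,roueff2016prediction}), lets me dominate the tail of the series and justify the passage to the local spectral density; verifying these uniform bounds is the technical heart of the argument.
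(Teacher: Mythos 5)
Your proposal is correct and follows essentially the same route as the paper's proof: both fix the rescaled time $u$, form the local Pesaran--Shin generalized forecast-error variance decomposition in the time domain, convert numerator and denominator to frequency-domain integrals via Parseval's identity (the paper implements this through the orthogonality relation $\tfrac{1}{2\pi}\int_{-\pi}^{\pi}e^{i\omega(r-v)}\,d\omega$ and the factorization $\bSigma(u)=\bP(u)\bP^{\top}(u)$ for the denominator), cancel the common $(2\pi)^{-1}$ factors, and obtain the horizon-specific statement by restricting the numerator integral to the band $(a,b)$. The only difference is one of detail, not of method: the paper re-derives the generalized decomposition from the conditional forecast-error covariance under normality, whereas you quote it directly.
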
 
\begin{proof}
	See Appendix \ref{app:proofs2}.
\end{proof}

It is important to note that $\Big[ \btheta(u,d) \Big]_{j,k}$ is a natural dissagregation of traditional variance decompositions to a time-varying and horizon specific adjacency matrix. This is because the portion of the local error variance of the $j$th variable at horizon $d$ due to shocks in the $k$th variable is scaled by the total variance of the $j$th variable. Note that the quantity in proposition (\ref{prop:3}) is the squared modulus of weighted complex numbers, thus producing a real quantity.

This relationship is an identity which means the integral is a linear operator, summing over disjoint intervals covering the entire range $(-\pi, \pi)$ recovers the time domain counterpart of the local variance decomposition. The following remark formalizes.

\begin{remark}[Aggregation of Dynamic Network]
	\label{rem:recomposition}
	Denote by $d_s$ an interval on the real line from the set of intervals $D$ that form a partition of the interval $(-\pi, \pi)$, such that $\cap_{d_s \in D} d_s = \emptyset, $ and $\cup_{d_s \in D} d_s = (-\pi, \pi)$. Due to the linearity of integral and the construction of $d_s$, we have 
	$$
	\Big[ \btheta(u,\infty) \Big]_{j,k} = \sum_{d_s \in D} \Big[ \btheta(u,d_s) \Big]_{j,k}.$$
\end{remark}

Remark (\ref{rem:recomposition}) is important as it establishes the aggregation of horizon specific network connectedness measures to its time domain counterpart. Thus, short-term ($d=S$) and long-term ($d=L$) time-varying network characteristics always sum up to an aggregate time domain counterpart; this makes them directly interpretable. As the rows of the dynamic adjacency matrix do not necessarily sum to one, we normalize the element in each by the corresponding row sum $\Big[ \widetilde \btheta(u,d) \Big]_{j,k} = \Big[ \btheta(u,d) \Big]_{j,k}\Bigg/ \sum_{k=1}^N\Big[  \btheta(u,\infty) \Big]_{j,k}$. 

It is important to note that proposition \ref{prop:3} defines the dynamic horizon specific network completely. Naturally, our adjacency matrix is filled with weighted links showing strengths of the connections. The links are directional, meaning that the $j$ to $k$ link is not necessarily the same as the $k$ to $j$ link. Therefore the adjacency matrix is asymmetric which creates undiversifiable risk. Using our notion above, the adjacency matrix evolves dynamically in time and is horizon specific.

To characterize network risk, we define total dynamic network connectedness measures at a given horizon as the ratio of the off-diagonal elements to the sum of the entire matrix

\begin{equation}
\mC(u,d) = 100\times\displaystyle \sum_{\substack{j,k=1\\ j\ne k}}^N \Big[\widetilde \btheta(u,d)\Big]_{j,k}\Bigg/\displaystyle \sum_{j,k=1}^N \Big[\widetilde \btheta(u,\infty)\Big]_{j,k}
\end{equation}
This measures the contribution of forecast error variance attributable to all shocks in the system, minus the contribution of own shocks. Similar to the aggregate network connectedness measure that infers the system-wide risk, we define measures that will reveal when an individual asset is a transmitter or a receiver of shocks. We use these measures to proxy dynamic horizon specific network risk. The dynamic directional connectedness that measures how much of each asset's $j$ variance is due to shocks in other assets $j\ne k$ in the economy is given by
\begin{equation}
\mC_{j\leftarrow\bullet}(u,d) = 100\times\displaystyle \sum_{\substack{k=1\\ k\ne j}}^N \Big[\widetilde \btheta(u,d)\Big]_{j,k}\Bigg/\displaystyle \sum_{j,k=1}^N \Big[\widetilde \btheta(u,\infty)\Big]_{j,k},
\end{equation}
defining the so-called \textsc{from} connectedness. Note one can precisely interpret this quantity as dynamic from-degrees (or out-degrees in the network literature) that associates with the nodes of the weighted directed network we represent by the dynamic variance decomposition matrix. Likewise, the contribution of asset $j$ to variances in other variables is 
\begin{equation}
\mC_{j\rightarrow \bullet}(u,d) = 100\times\displaystyle \sum_{\substack{k=1\\ k\ne j}}^N \Big[\widetilde \btheta(u,d)\Big]_{k,j}\Bigg/\displaystyle \sum_{j,j=1}^N \Big[\widetilde \btheta(u,\infty)\Big]_{k,j}
\end{equation}
and is the so-called \textsc{to} connectedness. Again, one precisely interprets this as dynamic to-degrees (or in-degrees in the network literature) that associates with the nodes of the weighted directed network that we represent by the variance decompositions matrix. These two measures show how other assets contribute to the risk of asset $j$, and how asset $j$ contributes to the riskiness of others, respectively, in a time-varying fashion at horizon $d$. Notably one can simply add these measures across all horizons to obtain aggregate time-varying measures.

Finally, to obtain the time-varying coefficient estimates, and the time-varying covariance matrices at a fixed time point $u=t_{0}/T$, $\bPhi_{1}(u),...,\bPhi_{p}(u)$ $\bSigma(u)$, we estimate the approximating model in (\ref{eq:VAR2}) using Quasi-Bayesian Local-Likelihood (QBLL) methods \citep{petrova2019quasi}.  

Specifically, we use a kernel weighting function that provides larger weights to observations that surround the period whose coefficient and covariance matrices are of interest. Using conjugate priors, the (quasi) posterior distribution of the parameters of the model are available analytically. This alleviates the need to use a Markov Chain Monte Carlo (MCMC) simulation algorithm and permits the use of parallel computing. Note also that in using (quasi) Bayesian estimation  methods, we obtain a distribution of parameters that we use to construct network measures that provide confidence bands for inference. We detail the estimation algorithm in Appendix \ref{app:estimate}. % In addition, we provide a computationally efficient packages \texttt{DynamicNets.jl} and \texttt{DynamicNets} that allows one to obtain our measures on data the researcher desires in \textsf{JULIA} and \textsf{MATLAB} available from \url{https://github.com/barunik/DynamicNets.jl} and \url{https://github.com/mte00/DynamicNets}.\footnote{Unlike traditional TVP VARs time-variation evolves in a non-parametric manner thus making no assumption on the laws of motion within the model. Typically, the model of \cite{primiceri2005time}, and many extensions, assume parameters evolve as random walks or autoregressive processes.}

We provide some details on estimation here. First, the variance decompositions of the forecast errors from the VMA($\infty$) representation require a truncation of the infinite horizon with a $H$ horizon approximation. As $H\rightarrow \infty$ the error disappears \citep{lutkepohl2005new}. We note here that $H$ serves as an approximating factor and has no interpretation in the time-domain. We obtain horizon specific measures using Fourier transforms and set our truncation horizon $H$=100; results are qualitatively similar for $H\in \{50,100,200\}$. Second in computing our measures, we diagonalize the covariance matrix because our objective is to focus on the causal affects of network connections. The $\bPsi(u,h)$ matrix embeds the causal nature of network linkages, and the covariance matrix $\bSigma(u)$ contains contemporaneous covariances within the off-diagonal elements. In diagonalizing the covariance matrix we remove the contemporaneous effects and focus solely on causation.

\section{Network Dynamics of S\&P500 Constituents}\label{data_fact_constr}

%\footnote{Institutions connect directly through counterparty risk, contractual obligations or other general business relationships. High-frequency analysis of such networks requires a high-frequency balance sheet and other generally unavailable information. In contrast, market-based measures are available in high frequencies that reflect the decisions of many agents assessing risks and therefore contain information regarding network linkages. The pure market-based approach we use in contrast to other network techniques allows us to monitor  network risk on a daily basis at the cost of minimal assumptions.}

Our objective is to explore the pricing implications of dynamic horizon specific directional network risk for stock returns. The economy we outline in Section \ref{economy} proposes a stochastic discount factor that loads on network connections among stock return volatilities. Therefore we build a dynamic network among stock return volatilities. In doing so, we use high frequency data for all stocks listed on the S\&P500 from July 5, 2005 to August 31, 2018 from Tick Data. Specifically, we compute daily returns $\text{R}_{t} = \sum^{\text{D}}_{i=1}(p_{t,i}-p_{t,i-1})$, and realized volatility $\text{RV}_{t} = \sqrt{\sum^{\text{D}}_{i=1}(p_{t,i}-p_{t,i-1})^2}$ for each stock on day $t$, where $\text{D}$ denotes total number of intraday observations. The $i$ subscripts denote intraday observations which we observe at 5 minutes intervals; and $p_{t,i}$ is the intraday price of the asset\footnote{To obtain our network measures, we estimate the TVP VAR model as in (\ref{eq:VAR1}) on $N$=496 stocks with $p$=2 lags on our $T$=3278 days of data. We estimate our horizon specific dynamic network measures on a 48-core server. For every $t\in\{1,2,\dots, T\}$, we generate 500 simulations of the (quasi) posterior distribution which results in a total estimation time of 10 days.}. We define short-term as the 1-day to 1-week horizon and long-term as horizons greater than 1-week. 

Figure \ref{RETS} plots our horizon specific dynamic total connectedness measures from July 5, 2005 to August 31, 2018. Overall, there are substantial differences in the levels of horizon specific connectedness throughout our estimation sample. In general, long-term connections are muted during periods of economic/financial tranquillity. However, it is clear that long term connections surge during periods of economic recession or key stock market events. For example, long-term total connectedness begins to rise in 2006 and continues to do throughout the 2007-2009 recession. Adding to this, we can see long-term connectedness rising during 2010-2012. This may be attributable to fears of contagion of the European sovereign debt crisis, the 2010 flash crash, and when the S\&P500 entered a bear market in 2011; albeit short-lived. We can also see during mid to late 2015 short-term and long-term connectedness rising which is consistent with the stock market sell-off starting in August 2015; this can also be linked to fears of contagion of the Chinese stock market crash in late 2015.

\begin{figure}[!hp]
\centering
\scalebox{0.9}{\includegraphics{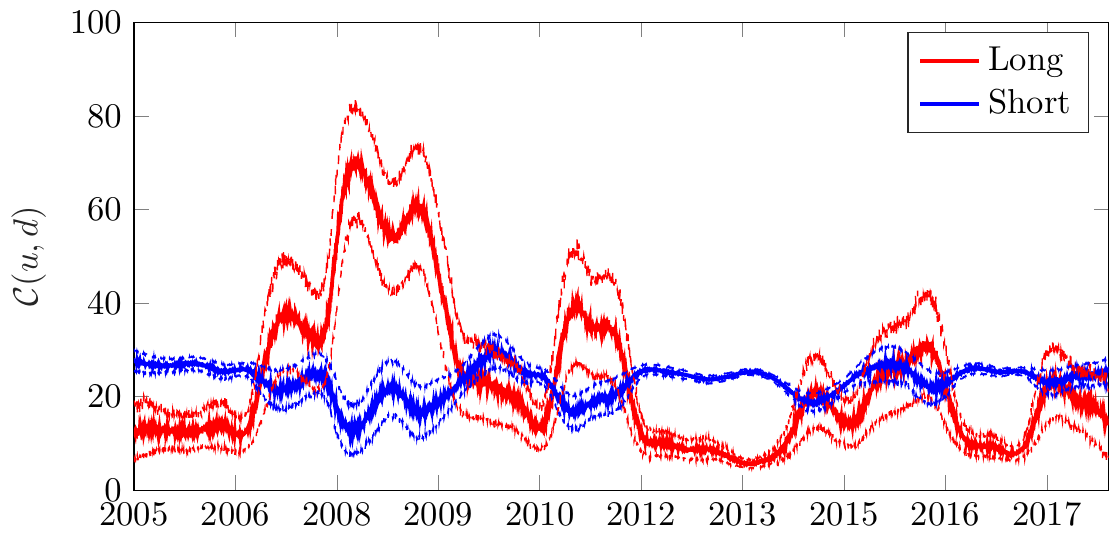}}
		\caption{\textbf{Horizon specific dynamic total network connectedness for S\&P500 constituents} \\ \small{This figure plots the quasi posterior median and 1-standard deviation percentiles of horizon specific dynamic total network connectedness, $\mC(u,d),\: d \in \{\text{S},\text{L}\} $ from July 5, 2005 to August 31, 2018. S refers to the short-term which we define as 1 day to 1 week; and L refers to long-term which we define as horizons $>$ 1 week. The spectrum with which the horizons stem from are linked to the frequency with which the data is observed.}}
      \label{RETS}
\end{figure}

For illustrative purposes, Figure \ref{NDC_EG}, reports net-directional connectedness of: Apple, Netflix, and Google. Net-directional connectedness is the difference between \textsc{to} connectedness and \textsc{from} connectedness $\mathcal{C}(u,d)_{j \rightarrow \bullet}-\mathcal{C}(u,d)_{j\leftarrow \bullet} $. Net-directional connectedness tells us, on average, how a stock on the market contributes to the network. When $\mathcal{C}(u,d)_{j \rightarrow \bullet}-\mathcal{C}(u,d)_{j\leftarrow \bullet}>0\: (<0)$ this tells us that the stock transmits (receives) shocks to (from) the network at horizon band $d$ and time period $u$ after accounting for how receptive (transmissive) the stock is to the network. Overall, we see differences among horizon specific net-directional connectedness measures for these stocks. Apple, is a long-term transmitter during the build up to the 2008 recession before becoming a receiver of shocks during 2008-2009. Netflix is a short-term transmitter to the network during the bear market of 2011 and fears of contagion of the European sovereign debt crisis. Finally, Google is a strong short-term receiver at the beginning of our sample before becoming a long-term receiver during the 2008 recession.

Overall, it is clear that our measures provide useful descriptions on the evolution of connections among financial assets over horizons and time. We can see that total connections, particularly over the long-term, rise substantially during key events. Intuitively, this links well with the observation that financial asset return volatilities exhibit clustering \citep{harris1991stock}. Our measures show that connections among assets intensify during these periods and decompose this into horizon specific bands\footnote{This methodology also permits one to look at disaggregated measures of directional network connections. These are also shown to vary across assets, horizons and time. Notably one could go further and examine pairwise connections between each asset on the market.}. This links well with \cite{diebold2014} and the subsequent literature emerging that uses their measures to track systemic risk.  

\begin{figure}[!hp]
\centering
\scalebox{0.9}{\includegraphics{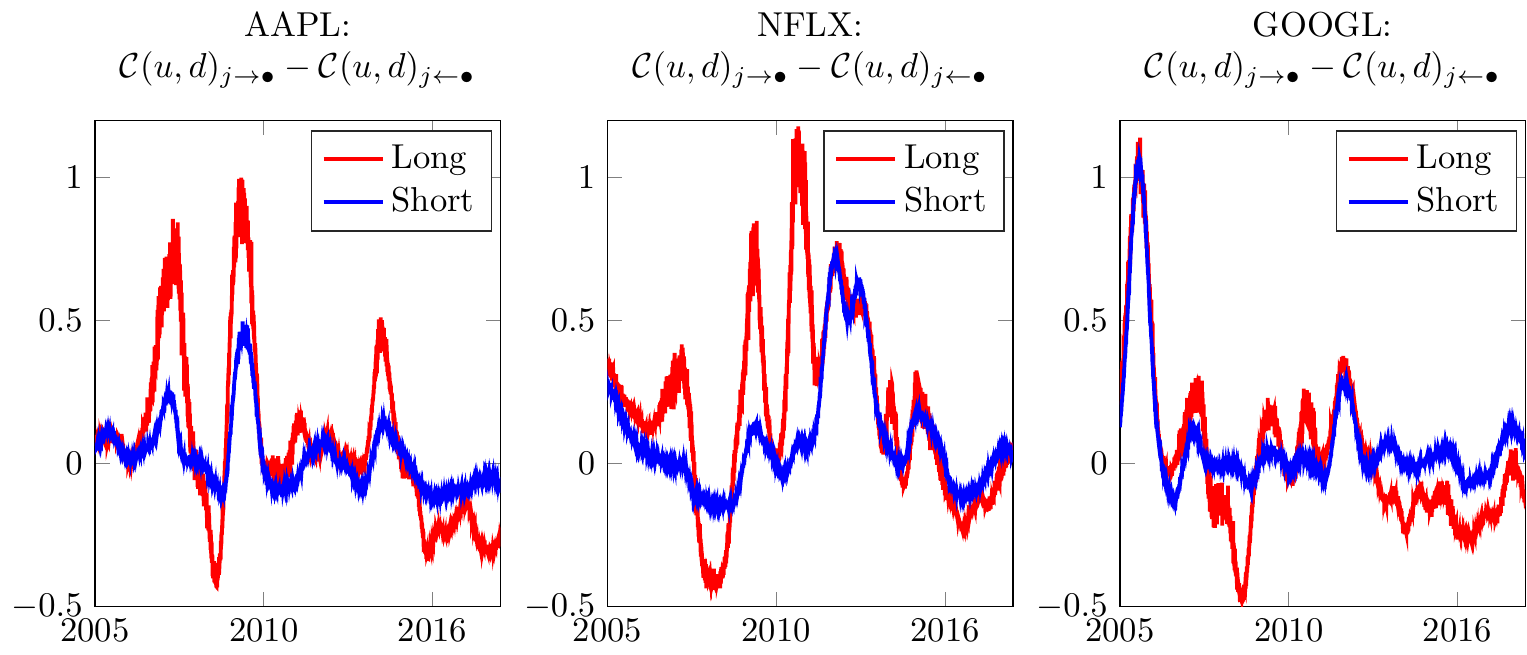}}
		\caption{\textbf{Horizon specific dynamic net-directional connectedness of Apple, Netflix and Google} \\ \small{The top panel of this figure plots the quasi posterior median of the horizon specific dynamic net-directional connectedness of: Apple, AAPL; Netflix, NFLX; and Google, GOOGL. Net-directional connectedness is computed as the difference between \textsc{to} connectedness and \textsc{from} connectedness as $\mathcal{C}(u,d)_{j \rightarrow \bullet}-\mathcal{C}(u,d)_{j\leftarrow \bullet} $  from July 5, 2005 to August 31, 2018. S refers to the short-term which we define as 1 day to 1 week; and L refers to long-term which we define as horizons $>$ 1 week. The spectrum with which the horizons stem from are linked to the frequency with which the data is observed.}}
      \label{NDC_EG}
\end{figure}

\subsection{How to measure dynamic horizon specific network risk?}
\cite{fama1993common} and the vast ensuing literature use data on stock characteristics to create factors that approximate the stochastic discount factor. This search for potential factor candidates results in hundreds to choose from \citep{harvey2016and,mclean2016does}. Typically, the approach conducts portfolio sorts and creates long-short portfolios to approximate risk factors. Despite the success of potential asset pricing factors, many bear no theoretical grounding\footnote{There is an emerging literature proposing econometric techniques to select only those factors that are meaningful (see e.g. \cite{feng2020taming}).}. 

The economy we outline in Section \ref{lit} provides a theoretical framework where the stochastic discount factor loads on dynamic horizon specific network connections among asset return volatilities that are time dependent. Our expression for innovations in the stochastic discount factor shows that dynamic network risk influences returns through the precautionary savings channel which implies negative risk prices. Although network connections are made through state variables and are thus unobservable, we are able to approximate dynamic horizon specific network connections using the measures we introduce earlier. By definition, these measures allow for asymmetric network connections among asset return volatilities that vary over time and are horizon specific.

Note that shock transmission and reception together create dynamic network risk. Thus, sources of systematic risk from the dynamic network are in the directional connections among asset return volatilities. Another key aspect is that assets (nodes) on the market all differ in size; something that can bear influence on shock propagation \citep{elliott2014financial,glasserman2016contagion}. \cite{herskovic2018networks} shows that network concentration, the dominance of few large sectors, is an important network-wide property one should account for; Figure \ref{multilayer_networks} illustrates the importance of network concentration in the case of individual stocks.

In the context of our study, we interpret network concentration as a function of firm size and combine this with shock transmission and reception capacity. Economically speaking, shocks to the volatilities' of key firms on the market, that are more central to the network from a directional perspective, are more likely to contribute to rises in dynamic network risk; relative to those with peripheral directional connections\footnote{We are not dismissing the possibility that small stocks with strong directional connections can create risks investors demand compensation for, or that the \textquoteleft snowball\textquoteright\: effect in \cite{elliott2014financial} for small stocks cannot lead to systematic shocks; which by definition our proxies for horizon specific directional network risk can capture.}.  

With the above in mind, we proceed following the factor literature by conducting portfolio sorts stemming from horizon specific directional network connections. Specifically, we conduct double sorts with daily rebalancing for S\&P500 assets based on their relative size and strength of net directional connections using horizon specific, $\mathcal{C}(u,d)_{j \rightarrow \bullet}-\mathcal{C}(u,d)_{j\leftarrow \bullet}$, measures. This allows us to account for both shock propagation and reception capacity, and also to extract information from the entire adjacency matrix. 

Tables \ref{tab:sort1} reports the average annual returns of value weighted quintile portfolios from our double sorting procedure. Panels A and B report double sorts on net directional connections over the short-term and long-term respectively. We define short-term as 1 day--1 week and long-term as horizons as 1 week--$\infty$ (i.e. $>$ 1 week)\footnote{Note that definitions of horizons stems from the frequency with which data is observed in constructing the network. While we choose the bands that define horizons in a subjective manner we believe these assumptions are reasonable. More generally speaking one could use more bands that span the spectrum.}. Panel C reports average annual returns of value weighted quintile portfolios sorted on aggregate net directional connectedness. \textsc{from} denotes portfolios using assets that are most receptive to shocks from other assets in the network. \textsc{to} denotes portfolios using assets that are the strongest shock propagators to other assets in the network. The final column reports quintile portfolios of an average of the \textsc{from} and \textsc{to} portfolios, which summarizes horizon specific directional network connections. % Hence, in a \cite{diebold2014} sense, \textsc{from} portfolios consist of net shock receivers and \textsc{to} portfolios consist of assets that are net shock transmitters.

In general, portfolio returns are monotonically increasing with size. We also see that \textsc{from} portfolios earn higher returns than \textsc{to} portfolios. However, we do not observe a monotonic relationship as we move from the \textsc{from} portfolios to the \textsc{to} portfolios; nor should we expect to. As we outline above, shock transmission and reception may both create network risks in a market \citep{branger2018equilibrium}. Therefore one should expect to see portfolios using assets in the tails of the cross-sectional distribution of net directional connections earning relatively lower returns than those using asset in middle quintiles; and overall this is the case. We also report the average annual returns of small minus big portfolios at each quintile and for the equally weighted average of the \textsc{from} and \textsc{to} portfolios. Note all portfolio returns are negative and statistically significant at conventional levels.  

% Table generated by Excel2LaTeX from sheet 'XX'
\begin{table}[!hp]
  \centering
  \caption{\textbf{Average annual returns for quintile portfolios on size and horizon specific directional network connections}\\
  \small{Notes: This table reports average annual returns from portfolio double sorts. Specifically portfolio sorts are separated by size and their horizon specific directional network connectedness measures using daily rebalancing from July 5, 2005 to August 31, 2018. Panel A sorts on size and $d$=short-term net-directional connections which we define as 1 day--1 week; Panel B sorts on size and $d$=long-term net-directional connections which we define as 1 week--$\infty$;  Panel C sorts on size and uses $d$=aggregate net-directional connections that sums over short-term and long-term frequency bands.}
  }
    \begin{tabular}{rrrrrrr}
    \toprule
    \midrule
    \midrule
    \multicolumn{1}{l}{\textbf{A:}} & \multicolumn{6}{c}{$d$=\textbf{Short}} \\
          &       &       &       &       &       & \multicolumn{1}{l}{\textsc{from}\textbf{/2+}} \\
          & \multicolumn{1}{l}{\textbf{1} \textsc{from}} & \textbf{2} & \textbf{3} & \textbf{4} & \textbf{5} \textsc{to} & \textsc{to}\textbf{/2} \\
    \midrule
    \textbf{1 Small} & -1.28\% & -4.87\% & -7.75\% & -5.91\% & -22.32\% & -11.80\% \\
    \textbf{2} & 6.29\% & 4.75\% & 4.93\% & 0.81\% & -3.67\% & 1.31\% \\
    \textbf{3} & 12.68\% & 5.73\% & 4.78\% & 5.16\% & 3.69\% & 8.18\% \\
    \textbf{4} & 10.51\% & 9.59\% & 7.71\% & 7.24\% & 5.60\% & 8.05\% \\
    \textbf{5 Big} & 9.79\% & 10.82\% & 8.96\% & 9.82\% & 5.22\% & 7.51\% \\
    \textbf{Small-Big} & -11.07\% & -15.69\% & -16.71\% & -15.72\% & -27.54\% & -19.31\% \\
    \textbf{$t$-stat} & -13.11 & -14.58 & -13.78 & -12.28 & -10.08 & -12.83 \\
    \midrule
    \multicolumn{1}{l}{\textbf{B:}} & \multicolumn{6}{c}{$d$=\textbf{Long}} \\
          &       &       &       &       &       & \multicolumn{1}{l}{\textsc{from}\textbf{/2+}} \\
          & \multicolumn{1}{l}{\textbf{1} \textsc{from}} & \textbf{2} & \textbf{3} & \textbf{4} & \textbf{5} \textsc{to} & \textsc{to}\textbf{/2} \\
    \midrule
    \textbf{1 Small} & -5.05\% & -5.54\% & -5.18\% & -6.25\% & -20.09\% & -12.57\% \\
    \textbf{2} & -1.29\% & 4.95\% & 6.35\% & 4.85\% & -1.81\% & -1.55\% \\
    \textbf{3} & 5.08\% & 7.77\% & 6.32\% & 8.32\% & 4.62\% & 4.85\% \\
    \textbf{4} & 5.49\% & 7.54\% & 8.56\% & 10.24\% & 8.98\% & 7.23\% \\
    \textbf{5 Big} & 6.59\% & 11.46\% & 11.23\% & 12.32\% & 4.92\% & 5.75\% \\
    \textbf{Small-Big} & -11.64\% & -16.99\% & -16.41\% & -18.57\% & -25.01\% & -18.32\% \\
    \textbf{$t$-stat} & -12.61 & -13.36 & -18.71 & -13.06 & -8.63 & -11.73 \\  
    \midrule
    \midrule
    \multicolumn{1}{l}{\textbf{C:}} & \multicolumn{6}{c}{$d$=\textbf{Aggregate}} \\
          &       &       &       &       &       & \multicolumn{1}{l}{\textsc{from}\textbf{/2+}} \\
          & \multicolumn{1}{l}{\textbf{1} \textsc{from}} & \textbf{2} & \textbf{3} & \textbf{4} & \textbf{5} \textsc{to} & \textsc{to}\textbf{/2} \\
    \midrule
    \textbf{1 Small} & -4.33\% & -4.73\% & -5.18\% & -5.96\% & -22.07\% & -13.20\% \\
    \textbf{2} & 0.55\% & 4.28\% & 6.90\% & 5.07\% & -3.69\% & -1.57\% \\
    \textbf{3} & 8.14\% & 6.82\% & 4.95\% & 8.00\% & 4.20\% & 6.17\% \\
    \textbf{4} & 7.81\% & 6.64\% & 8.41\% & 9.65\% & 8.17\% & 7.99\% \\
    \textbf{5 Big} & 7.32\% & 11.73\% & 9.73\% & 11.40\% & 5.49\% & 6.40\% \\
    \textbf{Small-Big} & -11.65\% & -16.46\% & -14.91\% & -17.36\% & -27.56\% & -19.61\% \\
    \textbf{$t$-stat} & -14.74 & -13.52 & -14.97 & -14.16 & -8.78 & -11.28 \\
    \midrule
    \midrule
    \bottomrule
    \end{tabular}%
  \label{tab:sort1}%
\end{table}%

\subsubsection{Factor creation}

Using the results in Table \ref{tab:sort1}, we construct factors in a manner that summarize dynamic horizon specific directional network risk among asset return volatilities whilst controlling for network concentration (i.e size of nodes). We define short and long horizons as 1 day--1 week, and as horizons $>$ 1 week respectively. We also sort on aggregate, directional network connections which sums network connections over horizons. Specifically, each day we sort S\&P500 stocks above and below the day's median price level and define these as small and big stocks. Then, conditional on size, we sort on horizon specific net-directional connectedness and create value-weighted \textsc{to} and \textsc{from} portfolios. These portfolios admit stocks above (below) the 70$^{\text{th}}$ (30$^{\text{th}}$) percentile of each respective day's horizon specific net-directional connectedness distribution. We characterize these portfolios as \textsc{to} and \textsc{from} portfolios. We then take an average of the \textsc{to} and \textsc{from} portfolios and then a long--short position in the respective small and big portfolios. 

Formally, the day $t$ directional network risk factor $\text{NET}(d)_{t}$, at horizon $d=\{S,\:L,\:A\}$, where $S$=short-term which corresponds to horizons of 1-day to 1-week, $L$=long-term which corresponds to horizons 1-week to $\infty$, and $A$ is an aggregate that sums over all horizons (i.e. 1-day to $\infty$), is given by

\begin{equation}\label{eq:factor}
\text{NET}(d)_{t} = \frac{\left(\textsc{from}^{\text{small}}(d)_{t}+\textsc{to}^{\text{small}}(d)_{t}\right)}{2} - \frac{\left(\textsc{from}^{\text{big}}(d)_{t}+\textsc{to}^{\text{big}}(d)_{t}\right)}{2}
\end{equation}

where small and big refer to S\&P500 constituents falling below and above the median daily prices of all constituents respectively. Our dynamic horizon specific network risk factors implicitly account for network concentration through firm size. Therefore our proxies for dynamic horizon specific directional network risk use network-wide properties that summarize information using the entire adjacency matrix \citep{herskovic2018networks}.

% Table generated by Excel2LaTeX from sheet 'VW_final'
\begin{table}[!hp]
  \centering
  \caption{\textbf{Descriptive statistics of network risk factors}\\
  \small{Notes: This table reports descriptive statistics for daily horizon specific and aggregate network risk factors from July 5, 2005 to August 31, 2018. Short corresponds to the short-term network risk factor which captures directional network risk from 1-day to 1 week. Long is the long-term network risk factor that captures directional network risk from horizons 1-week to $\infty$. Aggregate, sums over horizons and thus captures aggregate directional network risk at all horizons. The top half of the table reports the annualized expected returns and standard deviations, along with sample skewness and kurtosis. The bottom half of this table reports sample correlations on and below the main diagonal, and sample covariances above the main diagonal.}  
  }
    \begin{tabular}{lrrr}
    \toprule
    \midrule
    \midrule
          & \multicolumn{1}{c}{\textbf{Short}} & \multicolumn{1}{c}{\textbf{Long}} & \multicolumn{1}{c}{\textbf{Aggregate}} \\
    \midrule
    \textbf{Annualized Expected Return} & -7.34\% & -7.64\% & -7.51\% \\
    \textbf{Annualized Standard Deviation} & 5.94\%  & 6.29\% & 6.39\% \\
    \textbf{Sample Skewness} & -0.36 & -0.40 & -0.22 \\
    \textbf{Sample Kurtosis} & 11.45  & 13.06 & 13.80 \\
    \midrule
    \midrule
   \multicolumn{4}{l}{\textbf{Sample Correlations/Covariances}}  \\
          & \multicolumn{1}{c}{\textbf{Short}} & \multicolumn{1}{c}{\textbf{Long}} & \multicolumn{1}{c}{\textbf{Aggregate}} \\
          \midrule
    \multicolumn{1}{r}{\textbf{Short}} & 1.00  & 0.14 & 0.14 \\
    \multicolumn{1}{r}{\textbf{Long}}  & 0.93  & 1.00     & 0.16 \\
    \multicolumn{1}{r}{\textbf{Aggregate}} & 0.95  & 0.98  & 1.00 \\
    \midrule
    \midrule
    \bottomrule
    \end{tabular}%
  \label{tab:descriptive}%
\end{table}%

Table \ref{tab:descriptive} shows descriptive statistics for short-term, long-term, and aggregate directional network risk factors. The top half of the table reports annualized expected returns and standard deviations, as well as sample skewness and kurtosis. The bottom half reports sample correlations and covariances. Several observations emerge from Table \ref{tab:descriptive}. First, all factors earn considerable negative returns, are volatile, skewed and leptokurtic. These negative returns are consistent with: economic theory relating to volatility risk from an ICAPM perspective \citep{campbell2018intertemporal}; the model and empirics within \cite{herskovic2018networks} that documents negative returns of long-short portfolios of high and low concentration (i.e. size relative to the whole network) sensitive assets in a production based asset pricing model; and the model we outline in Section \ref{economy}. The channel is that firms loading positively on dynamic horizon specific directional network risk earn lower returns as they act as inter-temporal hedging devices against consumption declines.

Second, by definition dynamic horizon specific network risk factors are highly correlated. This is because the horizon specific net directional connections are a decomposition of aggregate net directional connections. Therefore in order to appropriately interpret asset sensitivities to horizon specific network risk that includes both short-term and long-term directional network risk factors, we orthogonalize by using the residuals from a regression of the short-term network risk factor on the long-term network risk factor. Note that the annualized expected return and standard deviation are -0.01\% and 2.25\% respectively. The sample skewness and kurtosis of the orthogonalized factor are given by -0.15 and 7.87 respectively. The sample correlation between the orthogonalized short-term directional network risk factor and the aggregate directional network risk factor is 0.11.

\begin{figure}[!hp]
\centering
\scalebox{1.00}{\includegraphics{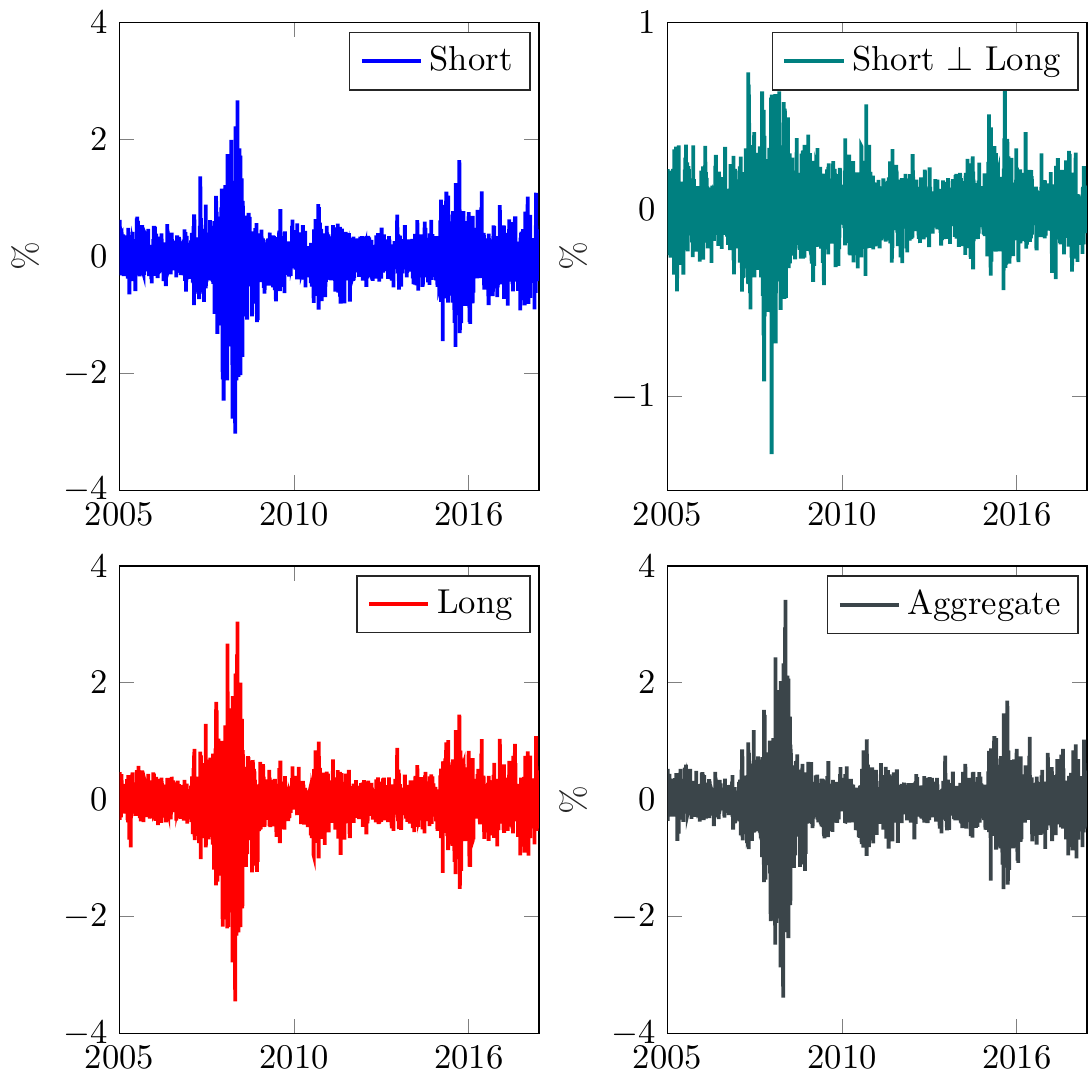}}
		\caption{\textbf{Horizon specific and aggregate directional network risk factors}\\ \small{The top left and right hand side panels of this figure plot the short-term directional and orthogonalized short-term directional network (Short $\perp$ Long) risk factor daily returns respectively. The bottom left and right hand side panels report the long-term directional network risk factor daily returns and the aggregate directional network risk factor daily returns respectively. The sample period is from July 5, 2005 to August 31, 2018. Short-term is defined as 1 day--1 week and long-term is defined as horizons $>$ 1 week. The bottom panel of this figure plots the aggregate directional network risk factor daily returns from July 5, 2005 to August 31, 2018. These returns stem from sorts on unconditional net directional connectedness (i.e. sums net directional connectedness over horizons).}}
      \label{FACTOR_RETS}
\end{figure}

Figure \ref{FACTOR_RETS} plots the daily returns of our network risk factors. We report: the short-term directional network risk factor in the top left panel; the orthogonalized daily return for the short-term directional network risk factor in the top right panel; the long-term directional network risk factor in the bottom left panel; and the aggregate directional network risk factor in the bottom panel. We can see that surges in the volatility of factor returns during the 2007--2009 recession, 2011--2012 particularly for long-term factor returns, and during 2015. As we outline above, these periods contain notable events where directional network risk heightens, and is thus mirrored by larger price fluctuations of our factors.

\newpage

\section{Dynamic Horizon Specific Network Risk Pricing}\label{main_results}
This section contains our main results on the pricing of dynamic horizon specific network risk in the cross-section of S\&P500 returns. We examine the contemporaneous link between factor loadings and returns. Our tests employ individual stocks as our base assets. We do this for two reasons. First, our focus is on directional network connections among asset return volatilities and the cross-sectional pricing implications of (horizon specific) directional network risk. Therefore, tracking network connections for a market requires this level of granularity. Second, \cite{ang2020using} show that creating portfolios ignores the fact that stocks within portfolios have different betas which can lead to larger standard errors in cross-sectional risk premia estimates. We first present Fama-MacBeth regressions before moving on to consider portfolio sorts.

\subsection{Fama-MacBeth Regressions}

In our Fama-MacBeth analysis, we run two-stage regressions of daily individual S\&P500 excess returns on horizon specific and aggregate network risk whilst controlling for an array of factors existing in the literature. In particular, we control for the \cite{fama2015five} five-factors; as well as five additional factors. The additional characteristics we control for are: changes in the VIX; momentum; conditional skewness \citep{harvey2000conditional}; conditional kurtosis \citep{dittmar2002nonlinear}; and illiquidity \citep{amihud2002illiquidity}. The VIX is from CRSP and the momentum factor is from Ken French's data library. Conditional skewness is the sample counterpart of 
\begin{equation*}
\text{CSKEW} = \frac{\mathbb{E}\left[(R_{i,t}-\mu_{i,t}) \cdot (\text{MKT}_{t} - \mu_{\text{MKT},t})^2 \right] }{\sqrt{\text{var}(R_{i,t})} \cdot \text{var}(\text{MKT}_{t})}
\end{equation*}
with $\mu_{i,t},\: \mu_{\text{MKT},t}$ being the average excess return on stock $i$ and the market respectively. We define conditional kurtosis, CKURT, analogously. We construct factors that sort stocks according to their conditional skewness and kurtosis and construct long-short portfolios using daily rebalancing\footnote{We also include firm specific values of CSKEW and CKURT in the time-series regressions to obtain betas; results are robust to this specification and our dynamic horizon specific directional network risk proxies remain statistically significant and economically meaningful.}. Illiquidity is Amihud's illiquidity measure which is the ratio of absolute return to trading volume. We compute this for each S\&P500 constituent and take a cross-sectional average each day as a proxy for market illiquidity.

Tables \ref{tab:FMB_horz} and \ref{tab:Rolling_FMB_horz} present cross-sectional pricing results using daily returns of S\&P500 constituents using full sample estimates and daily rolling regressions with a 3-year window. Columns 1 and 2 in each table show results from the CAPM and the \cite{fama2015five} 5-factor model. The remaining columns report results from alternative specifications incorporating dynamic horizon specific network risk. Note that models including both short-term and long-term directional network risk use the short-term directional network risk factor that is orthogonal to long-term directional network risk.

From Table \ref{tab:FMB_horz}, we can see that the market prices of risk are all negative and statistically significant. Note that the estimates for market prices and statistical significance of dynamic horizon specific directional network risk are quantitatively similar across specifications 3, 4, 5, and 6; with the latter controlling for five factors over and above the \cite{fama2015five} model. When we allow for orthogonalized short-term directional network risk and long-term directional network risk in columns 7 and 8, we observe statistically significant negative market prices of risk\footnote{By construction, the estimates of market prices of orthogonal short-term directional network risk are smaller in absolute value relative to specifications containing only short-term directional network risk.}. 

Turning to Table \ref{tab:Rolling_FMB_horz}, a similar story emerges, with the exception of orthogonal short-term directional network risk becoming statistically insignificant in columns 7 and 8. We also observe that accounting for additional factors beyond \cite{fama2015five} do not dampen the significance or magnitude of the market prices of dynamic horizon specific network risk. 

We now examine market prices of dynamic aggregate network risk. Table \ref{tab:FMB_T_FS_ROLL} reports full sample and rolling regression results in Panels A and B respectively. Again, we find a significantly negative market price of network risk. The estimates change marginally when accounting for our additional battery of factors. 

Overall, models containing only the market risk premium or the \cite{fama2015five} five-factor model result in negative prices for the market risk premium. This suggests model misspecification. Adding to this, we can see that the intercepts across all specifications including directional network risk are statistically insignificant. Thereby implying that these specifications adequately explain cross-sectional return variation. 

Several implications emerge from this analysis. First, horizon specific network risk is priced in the cross-section of stock returns and the market price of risk is negative. Second, this result is robust across multiple specifications and holds using full-sample and rolling window estimates. We note here that the rolling specifications in columns 7 and 8 of Table \ref{tab:Rolling_FMB_horz} using short-term directional network risk orthogonal to long-term directional risk is statistically insignificant and the mean prices of risk very close to zero. Third, the effect of dynamic horizon specific network risk is economically meaningful. Focussing on columns 5 and 6 of Table \ref{tab:Rolling_FMB_horz} to obtain an idea of the economic significance of these risk premia, a one-standard-deviation increase across stocks in $\beta^{\text{NET}(S)}$, $\beta^{\text{NET}(L)}$ implies an annualized fall in returns of 6.76\% and 7.66\% respectively\footnote{The standard deviation of $\beta^{\text{NET}(S)}$=0.81, $\beta^{\text{NET}(L)}$=0.79. Taking the market risk premia of -0.034 and -0.04 we have: i) -0.034 $\times$ 0.81 =-6.76\% and -7.66\% respectively. To put this into perspective, a two-standard deviation increase across stocks in $\beta^{\text{MOM}}$, with the standard deviation  results in an annualized increase in returns of 9.16\% and 8.40\% respectively. Note, similar values are implied from alternative specifications.}.

\newpage 

%\begin{landscape}
% Table generated by Excel2LaTeX from sheet 'Full_Sample'
\begin{table}[!hp]
  \centering
  \caption{\textbf{Full sample Fama-MacBeth regressions: Dynamic horizon specific network risk}
\small{Notes: This table reports Fama-MacBeth regressions for daily S\&P500 stock returns from July 5, 2005 to August 31, 2018. We use Newey West Standard errors with 12-lags, $t$-ratios in square brackets below coefficient estimates are adjusted following \cite{shanken1992estimation}. NET($S$) is the short-term directional network risk factor and NET($L$) is the long-term directional network risk factor. For models including both NET($S$) and NET($L$), we use the short-term directional network risk factor that is orthogonal to the long-term directional network risk factor, NET($S^{\perp}$). VIX is the daily change in the VIX index; MOM is the momentum factor; CSKEW and CKURT are conditional skewness and conditional kurtosis factors respectively; ILLIQ is market illiquidity.}   
   }
    %\adjustbox{max height=\textwidth ,max width=9.70in, keepaspectratio}{
     \adjustbox{max height=9.70in, max width=\textwidth , keepaspectratio}{
    \begin{tabular}{lrrrrrrrr}
    \toprule
    \midrule
    \midrule
%    \textbf{Model:} & \multicolumn{1}{l}{CAPM} & \multicolumn{1}{l}{FF} & \multicolumn{1}{l}{FF} & \multicolumn{1}{l}{FF} & \multicolumn{1}{l}{FF + NET($S$)} & \multicolumn{1}{l}{FF + NET($L$)} & \multicolumn{1}{l}{FF + NET($S^{\perp}$)} & \multicolumn{1}{l}{FF + NET($S^{\perp}$)} \\
%          &       &  & \multicolumn{1}{l}{+ NET($S$)} & \multicolumn{1}{l}{+ NET($L$)}  & %\multicolumn{1}{l}{+ Extra} & \multicolumn{1}{l}{+ Extra} & \multicolumn{1}{l}{+ NET($L$) } & \multicolumn{1}{l}{+ NET($L$) + Extra } \\
              \textbf{Model:} & \multicolumn{1}{c}{1} & \multicolumn{1}{c}{2} & \multicolumn{1}{c}{3} & \multicolumn{1}{c}{4} & \multicolumn{1}{c}{5} & \multicolumn{1}{c}{6} & \multicolumn{1}{c}{7} & \multicolumn{1}{c}{8} \\
    \midrule
    NET($S$)/NET($S^{\perp}$) &       &       & -0.077 &       & -0.063 &       & -0.022 & -0.016 \\
          &       &       & [-6.50] &       & [-6.21] &       & [-2.42] & [-1.92] \\
    NET($L$) &       &       &       & -0.079 &       & -0.064 & -0.064 & -0.056 \\
          &       &       &       & [-6.34] &       & [-5.75] & [-4.77] & [-4.47] \\
    MKT   & -0.094 & -0.083 & 0.053 & 0.043 & 0.065 & 0.065 & 0.052 & 0.061 \\
          & [-2.58] & [-2.09] & [1.58] & [1.30] & [1.73] & [1.76] & [1.56] & [1.68] \\
    SMB   &       & -0.02 & -0.006 & 0.000     & 0.006 & 0.009 & -0.009 & 0.004 \\
          &       & [-1.18] & [-0.35] & [0.01] & [0.42] & [0.58] & [-0.59] & [0.30] \\
    HML   &       & -0.042 & -0.021 & -0.01 & -0.014 & -0.010 & -0.027 & -0.018 \\
          &       & [-2.77] & [-1.33] & [-0.66] & [-0.92] & [-0.66] & [-1.88] & [-1.27] \\
    RMW   &       & 0.047 & 0.026 & 0.025 & 0.011 & 0.011 & 0.027 & 0.012 \\
          &       & [3.00] & [1.67] & [1.64] & [0.78] & [0.80] & [1.81] & [0.95] \\
    CMA   &       & -0.036 & -0.025 & -0.025 & -0.011 & -0.012 & -0.026 & -0.011 \\
          &       & [-3.00] & [-2.06] & [-2.06] & [-1.10] & [-1.24] & [-2.17] & [-1.08] \\
    VIX   &       &       &       &       & -0.186 & -0.182 &       & -0.164 \\
          &       &       &       &       & [-0.45] & [-0.44] &       & [-0.40] \\
    MOM   &       &       &       &       & 0.146 & 0.150 &       & 0.151 \\
          &       &       &       &       & [4.06] & [4.16] &       & [4.19] \\
    CSKEW &       &       &       &       & 0.028 & 0.038 &       & 0.024 \\
          &       &       &       &       & [1.14] & [1.51] &       & [0.97] \\
    CKURT &       &       &       &       & 0.005 & 0.002 &       & 0.007 \\
          &       &       &       &       & [0.21] & [0.11] &       & [0.30] \\
    ILLIQ &       &       &       &       & 0.008 & 0.007 &       & 0.008 \\
          &       &       &       &       & [3.05] & [2.93] &       & [3.13] \\
    Intercept & 0.08  & 0.068 & 0.01  & 0.02  & -0.005 & -0.004 & 0.007 & -0.006 \\
          & [5.05] & [3.80] & [0.65] & [1.29] & [-0.31] & [-0.26] & [0.48] & [-0.36] \\
%    \midrule
%    $\bar{R}^2$ & 0.17  & 0.23  & 0.41  & 0.39  & 0.59  & 0.59  & 0.41  & 0.59 \\
    \midrule
    \midrule
    \bottomrule
    \end{tabular}%
    }
  \label{tab:FMB_horz}%
\end{table}%
%\end{landscape}

\newpage

%\begin{landscape}
% Table generated by Excel2LaTeX from sheet 'Rolling'
\begin{table}[!hp]
  \centering
  \caption{\textbf{Rolling Fama-MacBeth regressions: Dynamic horizon specific network risk}
\small{Notes: This table reports daily rolling Fama-MacBeth regressions for daily S\&P500 stock returns from July 5, 2005 to August 31, 2018 using a 3-year window. We use Newey West Standard errors with 24-lags, $t$-ratios in square brackets below coefficient estimates are adjusted following \cite{shanken1992estimation}. NET($S$) is the short-term directional network risk factor and NET($L$) is the long-term directional network risk factor. For models including both NET($S$) and NET($L$), we use the short-term directional network risk factor that is orthogonal to the long-term directional network risk factor, NET($S^{\perp}$). VIX is the daily change in the VIX index; MOM is the momentum factor; CSKEW and CKURT are conditional skewness and conditional kurtosis factors respectively; ILLIQ is market illiquidity.}   
   }
    %\adjustbox{max height=\textwidth ,max width=9.70in, keepaspectratio}{
     \adjustbox{max height=9.70in, max width=\textwidth , keepaspectratio}{
    \begin{tabular}{lrrrrrrrr}
    \toprule
    \midrule
    \midrule
%    \textbf{Model:} & \multicolumn{1}{l}{CAPM} & \multicolumn{1}{l}{FF} & \multicolumn{1}{l}{FF} & \multicolumn{1}{l}{FF} & \multicolumn{1}{l}{FF + NET($S$)} & \multicolumn{1}{l}{FF + NET($L$)} & \multicolumn{1}{l}{FF + NET($S^{\perp}$)} & \multicolumn{1}{l}{FF + NET($S^{\perp}$)} \\
%          &       &  & \multicolumn{1}{l}{+ NET($S$)} & \multicolumn{1}{l}{+ NET($L$)}  & %\multicolumn{1}{l}{+ Extra} & \multicolumn{1}{l}{+ Extra} & \multicolumn{1}{l}{+ NET($L$) } & \multicolumn{1}{l}{+ NET($L$) + Extra } \\
              \textbf{Model:} & \multicolumn{1}{c}{1} & \multicolumn{1}{c}{2} & \multicolumn{1}{c}{3} & \multicolumn{1}{c}{4} & \multicolumn{1}{c}{5} & \multicolumn{1}{c}{6} & \multicolumn{1}{c}{7} & \multicolumn{1}{c}{8} \\
    \midrule
    NET($S$)/NET($S^{\perp}$) &       &       & -0.043 &       & -0.034 &       & 0.007 & 0.008 \\
          &       &       & [-2.40] &       & [-2.24] &       & [0.77] & [0.85] \\
    NET($L$) &       &       &       & -0.048 &       & -0.040 & -0.052 & -0.044 \\
          &       &       &       & [-2.53] &       & [-2.41] & [-2.67] & [-2.56] \\
    MKT   & -0.103 & -0.015 & 0.022 & 0.027 & 0.015 & 0.021 & 0.028 & 0.020 \\
          & [-1.22] & [-0.10] & [0.39] & [0.44] & [0.26] & [0.34] & [0.47] & [0.34] \\
    SMB   &       & -0.021 & 0.005 & 0.008 & 0.004 & 0.006 & 0.009 & 0.007 \\
          &       & [-0.85] & [0.15] & [0.28] & [0.13] & [0.20] & [0.30] & [0.24] \\
    HML   &       & -0.049 & -0.025 & -0.021 & -0.019 & -0.017 & -0.022 & -0.016 \\
          &       & [-1.55] & [-0.77] & [-0.70] & [-0.70] & [-0.62] & [-0.73] & [-0.62] \\
    RMW   &       & 0.042 & 0.023 & 0.022 & 0.009 & 0.008 & 0.024 & 0.010 \\
          &       & [1.39] & [0.70] & [0.71] & [0.34] & [0.30] & [0.82] & [0.41] \\
    CMA   &       & -0.0007 & 0.01  & 0.01  & 0.001 & 0.002 & 0.008 & -0.001 \\
          &       & [0.11] & [0.59] & [0.60] & [0.05] & [0.06] & [0.48] & [-0.07] \\
    VIX   &       &       &       &       & 0.003 & -0.022 &       & -0.017 \\
          &       &       &       &       & [-0.01] & [-0.06] &       & [-0.05] \\
    MOM   &       &       &       &       & 0.087 & 0.080 &       & 0.080 \\
          &       &       &       &       & [2.06] & [1.96] &       & [1.99] \\
    CSKEW &       &       &       &       & 0.031 & 0.030 &       & 0.032 \\
          &       &       &       &       & [0.47] & [0.42] &       & [0.46] \\
    CKURT &       &       &       &       & 0.006 & 0.005 &       & 0.007 \\
          &       &       &       &       & [0.69] & [0.66] &       & [0.71] \\
    ILLIQ &       &       &       &       & 0.007 & 0.007 &       & 0.007 \\
          &       &       &       &       & [1.12] & [1.09] &       & [1.06] \\
    Intercept & 0.08  & 0.027 & 0.022 & 0.023 & 0.030 & 0.030 & 0.023 & 0.032 \\
          & [2.43] & [0.94] & [0.79] & [0.83] & [1.25] & [1.25] & [0.85] & [1.35] \\
%    \midrule
%    $\bar{R}^2$ & 0.11  & 0.26  & 0.34  & 0.35  & 0.47  & 0.48  & 0.36  & 0.48 \\
    \midrule
    \midrule
    \bottomrule
    \end{tabular}%
    }
  \label{tab:Rolling_FMB_horz}%
\end{table}%
%\end{landscape}

\newpage

\begin{table}[!hp]
  \centering
  \caption{\textbf{Fama-MacBeth regressions: Aggregate directional network risk}
\small{Notes: This table reports Full Sample Fama-MacBeth regressions in Panel A, and Rolling Fama-MacBeth regressions in Panel B, for daily S\&P500 stock returns. The full sample regressions span from July 5, 2005 to August 31, 2018 and use Newey West Standard errors with 12-lags. Rolling regressions use a 3-year window add 1 day at a time and use Newey West Standard errors with 24-lags.  $t$-ratios in square brackets below coefficient estimates are adjusted following \cite{shanken1992estimation}. NET($A$) is the aggregate directional network risk factor that sums over short-term and long-term frequency bands. VIX is the daily change in the VIX index; MOM is the momentum factor; CSKEW and CKURT are conditional skewness and conditional kurtosis factors respectively; ILLIQ is market illiquidity.} 
} 
    \begin{tabular}{lrrrrr}
    \toprule
    \midrule
    \midrule
          & \multicolumn{2}{c}{\textbf{A: Full Sample}} & & \multicolumn{2}{c}{\textbf{B: Rolling}} \\
\cmidrule(lr){2-3}\cmidrule(lr){5-6} 
%         & \multicolumn{1}{l}{FF + NET($S^{\perp}$)} & \multicolumn{1}{l}{FF + NET($S^{\perp}$)} &  \multicolumn{1}{l}{FF + NET($S^{\perp}$)} & \multicolumn{1}{l}{FF + NET($S^{\perp}$)} \\
%          & \multicolumn{1}{l}{+ NET($L$)} & \multicolumn{1}{l}{+ NET($L$) + Extra} & \multicolumn{1}{l}{+ NET($L$)} & \multicolumn{1}{l}{+ NET($L$) + Extra} \\
                    & \multicolumn{1}{c}{1} & \multicolumn{1}{c}{2} & & \multicolumn{1}{c}{3} & \multicolumn{1}{c}{4} \\
    \midrule
    NET($A$) & -0.079 & -0.065 & & -0.047 & -0.037 \\
          & [-6.36] & [-5.94] & & [-2.44] & [-2.32] \\
    MKT & 0.050 & 0.066 & & 0.026 & 0.02 \\
          & [1.48] & [1.75] & & [0.43] & [0.33] \\
    SMB & -0.002 & 0.008 & & 0.006  & 0.005 \\
          & [-0.14] & [0.49] & & [0.22] & [0.18] \\
    HML & -0.013 & -0.01 & & -0.023 &  -0.018 \\
          & [-0.81] & [-0.68] & & [-0.75] & [-0.67] \\
    RMW & 0.025 & 0.01  & & 0.022 & 0.008 \\
          & [1.60] & [0.73] & & [0.70] & [0.29] \\
    CMA & -0.026 & -0.012  & & 0.009 & 0.001 \\
          & [-2.16] & [-1.23] & & [0.53] & [0.02] \\
    VIX &       & -0.173 &     &  & -0.022 \\
          &       & [-0.42] &   &    & [-0.07] \\
    MOM &       & 0.145 &    &   & 0.084 \\
          &       & [3.99] &   &    & [2.03] \\
    CSKEW &       & 0.034 &    &   & 0.03 \\
          &       & [1.37] &   &    & [0.45] \\
    CKURT &       & 0.004 &    &   & 0.005 \\
          &       & [0.16] &   &    & [0.67] \\
    ILLIQ &       & 0.007 &    &   & 0.007 \\
          &       & [2.97] &   &    & [1.12] \\
    Intercept & 0.016 & -0.003 & & 0.023 & 0.029 \\
          & [1.10] & [-0.20] &  & [0.83] & [1.23] \\
%    \midrule 
%    $\bar{R}^2$ & 0.40  & 0.59  & & 0.27  & 0.47 \\
    \midrule
    \midrule
    \bottomrule
    \end{tabular}%
  \label{tab:FMB_T_FS_ROLL}%
\end{table}%

\newpage
\subsection{Portfolio Sorts}\label{port_sorts}

We now turn to portfolio sorts. Specifically, we estimate betas for S\&P500 constituents and sort directly on factor loading estimates. Our procedure involves estimating rolling regressions using daily data, a 3-year rolling window, and moving forward through the sample on a day-by-day basis. For each of the $i$ S\&P500 constituents we estimate the following regressions

\begin{equation}
R_{i,t} = \beta_{0,i} + \beta^{\text{MKT}}_{i} \cdot \text{MKT}_{t} + \beta^{\text{NET}(d)}_{i} \cdot \text{NET}(d)_{t} + \varepsilon_{i,t}, \quad d\in \{S, L, A\}
\end{equation}
where $R_{i,t}$ is the excess return of the $i$th S\&P500 constituent on day $t$, MKT_{t} is the excess return on the market portfolio, from Ken French's data library, on day $t$. NET$(d)_{t}$ is the horizon specific network risk factor over the: short-term, $S$ which we define as 1-day to 1-week; long-term, $L$ which we define as horizons $>$ 1-week; and aggregate, $A$ which incorporates network connections over all horizons. We include only the NET$(d)_{t}$ factor we are sorting on, and only control for the market risk premium to obtain loadings to reduce the noise in estimation \citep{ang2006downside,cremers2015aggregate}\footnote{We also conduct the same exercise controlling for the \cite{fama2015five} five-factors, as well as the additional 5 factors we use throughout this paper. Results are qualitatively similar to those we report and are available on request.}. 

From July 10, 2009 we sort stocks into quintiles from the estimates of $\beta^{\text{NET}(d)}_{i}$on that day. We rebalance portfolios monthly during each year and compute annual returns by summing daily returns throughout each year\footnote{Note also that we use Newey West standard errors in constructing $t$-statistics for all rolling regressions with 24 lags. We also start in July 2009 as we use 1-year of data to construct conditional skewness and conditional kurtosis.}. Annualized risk-adjusted returns of quintile portfolios are the alpha of the \cite{fama2015five} five-factor model. We also report portfolio betas with respect to dynamic horizon specific network risk, \cite{fama2015five} five-factors, and the additional battery of factors we use in the Fama-MacBeth regressions; namely the VIX, momentum, conditional skewness, conditional kurtosis, and market illiquidity. We report results for value-weighted portfolios in Table \ref{tab:Port_sorts_VW_HOR}, and results for equal-weighted portfolios in Table \ref{tab:Port_sorts_EW_HOR}. 

Panel A of Table \ref{tab:Port_sorts_VW_HOR} shows average annual returns, Fama-French five-factor alphas and portfolio betas for value-weighted quintile portfolios that we sort on $\beta^{\text{NET}(S)}$; as well as a hedge portfolio that takes a long position in the portfolio containing assets with the highest 20\% loadings, and a short position in the portfolio containing assets with the lowest 20\% loadings. Meanwhile Panel B reports the same characteristics, but for value-weighted quintile portfolios that we sort on $\beta^{\text{NET}(L)}$. Table \ref{tab:Port_sorts_EW_HOR} shows comparable results for equal-weighted portfolios. For completeness, Table \ref{tab:Port_sorts_TOT} shows analogous results for quintile portfolio sorts on aggregate directional network risk betas. Panels A and B show value-weighted and equal-weighted characteristics respectively.

We are able to draw several conclusions from Tables \ref{tab:Port_sorts_VW_HOR}, \ref{tab:Port_sorts_EW_HOR}, and \ref{tab:Port_sorts_TOT}. First, it is clear that stocks whose returns load more positively on horizon specific network risk earn lower returns. This result is robust across average annual returns, and risk adjusted returns. We observe an almost monotonically decreasing relationship in both performance measures. 

Second, there is a monotonically increasing relationship for portfolio betas with respect to dynamic horizon specific network risk. The dynamic network risk betas associated to the hedge portfolios all exceed unity and are statistically significant at 1\% levels. This pattern exists after controlling for a battery of factors proposed in the existing literature. Ultimately, these portfolio sensitivities further justify the pricing of dynamic horizon specific network risk we observe from Fama-MacBeth regressions.

Third, we see economically important, and statistically significant raw and risk adjusted returns stemming from all hedge portfolios in Tables \ref{tab:Port_sorts_VW_HOR}, \ref{tab:Port_sorts_EW_HOR}, and \ref{tab:Port_sorts_TOT}. Specifically, the risk adjusted annual returns for value-weighted hedge portfolios against short-term and long-term dynamic network risk are -3.43\% and -4.05\% respectively. Furthermore, the equal-weighted portfolios imply annual market prices of short-term and long-term dynamic network risk of -7.14\% (-10.27\%/1.439) and -7.76\% (-10.78\%/1.389); of which are close to the annualized mean of each proxy for network risk we report in Table \ref{tab:descriptive}. The same holds true for results in Panel B of Table \ref{tab:Port_sorts_TOT}\footnote{A possible reason for the larger returns in value weighted portfolios may stem from the fact that our proxies for network risk control for concentration, and arguably importance, of assets within the network. From a theoretical standpoint, one could conjecture that those value weighted portfolios containing the largest stocks on the S\&P500 are relatively insulated from shock propagation, or that the feedback loop of own shocks is minimal.}.

\begin{landscape}
% Table generated by Excel2LaTeX from sheet 'P_sortshor_CAPM_NET'
\begin{table}[!hp]
  \centering
  \caption{\textbf{Contemporaneous characteristics of value-weighted portfolios}\\
\small{Notes: We create value-weighted portfolios by sorting stocks into quintiles based on daily realized horizon specific network risk betas; short-term in Panel A and long-term in Panel B. Betas are from daily 3-year rolling regressions and the sample spans July 10, 2009 to August 31, 2018. Portfolios are rebalanced monthly. The portfolio characteristics are: average annual returns, $R_{p}$; annualized \cite{fama2015five} five-factor alphas, Ann. $\alpha^{\text{FF}}$; and average annual betas with respect to dynamic horizon specific network risk, the \cite{fama2015five} five-factors, the VIX, Momentum, conditional skewness, conditional kurtosis and market illiquidity. The latter are obtained from rolling regressions using a 1-year window that add one-month at a time.}  
  }
    \begin{tabular}{lrrrrrrrrrrrrr}
    \toprule
    \midrule
    \midrule
    \multicolumn{14}{l}{\textbf{A: Sorts on NET($S$)}} \\
    Portfolio & \multicolumn{1}{l}{$R_{p}$} & \multicolumn{1}{l}{Ann. $\alpha^{\text{FF}}$} & \multicolumn{1}{l}{$\beta^{\text{NET}(S)}$} & \multicolumn{1}{l}{$\beta^{\text{MKT}}$} & \multicolumn{1}{l}{$\beta^{\text{SMB}}$} & \multicolumn{1}{l}{$\beta^{\text{HML}}$} & \multicolumn{1}{l}{$\beta^{\text{RMW}}$} & \multicolumn{1}{l}{$\beta^{\text{CMA}}$} & \multicolumn{1}{l}{$\beta^{\text{VIX}}$} & \multicolumn{1}{l}{$\beta^{\text{MOM}}$} & \multicolumn{1}{l}{$\beta^{\text{CSKEW}}$} & \multicolumn{1}{l}{$\beta^{\text{CKURT}}$} & \multicolumn{1}{l}{$\beta^{\text{ILLIQ}}$} \\
    \midrule
    1 Low $\beta^{\text{NET}(S)}$ & 7.71\% & -0.01\% & 0.085 & 0.568 & -0.004 & -0.274 & 0.021 & 0.082 & 0.003 & 0.073 & -0.054 & -0.179 & -0.008 \\
    2     & 10.86\% & 2.84\% & 0.381 & 0.560 & 0.042 & -0.244 & 0.130 & 0.134 & 0.001 & 0.088 & -0.028 & -0.060 & 0.392 \\
    3     & 9.38\% & 0.54\% & 0.499 & 0.579 & 0.126 & -0.231 & 0.076 & 0.171 & 0.003 & 0.085 & -0.089 & -0.022 & 1.501 \\
    4     & 7.51\% & -1.87\% & 0.785 & 0.608 & 0.189 & -0.090 & 0.037 & 0.096 & 0.005 & 0.073 & -0.114 & 0.066 & 2.218 \\
    5 High $\beta^{\text{NET}(S)}$ & 5.31\% & -3.44\% & 1.316 & 0.634 & 0.201 & -0.061 & -0.186 & 0.227 & 0.006 & 0.099 & -0.285 & -0.085 & 3.915 \\
    5--1 & -2.40\% & -3.43\% & 1.231 & 0.066 & 0.205 & 0.213 & -0.207 & 0.145 & 0.003 & 0.026 & -0.231 & 0.094 & 3.922 \\
    $t$-stat & -2.06 & -4.26 & 9.69  & 1.01  & 3.44  & 2.73  & -1.80 & 1.46  & 0.52  & -0.09 & -1.76 & 0.83  & 0.71 \\
    \midrule
    \midrule
    \multicolumn{14}{l}{\textbf{B: Sorts on NET($L$)}} \\
    Portfolio & \multicolumn{1}{l}{$R_{p}$} & \multicolumn{1}{l}{Ann. $\alpha^{\text{FF}}$} & \multicolumn{1}{l}{$\beta^{\text{NET}(L)}$} & \multicolumn{1}{l}{$\beta^{\text{MKT}}$} & \multicolumn{1}{l}{$\beta^{\text{SMB}}$} & \multicolumn{1}{l}{$\beta^{\text{HML}}$} & \multicolumn{1}{l}{$\beta^{\text{RMW}}$} & \multicolumn{1}{l}{$\beta^{\text{CMA}}$} & \multicolumn{1}{l}{$\beta^{\text{VIX}}$} & \multicolumn{1}{l}{$\beta^{\text{MOM}}$} & \multicolumn{1}{l}{$\beta^{\text{CSKEW}}$} & \multicolumn{1}{l}{$\beta^{\text{CKURT}}$} & \multicolumn{1}{l}{$\beta^{\text{ILLIQ}}$} \\
    \midrule
    1 Low $\beta^{\text{NET}(L)}$ & 7.57\% & -0.09\% & 0.077 & 0.569 & -0.011 & -0.266 & 0.026 & 0.092 & 0.003 & 0.069 & -0.057 & -0.183 & 0.150 \\
    2     & 10.46\% & 2.42\% & 0.365 & 0.556 & 0.043 & -0.226 & 0.126 & 0.142 & 0.001 & 0.106 & -0.015 & -0.057 & 0.242 \\
    3     & 9.88\% & 1.10\% & 0.543 & 0.578 & 0.123 & -0.224 & 0.095 & 0.159 & 0.003 & 0.077 & -0.066 & 0.012 & 2.053 \\
    4     & 8.18\% & -1.29\% & 0.810 & 0.603 & 0.181 & -0.081 & 0.083 & 0.116 & 0.004 & 0.077 & -0.092 & 0.059 & 1.700 \\
    5 High $\beta^{\text{NET}(L)}$ & 4.61\% & -4.13\% & 1.282 & 0.641 & 0.203 & -0.030 & -0.124 & 0.204 & 0.006 & 0.061 & -0.288 & -0.096 & 3.359 \\
    5--1 & -2.96\% & -4.05\% & 1.205 & 0.072 & 0.214 & 0.235 & -0.150 & 0.111 & 0.003 & -0.009 & -0.231 & 0.088 & 3.209 \\
    $t$-stat & -2.55 & -5.30 & 9.62  & 1.05  & 3.27  & 2.88  & -1.32 & 1.13  & 0.47  & -0.50 & -1.50 & 0.70  & 0.38 \\
    \midrule
    \midrule
    \bottomrule
    \end{tabular}%
  \label{tab:Port_sorts_VW_HOR}%
\end{table}%

\newpage

% Table generated by Excel2LaTeX from sheet 'P_sortshor_CAPM_NET'
\begin{table}[!hp]
  \centering
  \caption{\textbf{Contemporaneous characteristics of equal-weighted portfolios}\\
\small{Notes: We create equal-weighted portfolios by sorting stocks into quintiles based on daily realized horizon specific network risk betas; short-term in Panel A and long-term in Panel B. Betas are from daily 3-year rolling regressions and the sample spans July 10, 2009 to August 31, 2018. Portfolios are rebalanced monthly. The portfolio characteristics are: average annual returns, $R_{p}$; annualized \cite{fama2015five} five-factor alphas, Ann. $\alpha^{\text{FF}}$; and average annual betas with respect to dynamic horizon specific network risk, the \cite{fama2015five} five-factors, the VIX, Momentum, conditional skewness, conditional kurtosis and market illiquidity. The latter are obtained from rolling regressions using a 1-year window that add one-month at a time.}  
  }
    \begin{tabular}{lrrrrrrrrrrrrr}
    \toprule
    \midrule
    \midrule
    \multicolumn{14}{l}{\textbf{A: Sorts on NET($S$)}} \\
    Portfolio & \multicolumn{1}{l}{$R_{p}$} & \multicolumn{1}{l}{Ann. $\alpha^{\text{FF}}$} & \multicolumn{1}{l}{$\beta^{\text{NET}(S)}$} &  \multicolumn{1}{l}{$\beta^{\text{MKT}}$} & \multicolumn{1}{l}{$\beta^{\text{SMB}}$} & \multicolumn{1}{l}{$\beta^{\text{HML}}$} & \multicolumn{1}{l}{$\beta^{\text{RMW}}$} & \multicolumn{1}{l}{$\beta^{\text{CMA}}$} & \multicolumn{1}{l}{$\beta^{\text{VIX}}$} & \multicolumn{1}{l}{$\beta^{\text{MOM}}$} & \multicolumn{1}{l}{$\beta^{\text{CSKEW}}$} & \multicolumn{1}{l}{$\beta^{\text{CKURT}}$} & \multicolumn{1}{l}{$\beta^{\text{ILLIQ}}$} \\
    \midrule
    1 Low $\beta^{\text{NET}(S)}$ & 6.05\% & -1.38\% & 0.192 & 0.554 & -0.023 & -0.285 & 0.043 & 0.142 & 0.003 & 0.058 & -0.026 & -0.154 & -0.353 \\
    2     & 9.20\% & 1.37\% & 0.457 & 0.557 & 0.035 & -0.251 & 0.133 & 0.150 & 0.001 & 0.081 & -0.016 & -0.077 & 0.116 \\
    3     & 7.76\% & -0.80\% & 0.618 & 0.569 & 0.113 & -0.249 & 0.080 & 0.175 & 0.003 & 0.080 & -0.055 & -0.018 & 0.769 \\
    4     & 4.94\% & -4.21\% & 0.914 & 0.607 & 0.203 & -0.113 & 0.040 & 0.085 & 0.005 & 0.070 & -0.123 & 0.021 & 1.457 \\
    5 High $\beta^{\text{NET}(S)}$ & -4.21\% & -11.98\% & 1.631 & 0.651 & 0.255 & -0.125 & -0.220 & 0.277 & 0.007 & 0.041 & -0.303 & -0.149 & 4.131 \\
    5--1 & -10.27\% & -10.75\% & 1.439 & 0.097 & 0.279 & 0.160 & -0.263 & 0.134 & 0.004 & -0.017 & -0.277 & 0.005 & 4.484 \\
    $t$-stat & -7.45 & -11.80 & 10.01 & 1.25  & 4.18  & 1.83  & -1.91 & 1.07  & 0.53  & -0.68 & -2.04 & -0.02 & 0.65 \\
    \midrule
    \midrule
    \multicolumn{14}{l}{\textbf{B: Sorts on NET($L$)}} \\
    Portfolio & \multicolumn{1}{l}{$R_{p}$} & \multicolumn{1}{l}{Ann. $\alpha^{\text{FF}}$} & \multicolumn{1}{l}{$\beta^{\text{NET}(L)}$} & \multicolumn{1}{l}{$\beta^{\text{MKT}}$} & \multicolumn{1}{l}{$\beta^{\text{SMB}}$} & \multicolumn{1}{l}{$\beta^{\text{HML}}$} & \multicolumn{1}{l}{$\beta^{\text{RMW}}$} & \multicolumn{1}{l}{$\beta^{\text{CMA}}$} & \multicolumn{1}{l}{$\beta^{\text{VIX}}$} & \multicolumn{1}{l}{$\beta^{\text{MOM}}$} & \multicolumn{1}{l}{$\beta^{\text{CSKEW}}$} & \multicolumn{1}{l}{$\beta^{\text{CKURT}}$} & \multicolumn{1}{l}{$\beta^{\text{ILLIQ}}$} \\
    \midrule
    1 Low $\beta^{\text{NET}(L)}$ & 6.04\% & -1.34\% & 0.205 & 0.555 & -0.035 & -0.278 & 0.055 & 0.152 & 0.003 & 0.056 & -0.019 & -0.151 & -0.414 \\
    2     & 8.86\% & 1.01\% & 0.449 & 0.554 & 0.036 & -0.236 & 0.131 & 0.152 & 0.001 & 0.097 & -0.012 & -0.076 & 0.120 \\
    3     & 8.10\% & -0.41\% & 0.638 & 0.566 & 0.106 & -0.240 & 0.098 & 0.164 & 0.002 & 0.071 & -0.038 & 0.006 & 1.087 \\
    4     & 5.48\% & -3.79\% & 0.939 & 0.600 & 0.197 & -0.105 & 0.076 & 0.111 & 0.004 & 0.075 & -0.090 & 0.035 & 0.996 \\
    5 High $\beta^{\text{NET}(L)}$ & -4.74\% & -12.44\% & 1.594 & 0.656 & 0.248 & -0.078 & -0.162 & 0.257 & 0.006 & 0.011 & -0.301 & -0.154 & 3.409 \\
    5--1 & -10.78\% & -11.25\% & 1.389 & 0.100 & 0.283 & 0.200 & -0.218 & 0.105 & 0.004 & -0.045 & -0.282 & -0.003 & 3.824 \\
    $t$-stat & -7.73 & -12.59 & 9.68  & 1.28  & 3.91  & 2.21  & -1.58 & 0.81  & 0.51  & -0.91 & -2.00 & -0.11 & 0.41 \\
    \midrule
    \midrule
    \bottomrule
    \end{tabular}%
  \label{tab:Port_sorts_EW_HOR}%
\end{table}%

\newpage

\begin{table}[!hp]
  \centering
  \caption{\textbf{Contemporaneous characteristics of value- and equal-weighted portfolios}\\
\small{Notes: We create value-weighted portfolios by sorting stocks into quintiles based on daily realized aggregate network risk betas; value-weighted portfolios are in Panel A and equal-weighted portfolios are in Panel B. Betas are from daily 3-year rolling regressions and the sample spans July 10, 2009 to August 31, 2018. Portfolios are rebalanced monthly. The portfolio characteristics are: average annual returns, $R_{p}$; annualized \cite{fama2015five} five-factor alphas, Ann. $\alpha^{\text{FF}}$; and average annual betas with respect to aggregate directional network risk, the \cite{fama2015five} five-factors, the VIX, Momentum, conditional skewness, conditional kurtosis and market illiquidity. The latter are obtained from rolling regressions using a 1-year window that add one-month at a time.}  
  }
    \begin{tabular}{lrrrrrrrrrrrrr}
    \toprule
    \midrule
    \midrule
    \multicolumn{14}{l}{\textbf{A: Value Weighted Portfolios: Sorts on NET($A$)}} \\
    Portfolio & \multicolumn{1}{l}{$R_{p}$} & \multicolumn{1}{l}{Ann. $\alpha^{\text{FF}}$} & \multicolumn{1}{l}{$\beta^{\text{NET}(A)}$} &  \multicolumn{1}{l}{$\beta^{\text{MKT}}$} & \multicolumn{1}{l}{$\beta^{\text{SMB}}$} & \multicolumn{1}{l}{$\beta^{\text{HML}}$} & \multicolumn{1}{l}{$\beta^{\text{RMW}}$} & \multicolumn{1}{l}{$\beta^{\text{CMA}}$} & \multicolumn{1}{l}{$\beta^{\text{VIX}}$} & \multicolumn{1}{l}{$\beta^{\text{MOM}}$} & \multicolumn{1}{l}{$\beta^{\text{CSKEW}}$} & \multicolumn{1}{l}{$\beta^{\text{CKURT}}$} & \multicolumn{1}{l}{$\beta^{\text{ILLIQ}}$} \\
    \midrule
    1 Low $\beta^{\text{NET}(A)}$ & 7.55\% & -0.13\% & 0.114 & 0.565 & -0.013 & -0.272 & 0.019 & 0.080 & 0.003 & 0.078 & -0.045 & -0.183 & 0.134 \\
    2     & 11.12\% & 3.10\% & 0.372 & 0.550 & 0.044 & -0.237 & 0.124 & 0.148 & 0.001 & 0.105 & -0.015 & -0.041 & 0.135 \\
    3     & 9.43\% & 0.54\% & 0.566 & 0.585 & 0.129 & -0.222 & 0.099 & 0.127 & 0.003 & 0.085 & -0.070 & 0.014 & 1.869 \\
    4     & 7.81\% & -1.644\% & 0.832 & 0.597 & 0.194 & -0.082 & 0.067 & 0.085 & 0.004 & 0.086 & -0.083 & 0.090 & 1.698 \\
    5 High $\beta^{\text{NET}(A)}$ & 4.85\% & -3.82\% & 1.312 & 0.635 & 0.205 & -0.040 & -0.130 & 0.175 & 0.006 & 0.075 & -0.256 & -0.057 & 3.608 \\
    5--1 & -2.71\% & -3.69\% & 1.198 & 0.070 & 0.218 & 0.232 & -0.149 & 0.095 & 0.003 & -0.003 & -0.211 & 0.125 & 3.474 \\
    $t$-stat & -2.30 & -4.97 & 9.76  & 0.92  & 3.47  & 2.81  & -1.28 & 1.05  & 0.53  & -0.49 & -1.38 & 0.98  & 0.53 \\
    \midrule
    \midrule
        \multicolumn{14}{l}{\textbf{B: Equal Weighted Portfolios: Sorts on NET($A$)}} \\
    Portfolio & \multicolumn{1}{l}{$R_{p}$} & \multicolumn{1}{l}{Ann. $\alpha^{\text{FF}}$} & \multicolumn{1}{l}{$\beta^{\text{NET}(A)}$} &  \multicolumn{1}{l}{$\beta^{\text{MKT}}$} & \multicolumn{1}{l}{$\beta^{\text{SMB}}$} & \multicolumn{1}{l}{$\beta^{\text{HML}}$} & \multicolumn{1}{l}{$\beta^{\text{RMW}}$} & \multicolumn{1}{l}{$\beta^{\text{CMA}}$} & \multicolumn{1}{l}{$\beta^{\text{VIX}}$} & \multicolumn{1}{l}{$\beta^{\text{MOM}}$} & \multicolumn{1}{l}{$\beta^{\text{CSKEW}}$} & \multicolumn{1}{l}{$\beta^{\text{CKURT}}$} & \multicolumn{1}{l}{$\beta^{\text{ILLIQ}}$} \\
    \midrule
    1 Low $\beta^{\text{NET}(A)}$ & 5.88\% & -1.46\% & 0.230 & 0.550 & -0.036 & -0.276 & 0.052 & 0.145 & 0.002 & 0.062 & -0.003 & -0.154 & -0.486 \\
    2     & 9.64\% & 1.75\% & 0.454 & 0.548 & 0.037 & -0.249 & 0.129 & 0.153 & 0.001 & 0.098 & -0.008 & -0.056 & -0.071 \\
    3     & 7.81\% & -0.76\% & 0.660 & 0.570 & 0.116 & -0.234 & 0.098 & 0.133 & 0.003 & 0.080 & -0.040 & 0.012 & 1.005 \\
    4     & 4.81\% & -4.41\% & 0.961 & 0.598 & 0.207 & -0.112 & 0.058 & 0.088 & 0.004 & 0.082 & -0.086 & 0.060 & 0.970 \\
    5 High $\beta^{\text{NET}(A)}$ & -4.40\% & -12.09\% & 1.640 & 0.647 & 0.256 & -0.090 & -0.171 & 0.215 & 0.007 & 0.030 & -0.264 & -0.105 & 3.562 \\
    5--1 & -10.28\% & -10.78\% & 1.410 & 0.097 & 0.292 & 0.186 & -0.223 & 0.070 & 0.004 & -0.032 & -0.261 & 0.050 & 4.048 \\
    $t$-stat & -7.24 & -11.88 & 10.11 & 1.10  & 4.17  & 2.04  & -1.56 & 0.63  & 0.58  & -0.81 & -1.76 & 0.22  & 0.54 \\
    \midrule
    \midrule
    \bottomrule
        \end{tabular}%
  \label{tab:Port_sorts_TOT}%
\end{table}%

\end{landscape}

% NEED TO SUMMARISE HERE AND DRAW ON NETWORK LITERATURE AND ALSO THE VOLATILITY LITERATURE WANT TO SYNTHESIZE THIS WITH INTRO ETC AND INTERTEMPORAL HEDGING STORY THAT WE ARE RELYING ON TO EXPLAIN THE RESULTS.
In summary, we provide evidence on the pricing of dynamic horizon specific network risk stemming from connections among asset return volatilities. Fama-MacBeth regressions and portfolio sorts confirm that stocks with a higher sensitivity to horizon specific network risk earn lower returns. Our results suggest that accounting for only short-term or long-term directional network risk result in similar economic magnitudes and statistical significance. We also demonstrate this result holds for aggregate dynamic network risk.  Consistent with economic theory and empirical evidence \citep{herskovic2016common,cremers2015aggregate,campbell2018intertemporal}, as well as our outline of an economy that generates risk pricing of directional volatility connections, stocks loading positively on horizon specific network risk earn lower returns as investors seek to hedge against changes in investment opportunities.

\section{Extensions}\label{extensions}
Here we outline and describe a variety of extensions to our main results. We first account for additional risk proxies that link with volatility and higher moments of stock returns. We then define an alternative dynamic horizon specific network risk and examine the robustness of our results for Fama-MacBeth regressions and portfolio sorts. Finally, we investigate whether we are able to implement an ex-ante strategy to construct hedge portfolios; thus investigating the ability to predict future network risk.

\subsection{Is network risk the same as volatility risk?}
We investigate whether the contemporaneous relations among dynamic horizon specific network risk and returns remains when controlling for additional risk proxies. In particular, we consider: the variance risk premium \citep{bollerslev2009expected}; tail risk \citep{kelly2014tail}; idiosyncratic volatility \citep{ang2006downside}; and idiosyncratic skewness \citep{boyer2010expected}.

We define the variance risk premium, VRP, as the difference between the VIX and S\&P500 realized variance; the latter we construct from our own realized variance measures where weights relate to each stock's daily market capitalisation. We proxy tail risk, SKIND, using daily percent change in the CBOE's skewness index. Idiosyncratic volatility, Idio. Vol, and idiosyncratic skewness, Idio. Skew, are relative to the \cite{fama2015five} five-factor model. Specifically, they are the respective sample standard deviation and skewness of the residuals from daily rolling regressions using a 1-year window. 

% Table generated by Excel2LaTeX from sheet 'Full_Sample'
\begin{table}[!hp]
  \centering
  \caption{\textbf{Fama-MacBeth regressions: Additional factors}
\small{Notes: This table reports Full Sample Fama-MacBeth regressions in Panel A, and Rolling Fama-MacBeth regressions in Panel B, for daily S\&P500 stock returns. The full sample regressions span from July 5, 2005 to August 31, 2018 and use Newey West Standard errors with 12-lags. Rolling regressions use a 3-year window and add 1 day at a time and use Newey West Standard errors with 24-lags.  $t$-ratios in square brackets below coefficient estimates are adjusted following \cite{shanken1992estimation}. NET($S$) (NET($S^\perp$)) is the (orthogonalized) short-term directional network risk factor and NET($L$) is the long-term directional network risk factor. NET($A$) is the aggregate directional network risk factor that sums over short-term and long-term frequency bands. VIX is the daily change in the VIX index; MOM is the momentum factor; CSKEW and CKURT are conditional skewness and conditional kurtosis factors respectively; VRP is the market variance risk premium; SKIND is the daily change in the CBOE's skewness index; Idio. Vol is idiosyncratic volatility; and Idio Skew is idiosyncratic skewness. We refrain from reporting coefficients associated to the \cite{fama2015five} five-factor model; however each model controls for these.} 
} 
    \begin{tabular}{lrrrrrrrr}
    \toprule
    \midrule
    \midrule
          & \multicolumn{4}{c}{\textbf{A: Full Sample}} & \multicolumn{4}{c}{\textbf{B: Rolling}} \\
\cmidrule(lr){2-5} \cmidrule(lr){6-9}
          &   \multicolumn{1}{c}{1} &  \multicolumn{1}{c}{2}   &   \multicolumn{1}{c}{3}   &   \multicolumn{1}{c}{4}   &   \multicolumn{1}{c}{5}   &   \multicolumn{1}{c}{6}   &   \multicolumn{1}{c}{7}   &  \multicolumn{1}{c}{8} \\
    \midrule
NET($S$)/NET($S^{\perp}$) & -0.063 &       & -0.016 &       & -0.036 &       & 0.007 &  \\
          & [-6.03] &       & [-2.10] &       & [-2.30] &       & [0.80] &  \\
NET($L$) &       & -0.062 & -0.054 &       &       & -0.041 & -0.046 &  \\
          &       & [-5.47] & [-4.33] &       &       & [-2.47] & [-2.62] &  \\
NET($A$) &       &       &       & -0.065 &       &       &       & -0.04 \\
          &       &       &       & [-5.75] &       &       &       & [-2.39] \\
VIX & -0.267 & -0.267 & -0.244 & -0.267 & 0.036 & 0.007 & 0.007 & 0.008 \\
          & [-0.63] & [-0.64] & [-0.59] & [-0.65] & [0.06] & [-0.01] & [-0.00] & [-0.01] \\
MOM & 0.124 & 0.128 & 0.128 & 0.124 & 0.092 & 0.085 & 0.085 & 0.088 \\
          & [3.30] & [3.38] & [3.39] & [3.26] & [2.15] & [2.03] & [2.08] & [2.09] \\
CSKEW & 0.034 & 0.042 & 0.029 & 0.038 & 0.039 & 0.038 & 0.039 & 0.038 \\
          & [1.25] & [1.54] & [1.10] & [1.42] & [0.53] & [0.49] & [0.51] & [0.51] \\
CKURT & 0.006 & 0.005 & 0.007 & 0.006 & 0.001 & 0.001 & 0.003 & 0.001 \\
          & [0.27] & [0.23] & [0.34] & [0.27] & [0.58] & [0.56] & [0.62] & [0.56] \\
VRP & 0.013 & 0.012 & 0.013 & 0.013 & 0.008 & 0.008 & 0.008 & 0.008 \\
          & [2.42] & [2.31] & [2.51] & [2.35] & [0.93] & [0.92] & [0.88] & [0.94] \\
SKIND & 0.462 & 0.444 & 0.463 & 0.451 & 0.146 & 0.146 & 0.14  & 0.151 \\
          & [2.01] & [1.94] & [2.01] & [1.96] & [1.03] & [1.04] & [1.01] & [1.07] \\
Idio. Vol & 0.185 & 0.183 & 0.184 & 0.183 & 0.031 & 0.031 & 0.031 & 0.031 \\
          & [5.20] & [5.17] & [5.17] & [5.17] & [1.32] & [1.31] & [1.31] & [1.30] \\
Idio. Skew & -0.061 & -0.065 & -0.064 & -0.064 & -0.016 & -0.015 & -0.015 & -0.015 \\
          & [-2.76] & [-2.86] & [-2.84] & [-2.81] & [-1.32] & [-1.27] & [-1.26] & [-1.28] \\
Intercept & -0.009 & -0.007 & -0.009 & -0.007 & 0.032 & 0.032 & 0.033 & 0.031 \\
          & [-0.52] & [-0.45] & [-0.52] & [-0.42] & [1.31] & [1.31] & [1.37] & [1.29] \\
    \midrule
    \midrule
    \bottomrule
    \end{tabular}%
  \label{tab:FMB_ADDIT}%
\end{table}%

Table \ref{tab:FMB_ADDIT} presents Fama-MacBeth regressions that account for these additional four factors; as well as the \cite{fama2015five} five-factors, the daily percent change in the VIX index, and the MOM, CSKEW and CKURT factors respectively. We remove ILLIQ due to the significant positive correlation with VRP. Panel A reports full sample estimates and Panel B shows results from rolling regressions. We refrain from reporting risk prices of the \cite{fama2015five} five-factors for brevity; however note they are similar to our baseline results.

% First considering full sample estimates in Panel A, we can see that horizon specific directional network risk is statistically significant across all models. Note also that the estimates of risk prices are qunatitatively similar to those in Tables \ref{tab:FMB_horz} and \ref{tab:FMB_T_FS_ROLL}. Note that VRP, SKIND, Idio. Vol, and Idio. Skew are all statistically significant with the $t$-statistics for idiosyncratic volatility risk prices all exceeding 5.  Turning to Panel B, a similar story emerges. The prices of horizon specific directional network risk from rolling regressions are similar to our baseline results from a statistical and economic perspective. The statistical significance of our additional factors disappears in rolling regressions. 

Overall, Table \ref{tab:FMB_ADDIT} suggests that the pricing of dynamic horizon specific network risk is remarkably robust. This is clear from the quantitatively similar risk price estimates for both full-sample and rolling specifications. Apart from the risk price estimate of orthogonalized short-term directional network risk in model 7, all \cite{shanken1992estimation} $t$-statistics remain significant; which is also consistent with our baseline analysis. The main takeaway point from Table \ref{tab:FMB_ADDIT} is that the reward for bearing horizon specific, and indeed aggregate, directional network risk is stable and always negative, as well as statistically and economically significant.

\subsection{Alternative proxies of dynamic horizon specific network risk}

We explore whether the contemporaneous relation between dynamic horizon specific network risk is robust to an alternative definition. This alternative specification takes horizon specific directional connections and size, but re-formulates our original definition. Specifcally, we first sort all S\&P500 constituents with respect to size as before. However, instead of taking an equal weighted average of \textsc{to} and \textsc{from} portfolios that are constructed from assets in the top and bottom 30\% percentiles of the daily distribution of net-directional connections, we now take the difference between \textsc{to} and \textsc{from} portfolios. Formally, the day $t$ network risk factor $\text{NET}'(d)_{t}$, at horizon $d=\{\text{Short},\:\text{Long},\: \text{Aggregate}\}$ is given by

\begin{equation}\label{eq:factor2}
\text{NET}'(d)_{t} = \left(\textsc{to}^{\text{small}}(d)_{t}-\textsc{from}^{\text{small}}(d)_{t}\right) - \left(\textsc{to}^{\text{big}}(d)_{t}-\textsc{from}^{\text{big}}(d)_{t}\right)
\end{equation}

We can see from Table \ref{tab:sort1} that in general \textsc{to} portfolios earn lower returns than \textsc{from} portfolios and that overall portfolios of smaller stocks earn lower returns than those larger stocks. As with our original definition, the dynamic horizon specific network risk factor summarizes all information regarding directional connections. Instead of imposing equal importance on shock transmission and reception after controlling for size, these dynamic network risk proxies place a higher weight on shock transmission after accounting for overall shock reception of the network.

Table \ref{tab:FMB_ALT} shows full sample and rolling regression results for Fama-MacBeth regressions using our alternative specifications of horizon specific network risk in Panels A and B respectively. For the sake of brevity, we report results using short-term and long-term directional network risk in the same regression, where short-term directional network risk is orthogonal to long-term (i.e. NET$'$($S^{\perp}$) and NET$'$($L$)), and aggregate directional network risk, NET$'$(T); results using short-term and long-term directional network risk in isolation to one another are available on request. As we can see, estimates for market risk prices from full sample and rolling regressions are similar to those in Tables \ref{tab:FMB_horz} and \ref{tab:FMB_T_FS_ROLL}. Note that the statistical significance drops slightly for rolling regressions and is significant at 10\% levels when controlling for only \cite{fama2015five} five-factors.

Moving on to portfolio sorts, we conduct the exact same exercise as in Section \ref{port_sorts} for our alternative factor specification. Table \ref{tab:Port_sorts_alt} shows contemporaneous characteristics for quintile portfolios, and a long-short hedge portfolio that sort on aggregate directional network risk, $\beta^{\text{NET}'(T)}$; again results for sorts on short-term and long-term directional network risk betas are available on request. Overall, the same message appears here relative to Tables \ref{tab:Port_sorts_VW_HOR}, \ref{tab:Port_sorts_EW_HOR}, and \ref{tab:Port_sorts_TOT}. Specifically, we observe a monotonic increase in portfolio sensitivities to aggregate directional network risk after controlling for our battery of other factors. Note also that the betas of the hedge portfolios are statistically significant. 

There are two main differences with our main results that may stem from the definition of our alternative network risk factors. First, the implied annual market price of risk from value-weighted portfolios is consistent with the average annual return of the risk factor itself; namely the annual market price of risk is -13.35\% (-4.34\%/0.325) and the average annual return of this alternative factor is -15.91\%. Second, we do not observe a monotonic link between average raw returns and risk adjusted returns. What we do still see are statistically significant and economically meaningful returns for the hedge portfolios. This, combined with the monotonic pattern for directional network risk betas suggests it is important to control for these risk factors; which is further justified by the Fama-MacBeth regressions in Table \ref{tab:FMB_ALT}\footnote{We also conduct our analysis for another alternative specification that ignores relative sizes of assets within the network. This simply takes the difference between \textsc{to} and \textsc{from} portfolios. These results conform to those we present here, albeit slightly weaker in terms of economic and statistical significance. These are available on request.}.

% Table generated by Excel2LaTeX from sheet 'Full_Sample'
\begin{table}[!hp]
  \centering
  \caption{\textbf{Fama-MacBeth regressions: Alternative directional network risk proxies}
\small{Notes: This table reports Full Sample Fama-MacBeth regressions in Panel A, and Rolling Fama-MacBeth regressions in Panel B, for daily S\&P500 stock returns. The full sample regressions span from July 5, 2005 to August 31, 2018 and use Newey West Standard errors with 12-lags. Rolling regressions use a 3-year window and add 1 day at a time and use Newey West Standard errors with 24-lags.  $t$-ratios in square brackets below coefficient estimates are adjusted following \cite{shanken1992estimation}. NET$'$($S^\perp$) is the orthogonalized short-term directional network risk factor and NET$'$($L$) is the long-term directional network risk factor. NET$'$($A$) is the aggregate directional network risk factor that sums over short-term and long-term frequency bands. VIX is the daily change in the VIX index; MOM is the momentum factor; CSKEW and CKURT are conditional skewness and conditional kurtosis factors respectively; ILLIQ is market illiquidity.} 
} 
    \begin{tabular}{lrrrrrrrr}
    \toprule
    \midrule
    \midrule
          & \multicolumn{4}{c}{\textbf{A: Full Sample}} & \multicolumn{4}{c}{\textbf{B: Rolling}} \\
\cmidrule(lr){2-5} \cmidrule(lr){6-9}
          &   \multicolumn{1}{c}{1} &  \multicolumn{1}{c}{2}   &   \multicolumn{1}{c}{3}   &   \multicolumn{1}{c}{4}   &   \multicolumn{1}{c}{5}   &   \multicolumn{1}{c}{6}   &   \multicolumn{1}{c}{7}   &  \multicolumn{1}{c}{8} \\
    \midrule
    NET$'$($S^{\perp}$) & -0.016 & -0.017 &       &       & 0.009 & 0.005 &       &  \\
          & [-1.20] & [-1.22] &       &       & [-0.31] & [0.02] &       &  \\
    NET$'$($L$) & -0.119 & -0.101 &       &       & -0.069 & -0.047 &       &  \\
          & [-5.37] & [-4.62] &       &       & [-1.88] & [-1.45] &       &  \\
    NET$'$($A$) &       &       & -0.124 & -0.104 &       &       & -0.078 & -0.048 \\
          &       &       & [-5.41] & [-4.79] &       &       & [-2.24] & [-1.52] \\
    MKT   & -0.038 & 0.05  & -0.040 & 0.05  & 0.002 & 0.007 & -0.007 & 0.003 \\
          & [-1.03] & [1.39] & [-1.05] & [1.37] & [0.03] & [0.13] & [-0.08] & [0.07] \\
    SMB   & -0.008 & 0.004 & -0.009 & 0.003 & -0.010 & 0.000 & -0.013 & -0.003 \\
          & [-0.48] & [0.27] & [-0.55] & [0.18] & [-0.47] & [-0.09] & [-0.55] & [-0.18] \\
    HML   & -0.009 & -0.009 & -0.010 & -0.01 & -0.027 & -0.024 & -0.026 & -0.024 \\
          & [-0.60] & [-0.64] & [-0.64] & [-0.71] & [-1.05] & [-0.98] & [-1.04] & [-1.00] \\
    RMW   & 0.035 & 0.015 & 0.035 & 0.015 & 0.032 & 0.016 & 0.033 & 0.014 \\
          & [2.30] & [1.10] & [2.27] & [1.08] & [1.19] & [0.68] & [1.24] & [0.60] \\
    CMA   & -0.031 & -0.016 & -0.030 & -0.015 & 0.002 & -0.005 & 0.006 & -0.002 \\
          & [-2.61] & [-1.61] & [-2.53] & [-1.56] & [0.18] & [-0.30] & [0.40] & [-0.14] \\
    VIX   &       & -0.037 &       & -0.044 &       & 0.004 &       & 0.05 \\
          &       & [-0.09] &       & [-0.11] &       & [-0.01] &       & [0.11] \\
    MOM   &       & 0.153 &       & 0.156 &       & 0.094 &       & 0.094 \\
          &       & [4.40] &       & [4.46] &       & [2.24] &       & [2.24] \\
    CSKEW  &       & 0.047 &       & 0.049 &       & 0.029 &       & 0.035 \\
          &       & [1.86] &       & [1.94] &       & [0.47] &       & [0.57] \\
    CKURT  &       & -0.005 &       & -0.004 &       & 0.012 &       & 0.009 \\
          &       & [-0.21] &       & [-0.17] &       & [0.94] &       & [0.83] \\
    ILLIQ &       & 0.007 &       & 0.007 &       & 0.007 &       & 0.006 \\
          &       & [2.77] &       & [2.74] &       & [1.15] &       & [1.12] \\
    Intercept & 0.053 & -0.002 & 0.053 & -0.003 & 0.023 & 0.028 & 0.024 & 0.029 \\
          & [3.12] & [-0.11] & [3.08] & [-0.17] & [0.87] & [1.21] & [0.91] & [1.24] \\
%\midrule
%    $\bar{R}^{2}$ & 0.31  & 0.58  & 0.31  & 0.58  & 0.33  & 0.48  & 0.32  & 0.47 \\
    \midrule
    \midrule
    \bottomrule
    \end{tabular}%
  \label{tab:FMB_ALT}%
\end{table}%

\begin{landscape}    
% Table generated by Excel2LaTeX from sheet 'Portfolio_Sorts_Total'
\begin{table}[!hp]
  \centering
  \caption{\textbf{Contemporaneous characteristics of value- and equal-weighted portfolios}\\
\small{Notes: We create value-weighted portfolios by sorting stocks into quintiles based on daily realized aggregate network risk betas; value-weighted portfolios are in Panel A and equal-weighted portfolios are in Panel B. Betas are from daily 3-year rolling regressions and the sample spans July 10, 2009 to August 31, 2018. Portfolios are rebalanced monthly. The portfolio characteristics are: average annual returns, $R_{p}$; annualized \cite{fama2015five} five-factor alphas, Ann. $\alpha^{\text{FF}}$; and average annual betas with respect to dynamic horizon specific network risk, the \cite{fama2015five} five-factors, the VIX, Momentum, conditional skewness, conditional kurtosis and market illiquidity. The latter are obtained from rolling regressions using a 1-year window that add one-month at a time.}  
  }
    \begin{tabular}{lrrrrrrrrrrrrr}
    \toprule
    \multicolumn{14}{l}{\textbf{A: Value Weighted Portfolios: Sorts on NET($A$)}} \\
    Portfolio & \multicolumn{1}{l}{$R_{p}$} & \multicolumn{1}{l}{Ann. $\alpha^{\text{FF}}$} & \multicolumn{1}{l}{$\beta^{\text{NET}'(A)}$} &  \multicolumn{1}{l}{$\beta^{\text{MKT}}$} & \multicolumn{1}{l}{$\beta^{\text{SMB}}$} & \multicolumn{1}{l}{$\beta^{\text{HML}}$} & \multicolumn{1}{l}{$\beta^{\text{RMW}}$} & \multicolumn{1}{l}{$\beta^{\text{CMA}}$} & \multicolumn{1}{l}{$\beta^{\text{VIX}}$} & \multicolumn{1}{l}{$\beta^{\text{MOM}}$} & \multicolumn{1}{l}{$\beta^{\text{CSKEW}}$} & \multicolumn{1}{l}{$\beta^{\text{CKURT}}$} & \multicolumn{1}{l}{$\beta^{\text{ILLIQ}}$} \\
    \midrule
    1 Low $\beta^{\text{NET}'(A)}$ & 8.94\%  & -0.07\% & -0.241 & 0.599 & 0.176 & -0.146 & 0.002 & 0.064 & 0.004 & -0.010 & -0.235 & -0.168 & 2.243 \\
    2     & 8.58\%  & -0.13\% & -0.137 & 0.594 & 0.123 & -0.181 & 0.073 & 0.190 & 0.004 & 0.005 & -0.132 & -0.033 & 2.606 \\
    3     & 9.45\%  & 0.97\%  & -0.090 & 0.587 & 0.111 & -0.170 & 0.110 & 0.216 & 0.003 & -0.003 & -0.086 & -0.038 & 0.363 \\
    4     & 9.75\%  & 1.75\%  & -0.059 & 0.582 & 0.118 & -0.222 & 0.064 & 0.235 & 0.003 & 0.021 & -0.010 & -0.078 & -1.396 \\
    5 High $\beta^{\text{NET}'(A)}$ & 4.61\%  & -2.88\% & 0.084 & 0.629 & 0.198 & -0.124 & -0.166 & 0.293 & 0.004 & -0.040 & -0.184 & -0.248 & -0.837 \\
    5--1 & -4.34\% & -2.81\% & 0.325 & 0.030 & 0.022 & 0.022 & -0.168 & 0.230 & 0.000 & -0.030 & 0.051 & -0.081 & -3.080 \\
    $t$-stat & -5.54 & -3.61 & 5.34  & 0.89  & 0.81  & 0.79  & -1.59 & 2.21  & 0.09  & -0.50 & 0.48  & -0.41 & -1.01 \\
    \midrule
    \midrule
    \multicolumn{14}{l}{\textbf{B: Equal Weighted Portfolios: Sorts on NET($A$)}} \\
    Portfolio & \multicolumn{1}{l}{$R_{p}$} & \multicolumn{1}{l}{Ann. $\alpha^{\text{FF}}$} & \multicolumn{1}{l}{$\beta^{\text{NET}'(A)}$} &  \multicolumn{1}{l}{$\beta^{\text{MKT}}$} & \multicolumn{1}{l}{$\beta^{\text{SMB}}$} & \multicolumn{1}{l}{$\beta^{\text{HML}}$} & \multicolumn{1}{l}{$\beta^{\text{RMW}}$} & \multicolumn{1}{l}{$\beta^{\text{CMA}}$} & \multicolumn{1}{l}{$\beta^{\text{VIX}}$} & \multicolumn{1}{l}{$\beta^{\text{MOM}}$} & \multicolumn{1}{l}{$\beta^{\text{CSKEW}}$} & \multicolumn{1}{l}{$\beta^{\text{CKURT}}$} & \multicolumn{1}{l}{$\beta^{\text{ILLIQ}}$} \\
    \midrule
    1 Low $\beta^{\text{NET}'(A)}$ & 6.25\%  & -3.13\% & -0.196 & 0.616 & 0.238 & -0.031 & -0.043 & 0.070 & 0.006 & -0.033 & -0.247 & -0.078 & 2.003 \\
    2     & 7.56\%  & -1.28\% & -0.101 & 0.604 & 0.159 & -0.167 & 0.075 & 0.230 & 0.004 & -0.026 & -0.139 & -0.051 & 1.366 \\
    3     & 7.03\%  & -1.53\% & -0.065 & 0.593 & 0.143 & -0.156 & 0.116 & 0.242 & 0.003 & -0.028 & -0.096 & -0.068 & 0.099 \\
    4     & 7.06\%  & -0.87\% & -0.032 & 0.604 & 0.167 & -0.193 & 0.071 & 0.255 & 0.004 & -0.027 & -0.070 & -0.132 & -1.172 \\
    5 High $\beta^{\text{NET}(A)}$ & -4.15\% & -10.42\% & 0.193 & 0.674 & 0.278 & -0.063 & -0.217 & 0.360 & 0.006 & -0.144 & -0.282 & -0.325 & -0.135 \\
    5--1 & -10.40\% & -7.52\% & 0.389 & 0.058 & 0.041 & -0.032 & -0.175 & 0.291 & 0.000 & -0.111 & -0.035 & -0.247 & -2.138 \\
    $t$-stat & -10.69 & -12.28 & 6.31  & 1.27  & 0.98  & -0.33 & -1.41 & 2.43  & 0.11  & -2.25 & -0.38 & -2.15 & -0.72 \\
    \midrule
    \midrule
    \bottomrule
    \end{tabular}%
  \label{tab:Port_sorts_alt}%
\end{table}%    
  \end{landscape}

\subsection{Predicting future network risk}

% Table generated by Excel2LaTeX from sheet 'Portfolio_sorts_OOS'
\begin{table}[!hp]
  \centering
  \caption{\textbf{Average annual portfolio returns sorted by past horizon specific and aggregate directional network risk betas}\\
  \small{Notes: This table presents average annual returns of portfolios sorted on past estimates of horizon specific and aggregate directional network risk betas. Specifically, on day $t$ we sort stocks according to their network risk betas on day $t-1$. We rebalance quintile portfolios every month. The table presents average annual returns of value-weighted and equal-weighted portfolios. We also present the annualized \cite{fama2015five} five-factor alpha from rolling regressions using a 1-year window and moving forward 1-month at a time, Ann. $\alpha^{\text{FF}}$ of the 5--1 portfolios. $t$-statistics associated with the annualized \cite{fama2015five} alpha use Newey West standard errors with 24 lags and are adjusted according to \cite{shanken1992estimation}.}
  }
    \begin{tabular}{lcccccc}
    \toprule
    \midrule
\midrule
    \multicolumn{7}{c}{\textbf{Average Annual Returns}} \\
    \midrule
          & \multicolumn{2}{c}{$\beta^{\text{NET}(S)}$} & \multicolumn{2}{c}{$\beta^{\text{NET}(L)}$} & \multicolumn{2}{c}{$\beta^{\text{NET}(A)}$} \\
                    \cmidrule(lr){2-3} \cmidrule(lr){4-5} \cmidrule(lr){6-7} 
          & \multicolumn{1}{c}{\textbf{VW}} & \multicolumn{1}{c}{\textbf{EW}} & \multicolumn{1}{l}{\textbf{VW}} & \multicolumn{1}{l}{\textbf{EW}} & \multicolumn{1}{l}{\textbf{VW}} & \multicolumn{1}{l}{\textbf{EW}} \\
    \midrule
    1 low $\beta$ & 7.71\%  & 6.11\%  & 7.53\%  & 6.02\%  & 7.52\%  & 5.88\% \\
    2     & 10.94\% & 9.26\%  & 10.57\% & 8.96\%  & 11.25\% & 9.74\% \\
    3     & 9.62\%  & 8.01\%  & 10.04\% & 8.27\% & 9.59\%  & 7.99\% \\
    4     & 7.58\%  & 4.93\%  & 8.35\%  & 5.61\%  & 7.90\%  & 4.93\% \\
    5 high $\beta$ & 5.44\%  & -3.97\% & 4.78\%  & -4.52\% & 5.04\%  & -4.20\% \\
    5--1   & -2.27\% & -10.08\% & -2.76\% & -10.54\% & -2.47\% & -10.08\% \\
    $t$-stat & -1.93 & -7.23 & -2.33 & -7.44 & -2.09 & -7.02 \\
\midrule
\midrule
     Ann. $\alpha^{\text{FF}}$ 5--1  & -3.41\% & -11.03\% & -4.00\% & -11.45\% & -3.66\% & -11.00\% \\
    $t$-stat & -2.11 & -7.62 & -3.23 & -8.32 & -3.05 & -7.88 \\
    \midrule
\midrule
    \bottomrule
    \end{tabular}%
  \label{tab:port_sort_OOS}%
\end{table}%

Our main focus for this paper is the contemporaneous relationship between dynamic horizon specific network risk and stock returns. Although we find strong evidence in favour of significantly priced directional network risk, contemporaneous analysis tells us nothing with regards the practical use of tracking these network linkages. Therefore, it is necessary to examine whether, in real-time, investors are able to construct hedge portfolios against such directional network risk. 

We now conduct portfolio sorts using information on past directional network risk sensitivities. Specifically, on day $t$ we sort stocks into quintile portfolios based on $\beta$ estimates on day $t-1$. We rebalance these portfolios monthly as in our baseline analysis throughout each year with the starting day of July 11, 2009 so we can use the beta estimates on July 10, 2009. Thus, we follow largely the same procedure as in our contemporaneous portfolio sorts. Table \ref{tab:port_sort_OOS} shows average annual returns for portfolios sorted on past $\beta$ estimates of short-term, long-term, and aggregate directional network risk sensitivities; as well as the risk-adjusted returns from the long-short hedge portfolios.

Across both value-weighted and equal-weighted portfolios, we find consistent results with our contemporaneous analysis. We can see there is a monotonically decreasing link in average returns from quintiles 2--5; with quintile 1 earning lower returns than quintile 2 in all specifications. Also note that the 5--1 hedge portfolios exhibit returns of a similar magnitude to our main results. A similar story emerges for the \cite{fama2015five} five-factor alphas. Note also that the 5--1 hedge portfolio raw and risk adjusted returns are in general statistically significant. 

\section{Conclusion}\label{conclusion}

In this paper we investigate the pricing of short-term and long-term dynamic network risk in the cross-section of stock returns. Stocks with a high sensitivity to dynamic network risk earn lower returns. We rationalize our finding with economy theory by outlining an economy that allows the stochastic discount factor to load on network connections among the idiosyncratic volatilities of stock returns through the precautionary savings channel.

Our results show that dynamic horizon specific network risk is priced in the cross-section of stock returns and is economically meaningful. In particular, a one-standard-deviation rise across stocks in short-term and long-term directional network risk factor loadings implies a fall in expected annual returns of 6.71\% and 7.66\% respectively. We also show that these patterns hold when considering aggregate dynamic network risk that accounts for all horizons. Our results are empirically robust to alternative specifications of dynamic network risk and controlling for a battery of factors over and above the five-factors of \cite{fama2015five}.

There are various important implications that emerge from our study. First, directional network connections in asset return volatility are priced in the cross-section that are economically significant. Second, these horizon specific connections, and indeed the pricing of this risk, varies substantially throughout time. Nevertheless, we show in real time that one can predict future dynamic horizon specific network risk and therefore investors can implement strategies to hedge against these exposures. Finally, our method of decomposing overall network connections among asset return volatilities into horizon specific components permits investors to examine any horizon of interest. This may be useful for investment decisions, diversification, and constructing hedge portfolios against this source of risk.

%{\footnotesize{
%\setlength{\bibsep}{3pt}
\bibliographystyle{chicago}
\bibliography{bibliography}
%}} 

\newpage
\appendix
\section*{Internet Appendix (Not for Publication)}

\section{A detailed outline of the asset pricing model} \label{app:economy}
\numberwithin{equation}{section} \makeatletter % "activate" the preparatory code, but for section-level headers only
\newcommand{\section@cntformat}{Appendix \thesection:\ }
\makeatother
This section outlines an economy that generates horizon specific connections in the volatility of asset returns and consumption growth. We provide this outline as an intuitive general framework that generates a mechanism for horizon specific network risk to price assets in equilibrium. Consider an extension of the two tree asset pricing model in \cite{cochrane2007two} that builds on the work of \cite{lucas1978asset}. Without loss of generality, our endowment economy has $h$ sources of aggregate risk that we interpret as horizon specific dividend streams from endowment cash flows. We adopt the assumptions in \cite{bansal2004risks} and \cite{backus2011disasters} and model returns from risky assets (including dividends) as claims on certain risk factors in the consumption process. To keep things tractable, we present below a model containing $\iota=S,L$ sources of aggregate risk where $S$ corresponds to short-term and $L$ corresponds to long-term risk. Note that accounting for more horizons in $\iota$ results in more tedious algebra. 

The representative investor has the following general utility over the stream of consumption
\begin{align}
U_t = \mathbb{E}_t \int_{0}^{\infty} e^{- \delta \tau} u(c_{t+\tau}) d \tau . 
\end{align}
Each endowment dividend stream follows a geometric Brownian motions with stochastic volatility; whose respective drift and diffusion parameters differ.
\begin{align}\label{eq:consumption}
\frac{dD_\iota}{D_\iota} &= \mu_\iota dt + \sqrt{v_{\iota,t}} d Z_\iota, \quad \iota =\{S,L\}\\
dv_{\iota,t}       &= \kappa_{v_{\iota}}\left(\bar{v}_{\iota} - v_{\iota,t} \right)dt + \sigma_{v_{\iota}}\sqrt{v_{\iota,t}}d Z_{v_{\iota}} + \sum^{N}_{j=1} K_{\iota,j}d\mathcal{N}_{\iota,j,t}
\end{align}

where $dZ_\iota$ are standard Brownian motions that are possibly correlated, $\mathbb{C}\text{orr}\Big(dZ_{S},dZ_{L}\Big)=\rho_{S,L}dt$. We interpret the endowment trees as long-term and short-term risk factors within the economy. The $L$ tree corresponds to the long-term component of consumption. This generates a persistent dividend stream and bears long-term risk in the economy. The $S$ tree generates a less persistent dividend stream bearing short-term risks in the economy; $\mu_L > \mu_{S} > 0$. The diffusion of each tree follows a mean-reverting square root process, similar to the process in \cite{heston1993closed}. $\bar{v}_{\iota}$ is long-run conditional variance and $\kappa_{v_{\iota}}$ captures the speed of mean reversion. We assume the speed of mean reversion is slower for the volatility process associated to the long-term past of consumption.  

To introduce horizon specific network connections, we add $N$ self and mutually exciting jumps $\mathcal{N}_{\iota,j,t}$, $\iota=\{S,L\}$ to the respective square root processes \citep{ait2015modeling}. These horizon specific self and mutually exciting jumps have constant jump sizes $K_{\iota,j}>0,\:K_{\iota,j} \neq K_{\iota,k}$. Their stochastic jump intensities follow Hawkes processes which by definition have the following $\mathcal{F}_{t}$ conditional mean jump rate per unit of time:
\begin{eqnarray*}
\mathbb{P}[N_{\iota,j,t+\Delta} - N_{\iota,j,t} = 0 | \mathcal{F}_{t}] &=& 1-\ell_{\iota,j,t}\Delta + o(\Delta)\\
\mathbb{P}[N_{\iota,j,t+\Delta} - N_{\iota,j,t} = 1 | \mathcal{F}_{t}] &=& \ell_{\iota,j,t}\Delta + o(\Delta)\\
\mathbb{P}[N_{\iota,j,t+\Delta} - N_{\iota,j,t} > 1 | \mathcal{F}_{t}] &=& o(\Delta)\\
\end{eqnarray*}
With dynamics
\begin{align*}
\ell_{\iota,j,t} = \ell_{\iota,j,\infty} + \sum^{N}_{k=1} \int^{t}_{-\infty} g_{\iota,j,k}(t-s)d\mathcal{N}_{\iota,k,s}, \quad k=1,\dots,M.
\end{align*} 
here $\ell_{\iota,j,\infty} \geq 0$ for all $j=1,\dots,N$, the real-valued functions $g_{\iota,j,k}(u)\geq 0$ for all $u\geq0$ and for all $j,k =1,\dots,N$. This ensures the intensity processes are non-negative with probability 1. Note here that $\mathbb{E}\left[d\mathcal{N}_{\iota,j,s}\right] = \ell_{\iota,j,s}ds$ and we have
\begin{align*}
\ell_{\iota,j} = \ell_{\iota,j,\infty} + \sum^{N}_{j=1}\ell_{\iota,j}\int^{t}_{-\infty}g_{\iota,j,k}(t-s)ds\\
           = \ell_{\iota,j,\infty} + \sum^{N}_{j=1}\left(\int^{\infty}_{0} g_{\iota,j,k}(u)du \right) \ell_{\iota,j}
\end{align*}
Under this general form, we see that each prior horizon specific jump raises the corresponding horizon specific intensities $\ell_{\iota,j}$\footnote{In vector form we have $\mathbb{L}_{\iota} = (I-\mathbf{G}_{\iota})^{-1}\mathbb{L}_{\iota,\infty}$ where $\mathbf{G}_{\iota}$ contains elements $\int^{\infty}_{0}g_{\iota,j,k}(u)du$ and the vectors $\mathbb{L}_{\iota},\:\mathbb{L}_{\iota,\infty}$ contain the corresponding $\ell_{\iota,j},\:\ell_{\iota,j,\infty}$ elements. This ensures stationarity of model.}. The distribution of the jump processes $\mathcal{N}_{j,s}$ are determined by that of the intensities. The compensated processes $\mathcal{N}_{\iota,j,t} -\int^{t}_{-\infty}\ell_{\iota,j,s}ds$ are local martingales. The way in which we impose shock propagation, and therefore the network structure is by imposing exponential decay on the $g_{\iota,j,k}(t-s)$ formally:
\begin{eqnarray*}
g_{\iota,j,k}(t-s) = b_{\iota,j,k}e^{-\alpha_{\iota,j}(t-s)}, \quad s<t,\: \iota=S,L,\: j,k=1,\dots,N
\end{eqnarray*}
with $\alpha_{\iota,j}>0,b_{\iota,j,k}>0$ for all $j,k=1,\dots,N$. Therefore using this functional form of $g_{\iota,j,k}(t-s)$ we have the following mean reverting dynamics
\begin{eqnarray}
d\ell_{\iota,j,t} = \alpha_{\iota,j}\left(\ell_{\iota,j,\infty}-\ell_{\iota,j,t} \right)dt + \sum^{N}_{k=1}b_{\iota,j,k}dN_{\iota,k,t}
\end{eqnarray}
This means that a horizon specific jumps in asset $k$ causes an increase in the horizon specific jump intensity of asset $j$ such that $\ell_{\iota,j}$ jumps by $b_{\iota,j,k}$ before decaying back towards the level $\ell_{\iota,j,\infty}$ at speed $\alpha_{\iota,j}$. If the increase in $\ell_{\iota,j}$ leads to a jump in asset $j$, and there is a non-zero $b_{\iota,n,j}$, the horizon specific shock passes on to asset $n$. In this manner the shocks can be propagated throughout the entire network, and also permits the initial shock to reach asset $k$ itself.

The expected value of the jump sizes are $\mathbb{E}[K_{\iota,j}]=\mu_{K_{\iota,j}}$ in general\footnote{If the $j$ jump sizes are constant for each horizon $\iota$, then the expected value is $\mathbb{E}[K_{\iota,j}]=\mu_{K_{\iota}}$. Note we could further simplify and assume that jump sizes are constant across horizons and assets then $\mathbb{E}[K_{\iota,j}]=\mu_{K}$ \citep{branger2018equilibrium}.}.

We have $N$ risky assets in the economy that whose dynamics are geometric brownian motions with stochastic volatility. Formally the $k$-th asset has the following dynamics:
\begin{align}
\frac{dp_{k,t}}{p_{k,t}} &= \mu_{p_{k}}dt + \sqrt{v_{p_{k,S}}}dW_{S} + \sqrt{v_{p_{k,L}}}dW_{L}\\
d v_{p_{k},\iota} &= \kappa_{p_{k},\iota}\left(\bar{v}_{p_{k},\iota} - v_{p_{k},\iota}\right)dt + \sigma_{p_{k},\iota}\sqrt{v_{p_{k},\iota}}dW_{\xi} + Q_{\iota}d\mathcal{N}_{\iota,k,t}, \quad \iota=\{S,L\},\: \xi=\{S_1,L_1\}
\end{align}
which is similar to \cite{christoffersen2009shape} where the variance of the stock return is the sum of the two stochastic volatility components. Note that $\mathbb{C}\text{orr}\Big(dW_{S},dW_{S_1}\Big)=\rho_{W_{S},W_{S_1}}dt$ and $\mathbb{C}\text{orr}\Big(dW_{L},dW_{L_1}\Big)=\rho_{W_{L},W_{L_1}}dt, \: \mathbb{C}\text{orr}\Big(dW_{S},dW_{L_1}\Big)=\mathbb{C}\text{orr}\Big(dW_{L},dW_{S_1}\Big)=0 $. We add discontinuities to each variance process that also enter horizon specific stochastic volatility processes for each consumption dividend stream. $Q_{\iota}$, $\iota=S,L$, are the jump sizes of compound Poisson processes $Q_{L}>Q_{S}>0$ and $\mathcal{N}_{S,k,t},\:\mathcal{N}_{L,k,t}$ are mutually independent Poisson processes. Note their intensity parameters are as in (\ref{eq:intensities_dynamics}).

The economy also contains a risk-less bond, B, that follows an ordinary differential equation:
\begin{align}
\frac{d\text{B}}{\text{B}} = r_{f} dt .
\end{align}
with $r_{f}$ being the instantaneous risk-free rate.

The share of dividend streams within the consumption process determine the state variable for the economy. Thus we have
\begin{align}
s_{t} = \frac{D_S}{D_S + D_L}
\end{align}
the relative size of the first dividend - i.e. short-term part of the consumption process to aggregate consumption. It is important to note here that since consumption volatility varies throughout time, the states, $s_{t},\: (1-s_{t})$ are also time-varying. The function generating consumption is $c_t=s_{t}D_{S}+(1-s_{t})D_{L}$. Applying It\^ o’s lemma, consumptions dynamics are
\begin{align}\label{eq:consumption_dynamics}
\frac{dc_t}{c_t} &= [s_{t}\mu_S + (1 - s_{t})\mu_L ] dt + s_{t}\sqrt{v_{S,t}} dZ_S + (1 - s_{t})\sqrt{v_{L,t}}dZ_L
\end{align}

\begin{prop}
  \label{prop:1}
  Consider a $\iota=\{S,L\}$ tree endowment economy where cash flows for consumption have jump-diffusion dynamics with $N$ self and mutually exciting jumps whose intensities depend on network connections among $N$ risky assets in the economy. Then the first two moments of consumption growth, $ \mathbb{E}_{t}\left[\frac{dc_{t}}{c_{t}} \right],\: \mathbb{E}_{t}\left[\Big(\frac{dc_{t}}{c_{t}}\Big)^{2} \right]$,  and the risk-free rate, $r_{f}$ are given by 
  \begin{eqnarray}
  \label{eq:c_mean_net}
\mathbb{E}_{t}\left[\frac{dc_{t}}{c_{t}} \right] &=& \Big[s_{t}\mu_{S} + (1-s_{t})\mu_{L}\Big]dt\\
\label{eq:c_var_net}
\mathbb{E}_{t}\left[\left(\frac{dc_{t}}{c_{t}}\right)^{2} \right] &=& \Big[s_{t}^{2}v_{S,t} + (1-s_{t})^{2}v_{L,t}+s_{t}(1-s_{t})\sqrt{v_{S,t}}\sqrt{v_{L,t}}\rho_{S,L}\Big]dt  
    \end{eqnarray}  
    \begin{equation}
    \label{eq:rf_net}
    \begin{split}
r_{f} &= \delta  + \gamma_{t}\Big[s_{t}\mu_{S} + (1-s_{t})\mu_{L}\Big] - \frac{1}{2}\eta_{t}\Big[s_{t}^{2}v_{S,t} + (1-s_{t})^{2}v_{L,t}+s_{t}(1-s_{t})\sqrt{v_{S,t}}\sqrt{v_{L,t}}\rho_{S,L}\Big]
\end{split}
    \end{equation} 
with $\delta>0$ being impatience, $\gamma_{t}\equiv -\frac{u''(c_{t})c_{t}}{u'(c_{t})} >0$ is the coefficient of risk aversion, and $\eta_{t}\equiv \frac{u'''(c_{t})(c_{t})^{2}}{u'(c_{t})} >0$ is precautionary saving.   
\end{prop}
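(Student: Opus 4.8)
The plan is to take the consumption dynamics in \eqref{eq:consumption_dynamics} as the primitive and compute everything as instantaneous conditional moments, exploiting that $s_t$, $v_{S,t}$ and $v_{L,t}$ are all $\mathcal{F}_t$-measurable and that the self- and mutually-exciting jumps enter only the variance laws $dv_{\iota,t}$, never $dc_t/c_t$ itself. Consequently the moment calculations collapse to the standard bivariate diffusion case: the jump terms shape the \emph{future} path of $v_{\iota,t}$ but contribute nothing at order $dt$ to the moments of consumption growth conditional on time $t$.

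For the first moment I would apply $\mathbb{E}_t[\cdot]$ to \eqref{eq:consumption_dynamics} directly. The two stochastic terms $s_t\sqrt{v_{S,t}}\,dZ_S$ and $(1-s_t)\sqrt{v_{L,t}}\,dZ_L$ are martingale increments with zero conditional mean, so only the drift survives and yields \eqref{eq:c_mean_net}. For the second moment I would square $dc_t/c_t$ and apply the It\^o multiplication table, namely $(dt)^2=0$, $dt\,dZ_\iota=0$, $(dZ_S)^2=(dZ_L)^2=dt$ and the cross term $dZ_S\,dZ_L=\rho_{S,L}\,dt$. Collecting the surviving terms produces the squared-diffusion loadings $s_t^2 v_{S,t}$ and $(1-s_t)^2 v_{L,t}$ together with the covariance contribution in $\rho_{S,L}$, i.e.\ \eqref{eq:c_var_net}; taking $\mathbb{E}_t$ changes nothing since every coefficient is known at $t$.

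For the risk-free rate I would use the representative-investor pricing kernel $\Lambda_t = e^{-\delta t}u'(c_t)$ together with the bond law $dB/B=r_f\,dt$. Because the bond is locally riskless, $\Lambda_t B_t$ is a local martingale and $d\Lambda_t\,dB_t=0$, so $r_f = -\tfrac{1}{dt}\,\mathbb{E}_t[d\Lambda_t/\Lambda_t]$. Applying It\^o's lemma to $u'(c_t)$ gives
\begin{equation*}
\frac{d\Lambda_t}{\Lambda_t} = -\delta\,dt + \frac{u''(c_t)c_t}{u'(c_t)}\frac{dc_t}{c_t} + \frac{1}{2}\frac{u'''(c_t)(c_t)^2}{u'(c_t)}\left(\frac{dc_t}{c_t}\right)^2 ,
\end{equation*}
and recognizing the two coefficients as $-\gamma_t$ and $\tfrac{1}{2}\eta_t$ reduces $d\Lambda_t/\Lambda_t$ to a linear combination of the two consumption moments already obtained. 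Substituting \eqref{eq:c_mean_net} and \eqref{eq:c_var_net} and flipping the overall sign then delivers \eqref{eq:rf_net}.

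The main obstacle is conceptual rather than computational: one must justify that the jump components of the variance processes never feed into the instantaneous moments of $dc_t/c_t$. The cleanest argument is that \eqref{eq:consumption_dynamics} carries no $d\mathcal{N}_{\iota,j,t}$ term — the jumps act one level up, inside $dv_{\iota,t}$ — so conditional on $\mathcal{F}_t$ the variances are fixed scalars and the It\^o rules for continuous semimartingales apply verbatim. A secondary point is that \eqref{eq:consumption_dynamics} is itself an It\^o-lemma consequence of $c_t=s_tD_S+(1-s_t)D_L$ with $s_t$ stochastic; I would take that representation as given rather than re-deriving the share dynamics, and merely verify that the drift and diffusion loadings are the $\mathcal{F}_t$-measurable weights $s_t$ and $1-s_t$ so that they pass through the conditional expectation unchanged.
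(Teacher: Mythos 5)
Your proposal is correct and follows essentially the same route as the paper's own proof: It\^{o} expansion of the pricing kernel, $\frac{d\Lambda_t}{\Lambda_t} = -\delta\,dt - \gamma_t\frac{dc_t}{c_t} + \tfrac{1}{2}\eta_t\left(\frac{dc_t}{c_t}\right)^2$, combined with the instantaneous moments of $dc_t/c_t$ computed from the consumption dynamics (where, as you note, the jumps enter only through the $\mathcal{F}_t$-measurable variances $v_{\iota,t}$), and the riskless-bond condition $r_f\,dt = -\mathbb{E}_t\left[\frac{d\Lambda_t}{\Lambda_t}\right]$. The one (shared) quirk: a literal application of your It\^{o} multiplication table yields the cross term $2s_t(1-s_t)\sqrt{v_{S,t}}\sqrt{v_{L,t}}\rho_{S,L}\,dt$, i.e.\ with a factor of $2$ that the stated formula (and the paper's own proof) silently drops.
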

\begin{proof}
  See Appendix \ref{app:proofs1}.
\end{proof}

Naturally, one would assume that jumps affecting the long-term consumption claim would be larger than those relating to the short-term consumption claim, $K_{L,j}>K_{S,j}$ with $\mathbb{E}[K_{L,j}]>\mathbb{E}[K_{S,j}]$. Adding to this, one might also assume that the horizon specific intensities would have different mean reversion speeds (i.e. $\alpha_{L,j}<\alpha_{S,j}$) implying that intensities exhibit a relatively slower return back to equilibrium in the long-run. This also follows the notion of jump clustering where the network is more receptive to shocks, thereby increasing intensities, and reinforcing the shock propagation mechanism \citep{ait2015modeling}. It is clear that a jump to either consumption claim reduces consumption growth. If we wish to examine how the self and mutually exciting jumps impact relative consumption shares we can apply It\^{o}'s lemma to $s_{t}$ which yields
\begin{equation}
\begin{split}
ds_{t} &= s_{t}(1-s_{t})\Big(\mu_{S} - \mu_{L} - s_{t}v_{S,t} + (1-s_{t})v_{L,t} + 2\Big(s_{t}-\frac{1}{2}\Big)\sqrt{v_{S,t}}\sqrt{v_{L,t}}\rho_{S,L}\Big)dt \\&+ s_{t}(1-s_{t})\Big[\sqrt{v_{S,t}}dZ_{S} - \sqrt{v_{L,t}}dZ_{L} \Big]
\end{split}
\end{equation}
The drift of state $s_{t}$ is dependent on: i) current states; ii) current conditional variance of each consumption process; and iii) the conditional covariance between the long-term and short-term parts of consumption. Importantly, horizon specific network connections enter the conditional variances and therefore influence consumption shares  indirectly. If there are no jumps, then we arrive back at the \cite{cochrane2007two} model with stochastic volatility. The drift of the long-term consumption claim exhibits what \cite{cochrane2007two} class as \textquotedblleft S-shaped mean reversion". In this case, if $s_{t}=(1-s_{t})=\frac{1}{2}$ dividend shares are most volatile. In our framework the states depend on current levels of variance, $v_{S,t},\:v_{L,t}$. If $s_{t}=\frac{1}{2}$, then the terms multiplied by $2\left(s_{t}-\frac{1}{2}\right)$ are zero. This means that covariance between long-term and short-term consumption claims disappear.  

\section{Proofs}\label{app:proofs1}
\subsection{Proofs: Asset pricing with horizon specific volatility connections}
All proofs for the asset pricing model stem from the first order conditions from the representative agent's utility maximisation problem. This yields an expression for the pricing kernel which we can then combine with the fundamental pricing equation for the $k$-th asset. After applying It\^{o}s lemma, we can derive an expression for the risk-free rate, along with the risk premium of $M$ risky assets in the economy.
\begin{eqnarray*}
p_{k,t}u'(c_{t}) &=& \mathbb{E}_t \int_{0}^{\infty} e^{- \delta \tau} u'(c_{t+\tau})\text{d}_{k,t+\tau} d \tau . \\ 
\Lambda_{t} &\equiv & e^{- \delta \tau} u'(c_{t+\tau}) .  \\
\mathbb{E}_{t}\left[d(p_{k,t}\Lambda_{t})\right] &=& 0 \\
\end{eqnarray*}

\begin{proof}[Proposition~\ref{prop:1}]
From the fundamental pricing equation we have:
\begin{eqnarray*}
r_{f}dt &=& - \mathbb{E}_{t}\left[ \frac{d\Lambda_{t}}{\Lambda_{t}} \right] \\
r_{f}dt &=& - \mathbb{E}_{t}\left[ -\delta dt + \frac{u''(c_{t})c_{t}}{u'(c_{t})}\frac{dc_{t}}{c_{t}} + \frac{1}{2}\frac{u'''(c_{t})c^{2}_{t}}{u'(c_{t})}\frac{(dc_{t})^{2}}{c^{2}_{t}} \right]\\
r_{f}dt &=& \delta dt - \frac{u''(c_{t})c_{t}}{u'(c_{t})}\mathbb{E}_{t}\left[\frac{dc_{t}}{c_{t}} \right] - \frac{1}{2}\frac{u'''(c_{t})c^{2}_{t}}{u'(c_{t})}\mathbb{E}_{t}\left[\frac{(dc_{t})^{2}}{c^{2}_{t}} \right]\\
r_{f}dt &=& \delta dt + \gamma_{t} \mathbb{E}_{t}\left[\frac{dc_{t}}{c_{t}} \right] - \frac{1}{2}\eta_{t} \mathbb{E}_{t}\left[\left(\frac{dc_{t}}{c_{t}}\right)^{2} \right]
\end{eqnarray*}
where $\delta>0$ is impatience, $\gamma_{t}\equiv -\frac{u''(c_{t})c_{t}}{u'(c_{t})} >0$ is the coefficient of risk aversion, and $\eta_{t}\equiv \frac{u'''(c_{t})(c_{t})^{2}}{u'(c_{t})} >0$ is precautionary saving. The moments of consumption are

\begin{eqnarray*}
\mathbb{E}_{t}\left[\frac{dc_{t}}{c_{t}} \right] &=& \left(s_{t}\mu_{S}+ (1-s_{t})\mu_{L}\right)dt
\end{eqnarray*}
\begin{eqnarray*}
\mathbb{E}_{t}\left[\left(\frac{dc_{t}}{c_{t}}\right)^{2} \right] &=& \Big[s_{t}^{2}v_{S,t} + (1-s_{t})^{2}v_{L,t}+s_{t}(1-s_{t})\sqrt{v_{S,t}}\sqrt{v_{L,t}}\rho_{S,L}\Big]dt	
\end{eqnarray*}
 Therefore the risk-free rate is given by
\begin{align*}
\begin{split}
r_{f}dt &= \delta dt + \gamma_{t}\left(s_{t}\mu_{S}+ (1-s_{t})\mu_{L}\right)dt - \frac{1}{2}\eta_{t}\Big[s_{t}^{2}v_{S,t} + (1-s_{t})^{2}v_{L,t}+s_{t}(1-s_{t})\sqrt{v_{S,t}}\sqrt{v_{L,t}}\rho_{S,L}\Big]dt	
\end{split}
\end{align*}
finally, dividing through by $dt$ delivers the risk-free rate. This completes the proof. 
\end{proof}

\begin{proof}[Proposition~\ref{prop:2}] From the fundamental pricing equation and \ref{prop:1}, we can immediately see
\begin{eqnarray*}
\mathbb{E}_{t}\left[ \frac{d\Lambda_{t}}{\Lambda_{t}} \right] &=& - r_{f}dt \\
&=& - \left\{\delta dt + \gamma_{t} \mathbb{E}_{t}\left[\frac{dc_{t}}{c_{t}} \right] - \frac{1}{2}\eta_{t} \mathbb{E}_{t}\left[\left(\frac{dc_{t}}{c_{t}}\right)^{2} \right]\right\}\\
&=& -\delta dt - \gamma_{t} \mathbb{E}_{t}\left[\frac{dc_{t}}{c_{t}} \right] + \frac{1}{2}\eta_{t} \mathbb{E}_{t}\left[\left(\frac{dc_{t}}{c_{t}}\right)^{2} \right]
\end{eqnarray*} 
inserting the expressions for $\mathbb{E}_{t}\left[\frac{dc_{t}}{c_{t}} \right]$ and $\mathbb{E}_{t}\left[\left(\frac{dc_{t}}{c_{t}}\right)^{2} \right]$ completes the proof.
\end{proof}

\subsection{Proofs: Measurement of dynamic horizon specific network risk for large dynamic networks}\label{app:proofs2}
\begin{proof}[Proposition~\ref{prop:3}]

Let us have the VMA($\infty$) representation of the locally stationary TVP VAR model  \citep{dahlhaus2009empirical,roueff2016prediction}
\begin{equation}
\bX_{t,T} = \sum_{h=-\infty}^{\infty} \bPsi_{t,T}(h)\bepsilon_{t-h}
\end{equation}
$\bPsi_{t,T}(h) \approx\bPsi(t/T,h)$ is a stochastic process satisfying $\sup_{\ell} ||\bPsi_t-\bPsi_{\ell}||^2 = O_p(h/t)$ for $1\le h \le t$ as $t\rightarrow \infty$, hence in a neighborhood of a fixed time point $u=t/T$ the process $\bX_{t,T}$ can be approximated by a stationary process $\widetilde{\bX}_t(u)$
\begin{equation}
\widetilde{\bX}_t(u) = \sum_{h=-\infty}^{\infty} \bPsi(u,h)\bepsilon_{t-h}
\end{equation}
with $\bepsilon$ being \textit{iid} process with $\mathbb{E}[\bepsilon_t]=0$, $\mathbb{E}[\bepsilon_s\bepsilon_t]=0$ for all $s\ne t$, and the local covariance matrix of the errors $\bSigma(u)$. Under suitable regularity conditions $|\bX_{t,T} - \widetilde{\bX}_t(u)| = O_p\big( |t/T-u|+1/T\big)$.

Since the errors are assumed to be serially uncorrelated, the total local covariance matrix of the forecast error conditional on the information at time $t-1$ is given by
\begin{equation}
	\bOmega(u,H) = \sum_{h=0}^{H} \bPsi(u,h) \bSigma(u) \bPsi^{\top}(u,h).
\end{equation}
Next, we consider the local covariance matrix of the forecast error conditional on knowledge of today's shock and future expected shocks to $k$-th variable. Starting from the conditional forecasting error,
\begin{equation}
\label{eq:LGFEVD1}
	\boldsymbol \xi^k(u,H) = \sum_{h=0}^{H} \bPsi(u,h) \Big[ \bepsilon_{t+H-h} - \mathbb{E}(\bepsilon_{t+H-h} | \bepsilon_{k,t+H-h})  \Big],
\end{equation} assuming normal distribution of $\bepsilon_t\sim N(0,\bSigma)$, we obtain\footnote{Note to notation: $[\boldsymbol A]_{j,k}$ denotes the $j$th row and $k$th column of matrix $\boldsymbol A$ denoted in bold. $[\boldsymbol A]_{j,\cdot}$ denotes the full $j$th row; this is similar for the columns. A $\sum A$, where $A$ is a matrix that denotes the sum of all elements of the matrix $A$.}  
\begin{equation}
\label{eq:LGFEVD2}
\mathbb{E}(\bepsilon_{t+H-h} | \bepsilon_{k,t+H-h}) = \sigma_{kk}^{-1} \Big[\bSigma(u) \Big]_{\cdot k} \bepsilon_{k,t+H-h}
\end{equation}
and substituting (\ref{eq:LGFEVD2}) to (\ref{eq:LGFEVD1}), we obtain
\begin{equation}
	\boldsymbol \xi^k(u,H) = \sum_{h=0}^{H} \bPsi(u,h) \Big[ \bepsilon_{t+H-h} - \sigma_{kk}^{-1} \Big[\bSigma(u) \Big]_{\cdot k} \bepsilon_{k,t+H-h}  \Big].
\end{equation}
Finally, the local forecast error covariance matrix is
\begin{equation}
	\bOmega^k(u,H) = \sum_{h=0}^{H} \bPsi(u,h) \bSigma(u) \bPsi^{\top}(u,h) - \sigma_{kk}^{-1} \sum_{h=0}^{H} \bPsi(u,h) \Big[\bSigma(u) \Big]_{\cdot k} \Big[\bSigma(u) \Big]_{\cdot k}^{\top} \bPsi^{\top}(u,h).
\end{equation} 
Then
\begin{equation}
	\Big[\boldsymbol \Delta(u,H)\Big]_{(j)k} = \Big[\bOmega(u,H) - \bOmega^k(u,H)\Big]_{j,j} = \sigma_{kk}^{-1} \sum_{h = 0}^{H} \Bigg( \Big[\bPsi(u,h) \bSigma(u)\Big]_{j,k} \Bigg)^2
\end{equation} is the unscaled local $H$-step ahead forecast error variance of the $j$-th component with respect to the innovation in the $k$-th component. Scaling the equation with $H$-step ahead forecast error variance with respect to the $j$th variable yields the desired time varying generalized forecast error variance decompositions (TVP GFEVD)
\begin{equation}
\label{eq:tvpgfevd}
	\Big[ \btheta(u,H) \Big]_{j,k} = \frac{\sigma_{kk}^{-1}\displaystyle\sum_{h = 0}^{H}\Bigg( \Big[\bPsi(u,h) \bSigma(u)\Big]_{j,k} \Bigg)^2}{\displaystyle \sum_{h=0}^{H} \Big[ \bPsi(u,h) \bSigma(u) \bPsi^{\top}(u,h) \Big]_{j,j}}
\end{equation}

Next, we derive the frequency representation of the quantity in (\ref{eq:tvpgfevd}) using the fact that unique time varying spectral density of $\bX_{t,T}$ at frequency $\omega$ which is locally the same as the spectral density of $\widetilde{\bX}_t(u)$ at $u=t/T$ can be defined as a Fourier transform of VMA($\infty$) filtered series over frequencies $\omega\in(-\pi,\pi)$ as
\begin{equation}
\boldsymbol S_{\bX}(u,\omega) = \sum_{h=-\infty}^{\infty}\mathbb{E}\Big[ \bX_{t+h}(u)\bX_{t}^{\top}(u)  \Big]e^{-i\omega h}=\Big\{\bPsi(u)e^{-i\omega}\Big\}\bSigma(u)\Big\{\bPsi(u)e^{+i\omega}\Big\}^{\top},
\end{equation}
where we consider a time varying frequency response function $\bPsi(u)e^{-i\omega} = \sum_h e^{-i\omega h} \bPsi(u,h)$ which can be obtained as a Fourier transform of the coefficients with $i=\sqrt{-1}$.

Letting $H\rightarrow \infty$, we have time varying generalized forecast error variance decompositions
\begin{equation}
\label{eq:tvpgfevdinfty}
	\Big[ \btheta(u,\infty) \Big]_{j,k} = \frac{\sigma_{kk}^{-1}\displaystyle\sum_{h = 0}^{\infty}\Bigg( \Big[\bPsi(u,h) \bSigma(u)\Big]_{j,k} \Bigg)^2}{\displaystyle \sum_{h=0}^{\infty} \Big[ \bPsi(u,h) \bSigma(u) \bPsi^{\top}(u,h) \Big]_{j,j}} = \frac{\mathcal{A}}{\mathcal{B}}.
\end{equation}

Starting with frequecy domain counterpart of the nominator $\mathcal{A}$,  we will use the standard integral
	\begin{equation}
		\frac{1}{2\pi} \int_{-\pi}^{\pi} e^{i \omega (r - v)} d \omega =
			\begin{cases}
				1 & \text{ for } r = v\\
				0 & \text{ for } r \neq v.
			\end{cases}
	\end{equation}

	Using the fact that $\sum_{h=0}^{\infty} \phi(h) \psi(h) = \frac{1}{2\pi} \int_{-\pi}^{\pi} \sum_{v=0}^{\infty} \sum_{r=0}^{\infty} \phi(r) \psi(v) e^{i \omega (r - v)} d \omega$, we can rewrite (\ref{eq:tvpgfevdinfty}) as

	\begin{align}
		\begin{split}
			\sigma_{kk}^{-1}\displaystyle\sum_{h = 0}^{\infty} & \Bigg( \Big[\bPsi(u,h) \bSigma(u)\Big]_{j,k} \Bigg)^2
			= \sigma_{kk}^{-1} \sum_{h=0}^{\infty} \left( \sum_{z=1}^{n} \Big[ \bPsi(u,h) \Big]_{j,z} \Big[ \bSigma(u) \Big]_{z,k} \right)^2 \\
			& = \sigma_{kk}^{-1} \frac{1}{2\pi} \int_{-\pi}^{\pi} \sum_{r = 0}^{\infty} \sum_{v = 0}^{\infty} \left( \sum_{x = 1}^{n} \Big[\bPsi(u,r) \Big]_{j,x} \Big[ \bSigma(u) \Big]_{x,k} \right)\left( \sum_{y = 1}^{n} \Big[ \bPsi(u,v) \Big]_{j,y} \Big[ \bSigma(u) \Big]_{y,k} \right) e^{i \omega (r-v)} d \omega\\
			& = \sigma_{kk}^{-1} \frac{1}{2\pi} \int_{-\pi}^{\pi} \sum_{r = 0}^{\infty} \sum_{v = 0}^{\infty} \left( \sum_{x = 1}^{n} \Big[ \bPsi(u,r) e^{i \omega r} \Big]_{j,x} \Big[ \bSigma(u) \Big]_{x,k} \right)\left( \sum_{y = 1}^{n} \Big[ \bPsi(u,v) e^{-i \omega v} \Big]_{j,y} \Big[ \bSigma(u) \Big]_{y,k} \right) d \omega\\
			& = \sigma_{kk}^{-1} \frac{1}{2\pi} \int_{-\pi}^{\pi} \left( \sum_{r = 0}^{\infty} \sum_{x = 1}^{n} \Big[ \bPsi(u,r) e^{i \omega r} \Big]_{j,x} \Big[ \bSigma(u) \Big]_{x,k} \right)\left( \sum_{v = 0}^{\infty} \sum_{y = 1}^{n} \Big[ \bPsi(u,v) e^{-i \omega v} \Big]_{j,y} \Big[ \bSigma(u) \Big]_{y,k} \right) d \omega \\
			& = \sigma_{kk}^{-1} \frac{1}{2\pi} \int_{-\pi}^{\pi} \left( \sum_{x = 1}^{n} \Big[ \bPsi(u) e^{i \omega} \Big]_{j,x} \Big[ \bSigma(u) \Big]_{x,k} \right)\left( \sum_{y = 1}^{n} \Big[ \bPsi(u) e^{-i \omega} \Big]_{j,y} \Big[ \bSigma(u) \Big]_{y,k} \right) d \omega \\
			& = \sigma_{kk}^{-1} \frac{1}{2\pi} \int_{-\pi}^{\pi} \left( \Big[ \bPsi(u) e^{ - i \omega} \bSigma(u) \Big]_{j,k} \right) \left( \Big( \bPsi(u) e^{ i \omega} \bSigma(u) \Big]_{j,k} \right) d \omega \\
			& = \sigma_{kk}^{-1} \frac{1}{2\pi} \int_{-\pi}^{\pi} \left| \bigg[ \bPsi(u) e^{-i \omega} \bSigma(u) \bigg]_{j,k} \right|^2 d \omega
		\end{split}
	\end{align}
	Hence we have established that 
	\begin{equation}
	\mathcal{A}=\sigma_{kk}^{-1}\displaystyle\sum_{h = 0}^{\infty} \Bigg( \Big[\bPsi(u,h) \bSigma(u)\Big]_{j,k} \Bigg)^2=\sigma_{kk}^{-1} \frac{1}{2\pi} \int_{-\pi}^{\pi} \left| \bigg[ \bPsi(u) e^{-i \omega} \bSigma(u) \bigg]_{j,k} \right|^2 d \omega
	\end{equation}
	from (\ref{eq:tvpgfevdinfty}), we use the local spectral representation of the VMA coefficients in the second step. The rest is a manipulation with the last step invoking the definition of modulus squared of a complex number to be defined as $|z|^2 = z z^*$. Note that we can use this simplification without loss of generality, because the $VMA(\infty)$ representation that is described by the coefficients $\boldsymbol \Psi(u,h)$ has a spectrum that is always symmetric.

	Next, we concentrate on $\mathcal{B}$ from (\ref{eq:tvpgfevdinfty}). Using similar steps and the positive semidefiniteness of the matrix $\bSigma(u)$ that ascertains that there exists $\bP(u)$ such that $\bSigma(u) = \bP(u) \bP^{\top}(u).$

	\begin{align}
		\begin{split}
			\sum_{h=0}^{\infty} \Big[ \bPsi(u,h) \bSigma(u) \bPsi^{\top}(u,h) \Big]
			& = \sum_{h = 0}^{\infty} \Big[ \bPsi(u,h) \bP(u) \Big] \Big[ \bPsi(u,h) \bP(u) \Big]^{\top} \\
			& = \frac{1}{2\pi} \int_{-\pi}^{\pi} \sum_{r = 0}^{\infty} \sum_{v = 0}^{\infty} \Big[ \bPsi(u,r) e^{i \omega r} \bP(u) \Big] \Big[ \bPsi(u,v) e^{-i \omega v} \bP(u) \Big]^{\top} d \omega \\
			& = \frac{1}{2\pi} \int_{-\pi}^{\pi} \sum_{r = 0}^{\infty} \Big[ \bPsi(u,r) e^{i \omega r} \bP(u) \Big] \sum_{v = 0}^{\infty} \Big[ \bPsi(u,v) e^{-i \omega v} \bP(u) \Big]^{\top} d \omega \\
			& = \frac{1}{2\pi} \int_{-\pi}^{\pi} \Big[ \bPsi(u) e^{i \omega} \bP(u) \Big] 
			\Big[ \bPsi(u) e^{-i \omega } \bP(u) \Big]^{\top} d \omega \\
			& = \frac{1}{2\pi} \displaystyle \int_{-\pi}^{\pi} \Bigg[ \Big\{\bPsi(u) e^{i \omega} \Big\}\bSigma(u) \Big\{ \bPsi(u) e^{-i \omega } \Big\}^{\top}  \Bigg] d \omega
		\end{split}
	\end{align}

That establishes the fact that  
	\begin{equation}
	\mathcal{B}=\displaystyle \sum_{h=0}^{\infty} \Big[ \bPsi(u,h) \bSigma(u) \bPsi^{\top}(u,h) \Big]_{j,j}
	=\frac{1}{2\pi} \displaystyle \int_{-\pi}^{\pi} \Bigg[ \Big\{\bPsi(u) e^{i \omega} \Big\}\bSigma(u) \Big\{ \bPsi(u) e^{-i \omega } \Big\}^{\top}  \Bigg]_{j,j} d \omega
	\end{equation}
	from (\ref{eq:tvpgfevdinfty}), and we have shown that 

\begin{equation}
	\Big[ \btheta(u,\infty) \Big]_{j,k} = \frac{\sigma_{kk}^{-1}\displaystyle\sum_{h = 0}^{\infty}\Bigg( \Big[\bPsi(u,h) \bSigma(u)\Big]_{j,k} \Bigg)^2}{\displaystyle \sum_{h=0}^{\infty} \Big[ \bPsi(u,h) \bSigma(u) \bPsi^{\top}(u,h) \Big]_{j,j}} = \frac{\sigma_{kk}^{-1} \displaystyle \int_{-\pi}^{\pi} \left| \bigg[ \bPsi(u) e^{-i \omega} \bSigma(u) \bigg]_{j,k} \right|^2 d \omega}{ \displaystyle \int_{-\pi}^{\pi} \Bigg[ \Big\{\bPsi(u) e^{i \omega} \Big\}\bSigma(u) \Big\{ \bPsi(u) e^{-i \omega } \Big\}^{\top}  \Bigg]_{j,j} d \omega}
\end{equation}

Finally, focusing on a frequency band $d=(a,b): a,b \in (- \pi, \pi), a < b$, we have 

\begin{equation}
	\Big[ \btheta(u,d) \Big]_{j,k} = \frac{\sigma_{kk}^{-1} \displaystyle \int_{a}^{b} \left| \bigg[ \bPsi(u) e^{-i \omega} \bSigma(u) \bigg]_{j,k} \right|^2 d \omega}{ \displaystyle \int_{-\pi}^{\pi} \Bigg[ \Big\{\bPsi(u) e^{i \omega} \Big\}\bSigma(u) \Big\{ \bPsi(u) e^{-i \omega } \Big\}^{\top}  \Bigg]_{j,j} d \omega}
\end{equation}

	This completes the proof.
\end{proof}

\section{Estimation of the time-varying parameter VAR model} \label{app:estimate}

To estimate our high dimensional systems, we follow the Quasi-Bayesian Local-Liklihood (QBLL) approach of \cite{petrova2019quasi}. let $\bX_{t}$ be an $N \times 1$ vector generated by a stable time-varying parameter (TVP) heteroskedastic VAR model with $p$ lags:

\begin{equation}\label{eq:VAR}
\bX_{t,T}=\bPhi_{1}(t/T)\bX_{t-1,T}+\ldots+\bPhi_{p}(t/T)\bX_{t-p,T} + \bepsilon_{t,T},
\end{equation}
where $\bepsilon_{t,T}=\bSigma^{-1/2}(t/T)\bbeta_{t,T}$ with $\bbeta_{t,T}\sim NID(0,\boldsymbol{I}_M)$ and $\bPhi(t/T)=(\bPhi_{1}(t/T),\ldots,\bPhi_{p}(t/T))^{\top}$ are the time varying autoregressive coefficients.
%\begin{eqnarray}\label{eq:VAR}
%y_{t} = \mathbf{B}_{0,t} + \sum^{L}_{p=1}\mathbf{B}_{p,t}y_{t-p} + \varepsilon_{t},\:\:\varepsilon_{t}=\mathbf{\Xi}^{-\frac{1}{2}}_{t}\kappa_{t},\:\: \kappa_{t}\backsim \text{NID}(0,\textbf{I}_{N})
%\end{eqnarray}
%where $\mathbf{B}_{0,t},\mathbf{B}_{p,t}$ contain the time-varying intercepts and autoregressive matrices, respectively. 
Note that all roots of the polynomial, $\chi(z)=\text{det}\left(\textbf{I}_{N}-\sum^{L}_{p=1}z^{p}\mathbf{B}_{p,t}\right)$, lie outside the unit circle, and $\bSigma^{-1}_{t}$ is a positive definite time-varying covariance matrix. Stacking the time-varying intercepts and autoregressive matrices in the vector $\phi_{t,T}$ with $\bar{\bX}'_{t} = \left(\text{\textbf{I}}_{N} \otimes x_{t}\right),\: x_{t}=\left(1,x'_{t-1},\dots,x'_{t-p}\right)$ and $\otimes$ denotes the Kronecker product, the model can be written as:
\begin{eqnarray}
\bX_{t,T} = \bar{\bX}'_{t,T}\phi_{t,T} + \bSigma^{-\frac{1}{2}}_{t/T}\bbeta_{t,T}
\end{eqnarray}
We obtain the time-varying parameters of the model by employing Quasi-Bayesian Local Likelihood (QBLL) methods. Estimation of (\ref{eq:VAR}) requires re-weighting the likelihood function. Essentially, the weighting function gives higher proportions to observations surrounding the time period whose parameter values are of interest. The local likelihood function at time period $k$ is given by:
\begin{eqnarray}
\text{L}_{k}\left(\bX|\theta_{k},\bSigma_{k},\bar{\bX} \right)&\propto& |\bSigma_{k}|^{\text{trace}(\mathbf{D}_{k})/2}\exp\left\{-\frac{1}{2}(\bX-\bar{\bX}'\phi_{k})'\left(\bSigma_{k}\otimes\mathbf{D}_{k}\right)(\bX-\bar{\bX}'\phi_{k})\right\}
\end{eqnarray}
The $\mathbf{D}_{k}$ is a diagonal matrix whose elements hold the weights:
\begin{eqnarray}
\mathbf{D}_{k} &=& \text{diag}(\varrho_{k1},\dots,\varrho_{kT})\\
\varrho_{kt} &=& \phi_{T,k}w_{kt}/\sum^{T}_{t=1}w_{kt}\\
\label{eq:weight}
w_{kt} &=& (1/\sqrt{2\pi})\exp((-1/2)((k-t)/H)^{2}),\quad\text{for}\: k,t\in\{1,\dots,T\}\\
\zeta_{Tk} &=& \left(\left(\sum^{T}_{t=1}w_{kt}\right)^{2}\right)^{-1}
\end{eqnarray}
where $\varrho_{kt}$ is a normalised kernel function. $w_{kt}$ uses a Normal kernel weighting function. $\zeta_{Tk}$ gives the rate of convergence and behaves like the bandwidth parameter $H$ in (\ref{eq:weight}), and it is the kernel function that provides greater weight to observations surrounding the parameter estimates at time $k$ relative to more distant observations.
 
Using a Normal-Wishart prior distribution for $\phi_{k}|\:\bSigma_{k}$ for $k\in\{1,\dots,T\}$:
\begin{eqnarray}
\phi_{k}|\bSigma_{k} \backsim \mathcal{N}\left(\phi_{0k},(\bSigma_{k} \otimes \mathbf{\Xi}_{0k})^{-1}\right)\\
\bSigma_{k} \backsim \mathcal{W}\left(\alpha_{0k},\mathbf{\Gamma}_{0k}\right)
\end{eqnarray}
where $\phi_{0k}$ is a vector of prior means, $\mathbf{\Xi}_{0k}$ is a positive definite matrix, $\alpha_{0k}$ is a scale parameter of the Wishart distribution ($\mathcal{W}$), and $\mathbf{\Gamma}_{0k}$ is a positive definite matrix. 

The prior and weighted likelihood function implies a Normal-Wishart quasi posterior distribution for $\phi_{k}|\:\bSigma_{k}$ for $k=\{1,\dots,T\}$. Formally let $\mathbf{A} = (\bar{x}'_{1},\dots,\bar{x}'_{T})'$ and $\mathbf{Y}=(x_{1},\dots,x_{T})'$ then:
\begin{eqnarray}
\phi_{k}|\bSigma_{k},\mathbf{A},\mathbf{Y} &\backsim & \mathcal{N}\left(\tilde{\theta}_{k},\left(\bSigma_{k}\otimes\mathbf{\tilde{\Xi}}_{k}\right)^{-1}\right)\\
\bSigma_{k} &\backsim & \mathcal{W}\left(\tilde{\alpha}_{k},\mathbf{\tilde{\Gamma}}^{-1}_{k} \right)
\end{eqnarray}
with quasi posterior parameters
\begin{eqnarray}
\tilde{\phi}_{k} &=& \left(\mathbf{I}_{N}\otimes \mathbf{\tilde{\Xi}}^{-1}_{k}\right)\left[\left(\mathbf{I}_{N}\otimes \mathbf{A}'\mathbf{D}_{k}\mathbf{A}\right)\hat{\phi}_{k}+ \left(\mathbf{I}_{N}\otimes \mathbf{\Xi}_{0k}\right)\phi_{0k} \right]\\
\mathbf{\tilde{\Xi}}_{k} &=& \mathbf{\tilde{\Xi}}_{0k} + \mathbf{A}'\mathbf{D}_{k}\mathbf{A}\\
\tilde{\alpha}_{k} &=& \alpha_{0k}+\sum^{T}_{t=1}\varrho_{kt}\\
\mathbf{\tilde{\Gamma}}_{k} &=& \mathbf{\Gamma}_{0k} + \mathbf{Y}'\mathbf{D}_{k}\mathbf{Y} + \mathbf{\Phi}_{0k}\mathbf{\Gamma}_{0k}\mathbf{\Phi}'_{0k} - \mathbf{\tilde{\Phi}}_{k}\mathbf{\tilde{\Gamma}}_{k}\mathbf{\tilde{\Phi}}'_{k}
\end{eqnarray}
where $\hat{\phi}_{k} = \left(\mathbf{I}_{N}\otimes \mathbf{A}'\mathbf{D}_{k}\mathbf{A}\right)^{-1}\left(\mathbf{I}_{N} \otimes \mathbf{A}'\mathbf{D}_{k}\right)y$ is the local likelihood estimator for $\phi_{k}$. The matrices $\mathbf{\Phi}_{0k},\:\mathbf{\tilde{\Phi}}_{k}$ are conformable matrices from the vector of prior means, $\phi_{0k}$, and a draw from the quasi posterior distribution, $\tilde{\phi}_{k}$, respectively.

The motivation for employing these methods are threefold. First, we are able to estimate large systems that conventional Bayesian estimation methods do not permit. This is typically because the state-space representation of an $N$-dimensional TVP VAR ($p$) requires an additional $N(3/2 + N(p+1/2))$ state equations for every additional variable. Conventional Markov Chain Monte Carlo (MCMC) methods fail to estimate larger models, which in general confine one to (usually) fewer than 6 variables in the system. Second, the standard approach is fully parametric and requires a law of motion. This can distort inference if the true law of motion is misspecified. Third, the methods used here permit direct estimation of the VAR's time-varying covariance matrix, which has an inverse-Wishart density and is symmetric positive definite at every point in time. 

In estimating the model, we use $p$=2 and a Minnesota Normal-Wishart prior with a shrinkage value $\varphi=0.05$ and centre the coefficient on the first lag of each variable to 0.1 in each respective equation. The prior for the Wishart parameters are set following \cite{kadiyala1997numerical}. For each point in time, we run 500 simulations of the model to generate the (quasi) posterior distribution of parameter estimates. Note we experiment with various lag lengths, $p=\{2,3,4,5\}$; shrinkage values, $\varphi=\{0.01, 0.25, 0.5\}$; and values to centre the coefficient on the first lag of each variable, $\{0, 0.05, 0.2, 0.5\}$. Network measures from these experiments are qualitatively similar. Notably, adding lags to the VAR  and increasing the persistence in the prior value of the first lagged dependent variable in each equation increases computation time.

%--------------

\end{document}